\documentclass[11pt]{article}

\usepackage{preamble}

\usepackage{fix-cm}

\usepackage{cite}
\usepackage{orcidlink}
\usepackage{bbm}
\usepackage{makecell}
\usepackage{mathtools}

\title{Automated Reencoding Meets Graph Theory}

\author{Benjamin Przybocki\,\orcidlink{0009-0007-5489-1733}}
\author{Bernardo Subercaseaux\,\orcidlink{0000-0003-2295-1299}}
\author{Marijn J. H. Heule\,\orcidlink{0000-0002-5587-8801}}
\affil{Carnegie Mellon University \\  {\upshape\ttfamily [bprzyboc,bsuberca,mheule]@andrew.cmu.edu}}
\date{}

\newcommand{\edgegreen}{green!65!black}

\begin{document}

\maketitle 
\begin{abstract}
\emph{Bounded Variable Addition} (BVA) is a central preprocessing method in modern state-of-the-art SAT solvers. We provide a graph-theoretic characterization of which 2-CNF encodings can be constructed by an idealized BVA algorithm. Based on this insight, we prove new results about the behavior and limitations of BVA and its interaction with other preprocessing techniques. We show that idealized BVA, plus some minor additional preprocessing (e.g., equivalent literal substitution), can reencode any 2-CNF formula with $n$ variables into an equivalent 2-CNF formula with $(\tfrac{\lg(3)}{4}+o(1))\,\tfrac{n^2}{\lg n}$ clauses. Furthermore, we show that without the additional preprocessing the constant factor worsens from $\tfrac{\lg(3)}{4} \approx 0.396$ to $1$, and that no reencoding method can achieve a constant below $0.25$. On the other hand, for the at-most-one constraint on $n$ variables, we prove that idealized BVA cannot reencode this constraint using fewer than $3n-6$ clauses, a bound that we prove is achieved by actual implementations. In particular, this shows that the product encoding for at-most-one, which uses $2n+o(n)$ clauses, cannot be constructed by BVA regardless of the heuristics used. Finally, our graph-theoretic characterization of BVA allows us to leverage recent work in algorithmic graph theory to develop a drastically more efficient implementation of BVA that achieves a comparable clause reduction on random monotone 2-CNF formulas.
\end{abstract}

\section{Introduction}

Providing a satisfactory explanation for the remarkable performance of SAT solvers in practice is a major theoretical challenge. Part of the difficulty of explaining this phenomenon, further detailed in the appropriately titled article ``On the Unreasonable Effectiveness of SAT Solvers'' by Vardi and Ganesh~\cite{Ganesh_Vardi_2021}, is that modern SAT solvers combine a variety of heuristics, preprocessing and inprocessing techniques, data structures, and so on. 
We posit that, to make progress on the overall question, we ought to better understand the power and limitations of each of these ingredients. In this paper, we home in on one of these components, namely, \emph{Bounded Variable Addition} (BVA)~\cite{mantheyAutomatedReencodingBoolean2012}, a preprocessing technique that introduces auxiliary variables to reencode an input formula into an equisatisfiable formula with fewer clauses. 
In 2023, a successor of BVA named \emph{Structured BVA} (SBVA)~\cite{haberlandtEffectiveAuxiliaryVariables2023}, led the solver~\textsf{SBVA-CaDiCaL}, whose primary innovation was to incorporate SBVA as preprocessing to the state-of-the-art solver~\textsf{CaDiCaL}, to win the 1st place in the 2023 SAT Competition~\cite{satComp23}. At the 2024 SAT Competition, many other solvers followed suit by including some variant of BVA as a preprocessing step, including the competition winner \textsf{kissat-sc2024}~\cite{SATCompetition2024}. BVA is now incorporated into the stable versions of \textsf{CaDiCaL} and \textsf{Kissat} under the name \textsf{factor}~\cite{githubFactor}.

However, despite its empirically demonstrated utility, little is known about BVA and its variants from a theoretical point of view.
For example, Biere, one of the authors of the original BVA paper, commented~\cite{githubExactlyClauses}:
\begin{quote}
\emph{In~\cite{mantheyAutomatedReencodingBoolean2012}, we already realized that BVA reencodes [the direct encoding of the at-most-one constraint] into something better. Results were only empirical though and
we did not analyze whether in principle BVA can turn the direct encoding into for instance
the product or sequential counter encoding.}
\end{quote}
\noindent
In this paper, we develop a graph-theoretic framework for analyzing BVA, by means of which we can answer these and similar questions about the power and limitations of BVA. Furthermore, our framework allows us to draw on recent work in algorithmic graph theory to develop a drastically more efficient implementation of BVA.

\vspace{-0.8em}
\subsection*{Our Results}

We focus on the behavior of BVA on 2-CNF subformulas. Many formulas of interest, including, e.g., most SAT competition formulas, contain a large 2-CNF subformula that can be reencoded more compactly, leading to faster runtimes.
In~\Cref{sec:discussion}, we justify in more detail why the 2-CNF fragment not only captures the most important aspects of the theoretical analysis but also explains most of BVA's empirical success.

Our primary result is a characterization of which reencodings of a 2-CNF formula are achievable by BVA. Our characterization is in terms of \emph{rectifier networks}~\cite{lupanov, chistikov_et_al:LIPIcs.STACS.2017.23, ivanBicliqueCoveringsRectifier2014}, a graph-theoretic notion that has been studied in the context of circuit complexity. Specifically, we show in~\Cref{thm:characterization} that there is a correspondence between reencodings achievable by BVA and what we call \emph{strict polarized rectifier networks}. Our characterization holds for an \emph{idealized} version of BVA that abstracts away implementation-specific heuristics.

With this characterization in hand, we first study the class of 2-CNF formulas in general. By generalizing an old result of Nechiporuk~\cite{nechiporuk} about rectifier networks, we prove the following:\footnote{All logarithms in this paper are base 2 unless otherwise specified.}
\begin{theorem}[Informal] \label{thm-idealized-bva-no-simp}
    Idealized BVA can reencode every 2-CNF formula with $n$ variables to one with at most $(1+o(1)) \frac{n^2}{\lg n}$ clauses.
\end{theorem}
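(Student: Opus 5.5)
The plan is to go through the characterization of \Cref{thm:characterization}: it is enough to build, for every 2-CNF formula $F$ on $n$ variables, a strict polarized rectifier network with $(1+o(1))\frac{n^2}{\lg n}$ edges. Each edge of the network corresponds to one clause of the reencoded formula, and by the characterization such a network can be produced by idealized BVA. We view $F$ as a bipartite "conflict" structure. Each clause $(\ell_1 \vee \ell_2)$ is an edge of a graph on the $2n$ literals. Reencoding amounts to covering these edges by paths from sources to sinks through auxiliary nodes. The auxiliary nodes correspond to BVA's fresh variables, and each auxiliary node introduces a biclique: every literal in its in-set is joined to every literal in its out-set. Strictness and polarization impose two requirements. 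First, the network must represent exactly the clauses of $F$, with no extra paths. Second, the literals of each variable must sit on the correct "side." I would therefore split the literals so the construction becomes a bipartite problem, and then apply a Nechiporuk-style biclique decomposition.

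The core step is to adapt Nechiporuk's argument that every bipartite graph (equivalently, Boolean matrix) with $n$ rows and $n$ columns has a rectifier network of size $(1+o(1))\frac{n^2}{\lg n}$. The construction goes as follows.
\begin{enumerate}
\item Partition the columns into blocks of width $k \approx \lg n - 2\lg\lg n$.
\item Inside each block, group the rows by their restriction pattern, which is a subset of the $k$ columns. This gives at most $2^k = o(n/\lg n)$ distinct patterns.
\item For each pattern, introduce one auxiliary node. Connect all rows with that pattern into the node, and connect the node out to the pattern's columns.
\end{enumerate}
Within a block, the row-to-node edges number at most $n$, and the node-to-column edges number at most $k2^k$. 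Summing over the $n/k$ blocks gives $n^2/k + (n/k)\,k2^k = (1+o(1))\frac{n^2}{\lg n}$. To handle a general 2-CNF with mixed polarities, I would note that the clause graph on literals can be oriented so that each clause appears once. Then either the $n\times n$ variable-indexed matrix with polarity-labeled entries is handled directly, or we exploit the symmetry of the matrix and cover only its upper triangle. In the latter case rows and columns range over the same variables, so the total remains $n^2/\lg n$ up to lower-order terms, and the count does not double.

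The main obstacle is making the Nechiporuk network \emph{strict polarized}. The intermediate nodes must be consistent with the polarity constraints of the characterization. A literal $x$ and its negation $\bar x$ both may appear in clauses, and an auxiliary variable cannot be reused in a way that creates spurious implications, that is, paths corresponding to clauses not in $F$. There are two further concerns. Diagonal-type entries such as a clause $(x\vee y)$ together with $(y\vee x)$ are symmetric duplicates. Tautological interactions must also not arise through auxiliary nodes. My first attempt would be a single two-layer (depth-2) construction, since the depth-2 biclique structure has only source$\to$aux$\to$sink paths and therefore creates no unintended paths. I would place each variable's literals consistently on the row side or column side according to a fixed ordering of the variables, covering only pairs $i<j$. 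That gives four polarity-matrices, $(x_i, x_j)$, $(x_i,\bar x_j)$, and so on. Treating them separately would cost a factor of 4. To avoid this, I would index rows by variables $i$ and let each row node actually be a literal. The patterns are then taken over column literals in the block, namely $2k$ of them, with block width chosen so that $2^{2k}$, or more precisely $3^k$ counting only consistent patterns, stays $o(n/\lg n)$. Getting the leading constant to be exactly $1$ rather than something larger is where I expect the delicate bookkeeping, and I would check it against the polarized conditions of \Cref{thm:characterization} last.
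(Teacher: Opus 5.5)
Your proposal has a genuine gap at its core: the construction you describe (column blocks of width $k$, one auxiliary node per row pattern) is the classical depth-2 (Lupanov) scheme, and no depth-2 network can reach the constant $1$ here. In a general 2-CNF, a row literal $\ell$ may occur in both $(\ell\lor x_j)$ and $(\ell\lor\overline{x_j})$. So a block of $k$ variables admits $4^k$ patterns, not $3^k$. Keeping the pattern-to-column edges negligible then forces $k\approx\frac12\lg n$. With two row literals per variable over the upper triangle, you pay about $n^2/k\approx 2n^2/\lg n$ row-to-node edges. Running depth 2 separately on the four polarity classes gives $4\cdot\frac{n^2}{2\lg n}$, the same.

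This is not bookkeeping that can be tuned away. If no clause joins two auxiliary variables, every output clause contains a base literal, so it is one of at most $2n^2+4nm$ candidates. Hence there are at most $2^{(1+o(1))m\lg n}$ possible outputs with $m=\Theta(n^2/\lg n)$ clauses. There are $4^{n^2-n}$ 2-CNFs on $n$ variables, and BVA is injective (eliminating the auxiliaries recovers the input). So some formula forces $m\ge(2-o(1))\frac{n^2}{\lg n}$ for every depth-2 reencoding.

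The first thing you are missing is Nechiporuk's extra layer of auxiliary-to-auxiliary edges, which the paper uses in \Cref{thm:idealized-bva-monotone}.
\begin{itemize}
\item Partition the variables into large blocks of size $q\approx n/\lg^2 n$. For each pair $\{u,v\}$ inside a block, add an auxiliary $z_{\{u,v\}}$ with edges from $u$ and $v$; this costs only $O(nq)=o(n^2/\lg n)$ edges.
\item Rows of the same block sharing a pattern $S$ are then paired, and each pair sends a single edge $z_{\{u,v\}}\to y_S$.
\item This halves the dominant term to $\frac12\sum_S|A(S)|\approx\frac{n^2}{4\lg n}$ for one single-polarity upper-triangular class. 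Each row still reaches each of its pattern nodes along exactly one walk, so strictness survives.
\end{itemize}

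Second, you cannot apply \Cref{thm:characterization} to an arbitrary 2-CNF, because it requires a simple formula. Non-Horn clauses, two clauses on the same pair of variables, and implication cycles have no diagram. The theorem is also about BVA without the simplification of \Cref{prop:simple-to-general}, so you cannot first reduce to the simple case. The paper handles this as follows.
\begin{itemize}
\item It splits $\varphi$, by the polarities of each clause on $x_i,x_j$ with $i<j$, into four clause-disjoint subformulas. Each is simple after possibly negating all its variables.
\item It reencodes each with $(\frac14+o(1))\frac{n^2}{\lg n}$ clauses.
\item It concatenates the four BVA derivations. This is legitimate since they touch disjoint clauses and use distinct fresh variables.
\end{itemize}
Your worry that treating the four classes separately costs a factor of $4$ is unfounded. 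Together they carry four times the information of one class, so $4\cdot\frac14=1$ is exactly the target.
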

\noindent
Note that, without reencoding, 2-CNF formulas can have $\Theta(n^2)$ clauses. We also show that the bound from \Cref{thm-idealized-bva-no-simp} is sharp, in the sense that there is some 2-CNF formula that idealized BVA reencodes with at least $(1-o(1)) \frac{n^2}{\lg n}$ clauses. On the other hand, we describe a simplification routine that one can run before BVA (``pre-preprocessing''), which consists of equivalent literal substitution and a weaker form of failed literal elimination~\cite{heuleEfficientCNFSimplification2011}, and this simplification step improves the worst-case performance of idealized BVA:
\begin{restatable}[Informal]{theorem}{thm2cnf} \label{thm-2cnf}
    Idealized BVA with simplification can reencode every 2-CNF formula with $n$ variables to one with at most $(\frac{\lg(3)}{4}+o(1)) \frac{n^2}{\lg n}$ clauses.
\end{restatable}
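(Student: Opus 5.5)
The plan is to go through the characterization of \Cref{thm:characterization} and reduce the statement to a covering problem on a graph built from the simplified formula. I view a 2-CNF formula $F$ on $n$ variables as a set of edges among the $2n$ literals: each clause $(\ell_1 \vee \ell_2)$ is an undirected edge $\{\ell_1,\ell_2\}$. A strict polarized rectifier network realizing $F$ then corresponds to a cover of these edges by bicliques (one auxiliary variable per biclique, joining its two sides). The cost is the sum of the side sizes, and the bicliques must respect the polarity constraints.

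\textbf{Simplification step.} Equivalent literal substitution removes every pair $x,\bar x$ lying in a common strongly connected component of the implication graph. The weak failed-literal step removes every literal $\ell$ whose unit propagation under $F$ immediately yields a conflict through binary clauses, i.e.\ the case where both $(\ell\vee y)$ and $(\ell\vee \bar y)$ are clauses. After this step the remaining formula has the following property: for every variable $y$, the neighborhoods of $y$ and of $\bar y$ do not share a literal. Hence, for each pair $\{\ell, y\}$ of variables-with-sign, at most three of the four combinations $(\ell\vee y),(\ell\vee\bar y),(\bar\ell\vee y),(\bar\ell\vee\bar y)$ can be clauses. Simple forbidden configurations rule out the fourth. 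Consequently, for each unordered variable pair $\{x,y\}$, the set of clauses on it is one of at most $3^{?}$ patterns; the accounting I aim for is that each pair carries \emph{information} $\lg 3$ rather than $\lg 5$ or $2$. The key count is this: after simplification, for each variable $x$ and each other variable $y$, the ``row'' entry is in $\{\text{none}, y, \bar y\}$ (from the side of a fixed literal of $x$), which gives the $\lg 3$.

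\textbf{Nechiporuk-style construction.} Following the generalization of Nechiporuk mentioned before \Cref{thm-idealized-bva-no-simp}, I partition the variables into blocks of size $k \approx \lg n - O(\lg\lg n)$ (with $k$ chosen so that $3^k \ll n/\lg n$). For each literal $\ell$ and each block $B$, I record the pattern of $\ell$'s clauses into $B$, which is a vector in $\{\text{none},+,-\}^k$. For each block and each pattern $p$, I introduce an auxiliary variable $z_{B,p}$. Every literal $\ell$ with pattern $p$ on $B$ gets a clause to $z_{B,p}$ (roughly $2n\cdot n/k$ such clauses, halved by symmetry since each edge need only be covered from one endpoint). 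Each $z_{B,p}$ then connects to the literals of $B$ that the pattern dictates, contributing at most $(n/k)\cdot 3^k\cdot k$ clauses, which is $o(n^2/\lg n)$. To reach the constant $\frac{\lg 3}{4}$ rather than $\frac12$, I use two refinements. First, I orient the pairs so that each variable pair is handled once, giving the factor $\frac12$. Second, I take block size $k\approx \lg n/\lg 3$ for the ternary alphabet and group several literals per auxiliary, using a two-level (Lupanov-style) decomposition, so that the first-level cost becomes $\frac{n^2}{2k}\cdot\frac{1}{2}$. I will tune the parameters and then verify that the resulting network is strict and polarized, so that \Cref{thm:characterization} applies.

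\textbf{Main obstacle.} I expect the hard part to be twofold. The first is making the polarity and strictness conditions of \Cref{thm:characterization} compatible with sharing an auxiliary between a literal $\ell$ and its patterns. In particular, I must ensure that no path creates a spurious clause such as $(\ell\vee\bar\ell)$ or one linking $y$ and $\bar y$ through the same auxiliary; this is exactly where the simplification invariant, that the neighborhoods of $y$ and $\bar y$ are disjoint, must be used. The second is getting the exact constant $\frac{\lg 3}{4}$ through the two-level counting, which I would first try via the standard Lupanov strip decomposition with ternary patterns.
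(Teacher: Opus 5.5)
There is a genuine gap, and it sits exactly where the $\lg 3$ is supposed to come from.

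\textbf{Why the ternary alphabet is not justified.} Your weak failed-literal invariant gives a ternary pattern per \emph{literal}: a fixed literal $\ell$ meets a variable $y$ in at most one of $(\ell\lor y)$ and $(\ell\lor\overline{y})$. It does not give a single ternary row per \emph{variable}. The clause on $\{x,y\}$ may contain $x$ for some $y$ and $\overline{x}$ for others, so the row of any fixed literal of $x$ misses part of $x$'s clauses. You also cannot merge the two rows into one auxiliary, since an auxiliary implied by both $x$ and $\overline{x}$ would be forced true.

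Your accounting removes three factors of $\frac12$ from $2n\cdot\frac{n}{k}$. Two of them, ``covered from one endpoint'' and ``each variable pair handled once'', are the same saving unless each variable has a single row. With one row per literal and an arbitrary order, the bound your argument supports is only $\bigl(\frac{\lg 3}{2}+o(1)\bigr)\frac{n^2}{\lg n}$.

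There is also an orientation problem. In a Nechiporuk-type scheme a vertex only sends into \emph{later} parts, so the orientation of every pair must agree with one linear order of the parts. An implication $y\to x$ with $x$ in an earlier part cannot be routed through $x$'s row at all. Incidentally, after equivalent literal substitution and the weak failed-literal step, at most \emph{one} of the four clauses on a variable pair survives, not three.

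\textbf{The missing idea: acyclicity plus Horn renaming.}
After equivalent literal substitution, every strongly connected component of the implication graph is a single literal, so the graph is acyclic. Rename variables according to a satisfying assignment, so that every clause is $(\overline{x}\lor\overline{y})$ or $(\overline{x}\lor y)$. You need this step anyway, because \Cref{thm:characterization} applies only to simple formulas. Then topologically sort the variables $v_1,\dots,v_n$ along the implications $v_i\to v_j$, and orient each negative clause from the earlier to the later variable.

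Now each pair $i<j$ is in one of exactly three states: no clause, $(\overline{v_i}\lor\overline{v_j})$, or $(\overline{v_i}\lor v_j)$. The clause, if present, always contains $\overline{v_i}$. This is the real source of $\lg 3$, and it gives one ternary row per variable pointing only forward.

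It also dissolves your anticipated obstacle. Every auxiliary-mediated walk goes from a part to a strictly later part, so no walk can return and strictness is automatic. The auxiliary $y_S$ for $S\colon P_i\to\{-,0,+\}$ simply carries polarity $S(v)$ on its edge to $v$.

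Finally, the extra halving you attribute to a Lupanov-style decomposition should be made concrete as Nechiporuk's pairing trick. Take parts of size $r\approx\log_3 n-3\log_3\log_3 n$. Pre-build an auxiliary $z_{\{u,v\}}$ for every pair inside blocks of size about $n/\log_3^2 n$, at total cost $o(n^2/\lg n)$. Route the members of each class $A(S)$ that lie in a common block through these, two at a time. This gives about $\frac12\binom{n/r}{2}r\approx\frac{n^2}{4\log_3 n}=\frac{\lg 3}{4}\cdot\frac{n^2}{\lg n}$ edges. The $O(n)$ clauses restoring equivalences and forced units are added afterwards.
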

\noindent
This bound is also sharp. Furthermore, we show that for every reencoding method whose output is 2-CNF, there is some 2-CNF formula that it reencodes with at least $(\frac{1}{4}-o(1)) \frac{n^2}{\lg n}$ clauses.
Thus, with respect to worst-case analysis, idealized BVA is optimal up to a constant factor.

An important subset of 2-CNF formulas is the monotone (or antitone) 2-CNF formulas. For example, a direct encoding for finding an independent set in a graph $G$ would include, for every edge $\{u, v\} \in E(G)$, a clause
\(
(\overline{x_u} \lor \overline{x_v}).
\)
We can actually improve the bound from \Cref{thm-idealized-bva-no-simp} for monotone 2-CNF formulas:
\begin{restatable}[Informal]{theorem}{thm2cnfmonotone} \label{thm-2cnf-monotone}
    Idealized BVA can reencode every monotone 2-CNF formula with $n$ variables to one with at most $(\frac{1}{4}+o(1)) \frac{n^2}{\lg n}$ clauses.
\end{restatable}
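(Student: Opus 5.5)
The plan is to reduce the statement to a bit-counting-optimal rectifier construction. We realize only the upper triangle of a symmetric $n\times n$ matrix, which carries $\binom n2\approx n^2/2$ bits instead of $n^2$, and then apply the same Nechiporuk-style construction that underlies \Cref{thm-idealized-bva-no-simp}.

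\textbf{Setup.} Let $F$ be a monotone 2-CNF formula on $x_1,\dots,x_n$. View it as a simple graph $G$ on $[n]$ with an edge $\{a,b\}$ for each clause $(x_a\lor x_b)$. A unit clause $(x_a)$ can be kept as is, contributing only $O(n)$ clauses. By the characterization of \Cref{thm:characterization}, it suffices to build a strict polarized rectifier network of the required size realizing exactly these clauses.

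\textbf{Gadget.} An auxiliary variable $y$ with clauses $(x_a\lor y)$ for $a\in A$ and $(\overline y\lor x_b)$ for $b\in B$ resolves away to exactly the clauses $\{(x_a\lor x_b): a\in A,\ b\in B\}$. This is a single step of idealized BVA and costs $|A|+|B|$ clauses. Because each clause $(x_a\lor x_b)$ encodes both implications $\overline{x_a}\to x_b$ and $\overline{x_b}\to x_a$, each unordered pair needs to be produced only once. So we fix an ordering of $[n]$ and realize only the pairs $a<b$, i.e.\ the strictly upper-triangular part $M$ of the adjacency matrix. The positive-literal side plays the role of the rows and the negated auxiliaries point to the columns.

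\textbf{Construction.} We follow Nechiporuk/Lupanov and generalize it from full rectangular matrices to the staircase region $\{(a,b):a<b\}$.
\begin{itemize}
\item Partition the columns into consecutive blocks $C_1,\dots,C_{n/k}$ of width $k\approx \lg n - O(\lg\lg n)$.
\item Cut each block vertically into horizontal strips of $s$ rows, with $s$ chosen so that the number of rows per distinct pattern is controlled, as in Nechiporuk's refinement. For each strip $R$ and block $C$, the rows of $R$ restricted to $C$ are subsets of $C$.
\item For each distinct nonempty subset $S\subseteq C$ that occurs, create one auxiliary variable $y_{R,C,S}$. Add the clauses $(x_a\lor y)$ for the rows $a\in R$ with that pattern, and $(\overline y\lor x_b)$ for $b\in S$.
\end{itemize}
The first kind of clause costs at most one per (row, block) pair lying in the staircase, for a total of about $\sum_a (n-a)/k\approx n^2/(2k)$. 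The second kind costs $\sum_{R,C} k\cdot\min(2^k,|R|)$. With Nechiporuk's balancing of $s$ against $2^k$, this second term is $o(n^2/\lg n)$ plus a term that, together with the first, yields $(\tfrac12+o(1))\cdot\frac{\text{(number of cells)}}{\lg n}$. Since the staircase has $n^2/2$ cells, the total is $(\tfrac14+o(1))\,n^2/\lg n$.

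Strictness is automatic: every path goes row literal $\to$ auxiliary $\to$ column literal, each auxiliary is used by a single (strip, block) pair, and distinct auxiliaries produce disjoint sets of pairs. This matches the counting heuristic. Theorem~\ref{thm-idealized-bva-no-simp} corresponds to $\approx 2n^2$ bits (a symmetric $2n\times 2n$ implication matrix) giving $n^2/\lg n$. A monotone formula has $n^2/2$ bits, i.e.\ a quarter of that.

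\textbf{Main obstacle.} The hard step is getting the sharp Nechiporuk constant rather than the naive Lupanov constant. The simple one-level block construction already gives $(\tfrac12+o(1))n^2/\lg n$ on the triangle, and the extra factor of $2$ requires the refined two-level grouping of rows and columns. It must also be checked that this grouping survives the triangular (non-rectangular) shape. First I would apply the rectangular Nechiporuk bound to a dyadic decomposition of the triangle into squares: a square of side $n/2^{i+1}$ at level $i$, with $2^i$ squares per level. The levels sum to $\sum_i 2^i\,(n/2^{i+1})^2/(2\lg n)\cdot(1+o(1)) = (\tfrac14+o(1))\,n^2/\lg n$, provided the $o(1)$ terms stay uniform over the $O(\lg\lg n)$ top levels. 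The remaining small squares, those with side below $n/\lg^2 n$, contribute only $o(n^2/\lg n)$ even if realized trivially.
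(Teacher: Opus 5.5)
The depth-two construction in your \textbf{Construction} paragraph cannot give the constant $\frac14$, so the sentence claiming $(\frac12+o(1))\cdot(\text{cells})/\lg n$ from it is false.

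\begin{itemize}
\item Each row $a$ pays one clause $(x_a\lor y_{R,C,S})$ for every column block to its right on which its pattern is nonempty. For a typical (e.g.\ random) formula this totals about $n^2/(2k)$.
\item Keeping the column-side term $\sum_{R,C}k\min(2^k,|R|)$ at $o(n^2/\lg n)$ forces $2^k=o(n/\lg n)$, so $k<\lg n$.
\end{itemize}

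Hence no choice of $s$ gets this scheme below $(\frac12-o(1))\,n^2/\lg n$. Your last paragraph in effect concedes this, so that sentence should be deleted rather than patched.

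Your fallback, however, is sound and is a genuinely different route from the paper's. You split the strict upper triangle into $2^i$ squares of side $n/2^{i+1}$, realize each square by its own rectangular Nechiporuk network on fresh auxiliary vertices, and write out the leftover triangles of side below $n/\lg^2 n$ directly. This gives $(\frac14+o(1))\,n^2/\lg n$:
\begin{itemize}
\item Every square used has side $m\ge n/\lg^2 n$, so $\lg m=(1-o(1))\lg n$ and the error terms are uniform across levels.
\item The union is an SPRN. A valid walk never leaves the auxiliaries of one square, the squares and leftover triangles partition the pairs $a<b$, and in each square the rows precede the columns.
\end{itemize}

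What you must add is that the black box is the \emph{strict} (unique-path) version of Nechiporuk's bound. \Cref{thm:characterization} only covers SPRNs, and the paper notes that minimal SPRNs can be asymptotically larger than minimal PRNs. The bound does hold in that form. Checking it exposes the actual source of the factor $2$, which your write-up never names:
\begin{itemize}
\item rows are grouped into blocks of size about $m/\lg^2 m$;
\item each pair $\{u,v\}$ in a block gets an auxiliary $z_{\{u,v\}}$;
\item two rows of a block with the same pattern $S$ on a column part share the single edge $(z_{\{u,v\}},y_S)$.
\end{itemize}
This is a depth-three network, and it halves the row-side cost. You should also negate all variables first, so the formula is antitone and hence simple, before invoking \Cref{thm:characterization}.

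The paper never decomposes the triangle. It runs this pairing construction directly on the staircase, with $A(S)$ restricted to earlier parts, and the constant falls out of the single count $\frac12\sum_i\sum_{j<i}|P_j|\approx n^2/(4r)$. Your route buys modularity: the rectangular theorem is used unchanged and the triangular shape becomes bookkeeping. The paper's route is self-contained, needs no uniformity argument across levels, and is the same construction as in \Cref{thm:nechiporuk-diagram}, with subsets in place of three-valued patterns.
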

\noindent
This bound is sharp, not merely for idealized BVA but for arbitrary reencoding methods whose output is 2-CNF.

Next, we study BVA's performance on a particular class of 2-CNF formulas, namely the direct encoding of the at-most-one constraint on $n$ variables:
\[
    \textsf{AtMostOne}(x_1,\dots,x_n) \coloneqq \bigwedge_{1 \le i < j \le n} (\overline{x_i} \lor \overline{x_j}).
\]
Manthey, Heule, and Biere~\cite{mantheyAutomatedReencodingBoolean2012} observed experimentally that their implementation of BVA reencodes $\textsf{AtMostOne}(x_1,\dots,x_n)$ into $3n-6$ clauses. We prove in \Cref{thm-amo-practice} that, given their heuristics, BVA always achieves this bound. On the other hand, via a combinatorial analysis of strict rectifier networks, we show a matching lower bound, regardless of the heuristics used:
\begin{restatable}[Informal]{theorem}{thmamo} \label{thm:amo}
    Idealized BVA cannot encode $\textsf{AtMostOne}(x_1,\dots,x_n)$ using fewer than $3n-6$ clauses.
\end{restatable}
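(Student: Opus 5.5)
The plan is to work entirely on the graph side, using the characterization of \Cref{thm:characterization}. A reencoding of $\textsf{AtMostOne}(x_1,\dots,x_n)$ produced by idealized BVA corresponds to a strict polarized rectifier network $N$ whose number of edges equals the number of clauses. The target statement is then purely combinatorial: every strict polarized rectifier network computing $\textsf{AtMostOne}$ on $n$ inputs has at least $3n-6$ edges. Here ``computing'' means that for every pair $i\neq j$ there is a path from the positive side of $x_i$ to the negative side of $x_j$, and no path joins the two sides of the same $x_i$, since such a path would force $x_i$ false. I will prove the bound by induction on $n$. The base case $n=3$ is immediate: three clauses are needed, and any auxiliary node would have to save clauses, which is impossible with only three pairs. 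For small $n$ the bound is below the trivial one and can be checked directly.

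\textbf{Step 1: degree properties of auxiliary nodes.} I first extract from strictness the local facts I need. Each auxiliary node $y$ arises from a BVA step replacing a biclique $L\times A$ of clauses with $|L|+|A|$ clauses. That step is only taken when it strictly reduces the clause count, i.e.\ $|L||A|>|L|+|A|$. Consequently, in the final network every auxiliary node has in-degree at least $2$ and out-degree at least $2$, and not both equal to $2$. This should persist through later steps that may absorb $y$ into other bicliques, and that persistence is exactly what ``strict'' should encode. I also need that the network is acyclic on auxiliaries and that every path between inputs is unique or at least nonredundant. Redundant paths would allow a cheaper network but must still be handled in the counting.

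\textbf{Step 2: the inductive deletion.} Given a network $N$ for $n$ inputs, I delete the input $x_n$ and all edges incident to it. Any auxiliary node whose in- or out-degree drops below the strictness threshold is then repaired. A node with in-degree $1$ or out-degree $1$ is contracted by merging its single edge through, which removes one more edge per such node. A node that becomes a sink or a source is deleted outright. The result $N'$ still certifies all pairs among $x_1,\dots,x_{n-1}$ and still has no forbidden self-paths, because contraction preserves reachability and only removes paths through $x_n$. It should also still be strict, so by induction it has at least $3(n-1)-6$ edges. It therefore suffices to show that the deletion and repair remove at least $3$ edges in total.

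\textbf{Step 3 (main obstacle): at least three edges are lost.} The positive side of $x_n$ needs an out-edge and the negative side needs an in-edge, which gives $2$ edges immediately. The hard part is the third. If $x_n$ has at least $3$ incident edges we are done. Otherwise $x_n$ has exactly one out-edge, to some node $u$, and one in-edge, from some node $v$. If $u$ or $v$ is an input, then $x_n$ reaches or is reached by all others only through that input, which contradicts the absence of a self-path for that input. Hence $u$ and $v$ are auxiliary nodes. In this case I would try to show that $u$ or $v$ has its degree driven to a threshold value, forcing a contraction that costs one more edge. If $u$ had in-degree at least $3$, its other in-neighbours reach everything $x_n$ reaches. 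I plan to exploit this together with the strict inequality $|L||A|>|L|+|A|$ to find a node whose degree becomes exactly $1$ on some side, or alternatively to choose $x_n$ cleverly, namely a deepest input in the acyclic auxiliary structure. If the case analysis fails for an arbitrary $x_n$, the fallback is an averaging argument: summing the losses over all choices of $x_n$ shows that some input has loss at least $3$, and that input is the one deleted.
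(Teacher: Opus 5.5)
Your frame matches the paper's: pass to SPRNs via \Cref{thm:characterization}, then induct on $n$ by deleting an input and repairing with contractions. There is a genuine gap, though. Step 3, the decisive step, is not proved. It lists things you would try, and the averaging fallback has no inequality behind it. Averaging cannot work on its own anyway, since in the critical configuration every input has degree exactly $2$.

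The missing idea is to stop looking for a third edge at a prescribed input. Instead, show that a minimum-size SPRN realizing $K_n$ cannot avoid the good cases:
\begin{itemize}
\item If some input has degree $\ge 3$, delete it.
\item An input of degree $1$ has an auxiliary neighbour of in- or out-degree $1$. That neighbour can be contracted, contradicting minimality.
\item If two degree-$2$ inputs are adjacent, delete one. The other drops to degree $1$, and a contraction yields the third edge.
\end{itemize}
What remains is the case where every input has degree $2$ and two auxiliary neighbours. Here strictness, not degree counting, gives the contradiction. Suppose $x$ has edges $(x,y_1),(x,y_2)$. The sets $X_1,X_2$ of inputs reached through $y_1,y_2$ partition the other inputs. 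Minimality gives $y_1$ a second in-neighbour, hence an input $x'\neq x$ with a valid walk into $y_1$. Then $x'\in X_2$, since otherwise $x'$ reaches itself. Symmetrically, some $x''\in X_1$ reaches $y_2$. The walks $x'\to\cdots\to y_1\to\cdots\to x''$ and $x''\to\cdots\to y_2\to\cdots\to x'$ then violate strictness. The case of two in-edges is symmetric.

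Your case, where every input has exactly one in- and one out-edge, needs a separate argument. The auxiliary subgraph is acyclic, because a cycle would yield infinitely many valid walks; take a source $s$ of it. Its in-neighbours are at least two inputs $a,b$, and each has its only out-edge to $s$. So the walk joining them, say from $a$, passes through $s$, and then $b$ reaches itself.

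Several supporting claims are also wrong.
\begin{itemize}
\item Step 1 rests on a false premise. Idealized BVA is the reflexive--transitive closure of arbitrary BVA steps, with no requirement $|L||A|>|L|+|A|$. Later steps can also shrink an earlier auxiliary's degree. Strictness concerns uniqueness of valid walks, not degrees. The degree bounds you need come from minimality plus contraction, i.e.\ your Step 2 repair. The condition ``not both equal to $2$'' is neither available nor needed.
\item Your model is not the SPRN model. Each variable is a single vertex, and each pair $\{x_i,x_j\}$ is realized by exactly one negative walk in one direction. So an input can have only out-edges or only in-edges. Your Step 3 case split omits exactly the two-out/two-in case used above.
\item The claim that both neighbours of a degree-$2$ input must be auxiliary is false. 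Take inputs $a,b,c,d$, one auxiliary $v$, and edges $(b,v),(c,v),(v,d),(v,a),(d,a),(b,c)$, with all edges into inputs negative. This is an SPRN realizing $K_4$ with $6$ edges, and $d$ has one in-edge from $v$ and one out-edge to the input $a$. This is why the separate adjacent-degree-$2$ reduction is needed.
\end{itemize}
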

\noindent In particular, this implies that the product encoding~\cite{chen2010new} for $\textsf{AtMostOne}(x_1,\dots,x_n)$, which uses $2n+o(n)$ clauses, cannot be constructed by idealized BVA.

\subparagraph*{Implementation.} Leveraging recent work by Krapivin et al.~\cite{stoc} on efficient algorithms for constructing biclique partitions of graphs, which are a special case of rectifier networks, we present an implementation of BVA for 2-CNF formulas that
runs in time $O(n^2)$. While the time complexity of the algorithm used in \textsf{factor}, the previous state of the art, has not been rigorously analyzed, it is at least $\Omega(n^3)$. This algorithmic improvement leads to an order-of-magnitude speedup over \textsf{factor} on large random monotone 2-CNF formulas.

\subparagraph*{Proofs.} For improved readability, most proofs are postponed to the appendix.

\section{Preliminaries}\label{sec:preliminaries}
\subsection{Notation and Terminology}

We will use $\top$ and $\bot$ to denote \emph{true} and \emph{false}, respectively. The negation of a variable $x$ is denoted $\overline{x}$, and a \emph{literal} is either a variable or the negation of a variable. Literals $x$ and $\overline{x}$ are said to be complementary, for any variable $x$. A \emph{clause} is a set of non-complementary literals, and a \emph{formula} is a set of clauses. The \emph{size} of a formula is its number of clauses. We say a formula is \emph{2-CNF} if every clause has size exactly 2.
We denote the set of variables appearing in a formula $F$ as $\textsf{Var}(F)$, and we define $\textsf{Var}(C)$ analogously for a clause $C$. 
Given a set $\mathcal{V}$ of variables, an \emph{assignment} is a function $\tau\colon \mathcal{V} \to \{\bot, \top\}$.
For an assignment $\tau: \mathcal{V} \to \{\bot, \top\}$ and a formula $F$, now with $\mathcal{V} \subseteq \textsf{Var}(F)$, we denote by $F|_\tau$ the formula obtained by eliminating from $F$ each clause satisfied by $\tau$, and then from each remaining clause eliminating every literal $\ell$ such that $\tau \models \overline{\ell}$. Note that $\textsf{Var}(F|_\tau) = \textsf{Var}(F) \setminus \mathcal{V}$.
We will write $\textsf{SAT}(F)$ to say that $\tau \models F$ for some assignment $\tau$.

\subsection{CNF Encodings and BVA}



Boolean satisfiability problems, in general, are based on determining whether a boolean function $f \colon \{\bot,\top\}^n \to \{\bot, \top\}$ outputs $\top$ for some input. SAT solvers, however, take CNF formulas as input, and thus require an \emph{encoding} (formally defined next) of $f$ as a CNF formula.

\begin{definition}
    Given a boolean function $f \colon \{\bot,\top\}^n \to \{\bot, \top\}$, and a sequence of propositional variables $X := (x_1, \ldots, x_n)$, we say that a CNF formula $\varphi$ over variables $X \sqcup Y$ \emph{encodes} $f$ if, for every assignment $\tau$ of $X$,
    \[
    f(\tau(x_1), \dots, \tau(x_n)) = \top \iff  \textsf{SAT}(\varphi|_\tau).
    \]
    The variables in $Y$ are called \emph{auxiliary} variables, whereas those in $X$ are called \emph{base} variables.
\end{definition}

For example, the cardinality constraint \emph{``exactly $k$ variables from $x_1, \dots, x_n$ must be true''} is directly well defined as a boolean function, but there is significant freedom regarding how to encode it in CNF. An important component of this freedom lies in the auxiliary variables, since neither their number nor purpose is predetermined. 

Designing effective CNF encodings, and in particular choosing the right auxiliary variables to introduce, has been key to successfully tackling hard instances~\cite{heuleHappyEndingEmpty2024, subercaseauxPackingChromaticNumber2023d, boxFolding, kirchweger2025finiteinfinitesharperasymptotic}. However, the design of effective encodings can require significant technical expertise and domain-specific knowledge.
This makes \emph{automated reencoding} techniques, such as BVA, particularly appealing. Such techniques work by optimizing some quantitative proxy for the effectiveness of an encoding; a common choice of proxy is to consider some function of the number of clauses and the number of variables (e.g., their sum~\cite{mantheyAutomatedReencodingBoolean2012} or weighted sum~\cite{abioParametricApproachSmaller2013}), although more complicated metrics have also been proposed~\cite{haberlandtEffectiveAuxiliaryVariables2023}. Automated reencoding techniques operate by identifying patterned subformulas of a given CNF formula that can be rewritten using fewer clauses by introducing auxiliary variables. BVA, in particular, works exclusively by detecting and rewriting a single concrete pattern, which is illustrated by the following example.

\begin{example}\label{example:BVA-1}
    Consider a formula $F$ that includes the $9$ clauses of the form $(x_i \to x_j)$ for $i \in \{1,2,3\}$ and $j \in \{4,5,6\}$. These 9 clauses can be expressed more compactly by introducing an auxiliary variable $y$ and then enforcing:
    \[
        (x_1 \to y) \land (x_2 \to y) \land (x_3 \to y) \land (y \to x_4) \land (y \to x_5) \land (y \to x_6).
    \]
    This reduces the number of clauses from $9$ to $6$, at the cost of adding $1$ auxiliary variable. \Cref{fig:example-biclique} illustrates that, under the graph-theoretic perspective of this paper, the reencoded pattern is a \emph{biclique} (complete bipartite subgraph).
\end{example}
\begin{figure}[h]
    \begin{subfigure}{0.49\linewidth}
        \centering
        \begin{tikzpicture}[scale=0.65, basenodeleft/.style={circle, fill=blue!60!gray, text=white, inner sep=1pt}, basenoderight/.style={circle, fill=red!70!gray, text=white, inner sep=1pt}]
\node[draw, basenodeleft] (v1) at (0,0) {$x_1$};
\node[draw, basenodeleft] (v2) at (0,-1.5) {$x_2$};
\node[draw, basenodeleft] (v3) at (0,-3.0) {$x_3$};

\node[draw, basenoderight] (v4) at (3,0) {$x_4$};
\node[draw, basenoderight] (v5) at (3,-1.5) {$x_5$};
\node[draw, basenoderight] (v6) at (3,-3.0) {$x_6$};

\foreach \i in {1,2,3}{
  \foreach \j in {4,5,6} {
    \draw[thick, -Latex] (v\i) -- (v\j);
  }
}
\end{tikzpicture}
        \caption{Original encoding}
    \end{subfigure}
    \begin{subfigure}{0.49\linewidth}
        \centering
        \begin{tikzpicture}[scale=0.65, basenodeleft/.style={circle, fill=blue!60!gray, text=white, inner sep=1pt}, basenoderight/.style={circle, fill=red!70!gray, text=white, inner sep=1pt}, auxnode/.style={circle, fill=green!70!black, text=white, inner sep=2pt},]
\node[draw, basenodeleft] (v1) at (0,0) {$x_1$};
\node[draw, basenodeleft] (v2) at (0,-1.5) {$x_2$};
\node[draw, basenodeleft] (v3) at (0,-3.0) {$x_3$};
\node[draw, basenoderight] (v4) at (3, 0) {$x_4$};
\node[draw, basenoderight] (v5) at (3, -1.5) {$x_5$};
\node[draw, basenoderight] (v6) at (3, -3.0) {$x_6$};
\node[draw, auxnode] (aux) at (1.5, -1.5) {$y$};

\foreach \i in {1,2,3}{
    \draw[thick, -Latex] (v\i) -- (aux);
}
\foreach \j in {4,5,6}{
    \draw[thick, -Latex] (aux) -- (v\j);
}

\end{tikzpicture}
        \caption{Reduced encoding}
    \end{subfigure}
    \caption{Illustration of~\Cref{example:BVA-1}}
    \label{fig:example-biclique}
\end{figure}


More generally, let $\mathcal{C}$ and $\mathcal{D}$ be sets of clauses. Then,  using notation $\mathcal{C} \bowtie \mathcal{D}$ to denote the set of clauses of the form $(C_i \lor D_j)$ for $C_i \in \mathcal{C}$ and $D_j \in \mathcal{D}$, we define a \emph{BVA step} as follows:

\begin{definition}[BVA Step]
    Let $\varphi$ be a formula, and let $\mathcal{C}$ and $\mathcal{D}$ be sets of nonempty clauses such that $\mathcal{C} \bowtie \mathcal{D} \subseteq \varphi$, and $y \not\in \textsf{Var}(\varphi)$ be a fresh variable. Then, we say that the formula
    \[
        \varphi' := \left(\varphi \setminus (\mathcal{C} \bowtie \mathcal{D})\right) \cup
        \{(C_i \lor y) \mid C_i \in \mathcal{C}\} \cup \{(\overline{y} \lor D_j) \mid D_j \in \mathcal{D}\}
    \]
    can be derived from $\varphi$ in one \emph{BVA step}. We summarize this by writing $\varphi \bvastep \varphi'$.
\end{definition}

Note that if $\varphi \bvastep \varphi'$, then fully resolving on the introduced fresh variable $y$ on $\varphi'$ recovers $\varphi$; in other words, BVA is the opposite of variable elimination in the Davis--Putnam sense. BVA operates by the repeated application of BVA steps, so we define \emph{idealized BVA} to be the reflexive transitive closure of $\bvastep$, denoted $\bvachain$.

\begin{lemma}[\!\!{\cite[Thm. 1]{mantheyAutomatedReencodingBoolean2012}}]\label{lemma:bva-encoding}
    Let $f$ be a boolean function and $\varphi$ a CNF encoding of $f$ over the base variables $X$. If $\varphi \bvachain \varphi'$, then $\varphi'$ is a CNF encoding of $f$ over the base variables $X$.
\end{lemma}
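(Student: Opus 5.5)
The plan is an induction on the length of the derivation $\varphi \bvachain \varphi'$. Since $\bvachain$ is the reflexive transitive closure of $\bvastep$, and being an encoding of $f$ over $X$ is preserved under composition of steps, it suffices to handle a single step $\varphi \bvastep \varphi'$. Here the set of auxiliary variables grows from $Y$ to $Y \sqcup \{y\}$, where $y$ is fresh. The reflexive case is trivial. For one step, fix an arbitrary assignment $\tau$ of $X$. We must show $\textsf{SAT}(\varphi|_\tau) \iff \textsf{SAT}(\varphi'|_\tau)$, and then the statement follows directly from the definition of encoding.

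The key claim is a semantic one. For every assignment $\sigma$ of $\textsf{Var}(\varphi)$, $\sigma \models \varphi$ holds iff $\sigma$ extends to some $\sigma' \models \varphi'$ by choosing a value for $y$. Restricting to those $\sigma$ that agree with $\tau$ on $X$ then gives the desired equivalence of satisfiability.

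For the forward direction, suppose $\sigma \models \varphi$. Clauses of $\varphi \setminus (\mathcal{C} \bowtie \mathcal{D})$ do not mention $y$, so they stay satisfied. If $\sigma$ satisfies every $C_i \in \mathcal{C}$, set $y := \bot$; then every $(C_i \lor y)$ holds and every $(\overline{y} \lor D_j)$ holds. Otherwise some $C_i$ is falsified. Since $(C_i \lor D_j) \in \varphi$ for all $j$, $\sigma$ must satisfy every $D_j$, so setting $y := \top$ works.

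For the backward direction, suppose $\sigma' \models \varphi'$ and take any clause $(C_i \lor D_j)$ in $\mathcal{C} \bowtie \mathcal{D}$. If $\sigma'(y) = \top$, then $D_j$ holds by $(\overline{y} \lor D_j)$; if $\sigma'(y) = \bot$, then $C_i$ holds. In both cases the clause holds, and this is just resolution on $y$. The remaining clauses of $\varphi$ also appear in $\varphi'$, so $\sigma'$ restricted to $\textsf{Var}(\varphi)$ satisfies $\varphi$.

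A subtle point is the case where some variables of $\textsf{Var}(\varphi)$ disappear in $\varphi'$, namely variables occurring only in the removed clauses. This is harmless: $\textsf{SAT}$ of the restricted formula is unaffected by variables left unconstrained, so I will formulate the equivalence over the union of the variable sets. Likewise, if some base variable were absent from $\varphi'$, the restriction $\varphi'|_\tau$ is still well defined by treating $\tau$ on that variable as vacuous.

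I expect no real obstacle. The only care needed is this bookkeeping of variable sets and of the restriction notation $F|_\tau$. The forward case split on whether all $C_i$ are satisfied is the one genuinely substantive step.
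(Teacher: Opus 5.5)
Your proof is correct and is essentially the argument the paper relies on. The paper cites Manthey et al.\ for this lemma and observes that fully resolving $\varphi'$ on the fresh variable $y$ recovers $\varphi$; your two directions are exactly the completeness and soundness of that Davis--Putnam elimination step, applied to each assignment of $X$.

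One small correction: your worry about variables of $\varphi$ disappearing is moot. Every $C_i$ and $D_j$ survives inside $(C_i \lor y)$ or $(\overline{y} \lor D_j)$, so no variable of $\varphi$ is lost in $\varphi'$.
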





Detecting patterns of the form $\mathcal{C} \bowtie \mathcal{D}$ for arbitrary sets $\mathcal{C}$ and $\mathcal{D}$ 
can be computationally expensive, and thus actual implementations of BVA restrict themselves to identifying subformulas $\mathcal{C} \bowtie \mathcal{D}$ in which $\mathcal{C}$ is a set of unit clauses~\cite{mantheyAutomatedReencodingBoolean2012, haberlandtEffectiveAuxiliaryVariables2023}. Note that each clause in $(C_i \lor D_j) \in \mathcal{C} \bowtie \mathcal{D}$ has size at most $|C_i| + |D_j|$, and thus in the case we focus on this paper, where $\varphi$ is a 2-CNF formula, we may assume that \emph{both} $\mathcal{C}$ and $\mathcal{D}$ are sets of unit clauses. Thus, for ease of notation we will write, e.g., $\mathcal{C} = \{x_1, x_2\}$ instead of $\mathcal{C} = \{(x_1), (x_2)\}$.

\section{BVA in Terms of Strict Polarized Rectifier Networks} \label{sec-bva-characterization}
Rectifier networks date back to the 1950s~\cite{lupanov}, and arise in several different contexts, including circuit complexity, automata theory, and graph theory~\cite{ivanBicliqueCoveringsRectifier2014,juknaComputationalComplexityGraphs2013, chistikov_et_al:LIPIcs.STACS.2017.23}. In general terms, they provide a way of representing the reachability relationship between vertices of a directed graph by introducing auxiliary vertices with the goal of reducing the total number of edges. This reduction in edge count is the graph-theoretic interpretation of the clause count reduction in~\Cref{fig:example-biclique}, and throughout this section, we will develop a precise correspondence between a graph-theoretic and clausal representation for 2-CNF encodings.

There are several ways of representing a 2-CNF formula as a graph, with the most common being the \emph{implication graph}, in which a clause $(\overline{p} \lor q)$ corresponds to directed edges $p \to q$ and $\overline{q} \to \overline{p}$. For our analysis of idealized BVA, having two vertices per variable and two edges per clause will be inconvenient, which motivates us to aim for a graph representation (a \emph{diagram}) in which every variable corresponds to a unique vertex, and every binary clause a unique edge. Binary clauses of mixed polarity, such as $(\overline{p} \lor q)$, are mapped to a directed edge $(p,q)$, while clauses $(\overline{p} \lor \overline{q})$ are mapped to an undirected edge $\{p,q\}$. But clauses of the form $(p \lor q)$ cannot be neatly represented in this framework. To that end, we define the following class of 2-CNF formulas:



\begin{definition}[Simple 2-CNF]
We say a 2-CNF formula $\varphi$ is \emph{simple} if:
\begin{enumerate}[label=(\alph*)]
    \item every strongly connected component of the implication graph of $\varphi$ contains exactly one literal, \label{four-cond}
    \item every clause has at least one negative literal (i.e., $\varphi$ is Horn), and \label{third-cond}
    \item for every pair of distinct clauses $C_1, C_2 \in \varphi$, $\textsf{Var}(C_1) \neq \textsf{Var}(C_2)$. \label{second-cond}
\end{enumerate}
\end{definition}

The goal of this section is to characterize how idealized BVA operates on simple 2-CNF formulas. There is no real loss of generality in restricting our attention to simple 2-CNF formulas, because any satisfiable 2-CNF formula can be converted to a simple one that is equivalent up to replacing some variables by their negations. This simplification is furthermore efficient, as formalized by the following proposition:
\begin{restatable}{proposition}{propsimpletogeneral} \label{prop:simple-to-general}
    If every simple 2-CNF formula on $n$ variables has a 2-CNF encoding with at most $f(n)$ clauses, computable in time $g(n)$, then every 2-CNF formula on $n$ variables has a 2-CNF encoding with at most $f(n) + O(n)$ clauses, also computable in time $O(g(n))$.
\end{restatable}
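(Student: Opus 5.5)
Given an arbitrary 2-CNF formula $\varphi$ on $n$ variables, I will reduce it in linear time (up to the cost of the simple case) to a simple 2-CNF formula $\psi$ on at most $n$ variables, together with $O(n)$ extra clauses. The plan is to take the encoding of $\psi$ guaranteed by the hypothesis and add those extra clauses.

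\textbf{Step 1: satisfiability and variable roles.} Build the implication graph of $\varphi$ (size $O(n^2)$, so $O(g(n))$ time, assuming $g(n)=\Omega(n^2)$ since reading the input is necessary). Compute its strongly connected components in linear time.
\begin{itemize}
\item If some variable shares a component with its negation, $\varphi$ is unsatisfiable. Output the encoding consisting of the clauses $(x_1\lor y),(x_1\lor\overline{y}),(\overline{x_1}\lor y),(\overline{x_1}\lor\overline{y})$ with a fresh $y$, which is $O(1)$ clauses.
\item Otherwise, pick one representative literal per component. For every other literal $\ell$ in a component with representative $r$, add the two clauses $(\overline{\ell}\lor r)$ and $(\ell\lor\overline{r})$. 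This gives $O(n)$ clauses in total.
\item Replace every occurrence of $\ell$ in $\varphi$ by $r$, and drop tautologies and duplicate clauses.
\end{itemize}
The resulting formula $\varphi_1$ mentions only representative variables, and its implication graph is the condensation. Every component is now a single literal, which gives condition \ref{four-cond}.

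\textbf{Step 2: making the formula Horn.} The condensation of a satisfiable 2-CNF formula is a DAG with a duality (complement) symmetry. I take a topological order and choose the polarity of each representative variable consistently with the standard 2-SAT satisfying-assignment construction. Concretely, let $\tau$ be the canonical satisfying assignment. Rename each variable $x$ with $\tau(x)=\top$ by its negation, so that the all-false assignment satisfies the renamed formula. Then every clause contains a negative literal, so the formula is Horn, which gives condition \ref{third-cond}.

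Renaming preserves the singleton-component property. Renaming is not allowed on base variables in an encoding, though. So I introduce the renaming only inside $\psi$: $\psi$ is over fresh copies $z_x$, and I add linking clauses $z_x\leftrightarrow x$ or $z_x\leftrightarrow\overline{x}$, costing $O(n)$ clauses. Alternatively, I could note that negating base variables maps encodings to encodings clause-for-clause, so the encoding of $\psi$ can simply be re-negated back. I will use the latter, since it adds no clauses.

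\textbf{Step 3: distinct variable sets.} If two clauses share a variable set $\{p,q\}$, they differ in polarity. Two such distinct Horn clauses resolve to give a unit clause or put $p,q$ in the same component; the latter is already excluded. I expect this to be the main obstacle: handling the clauses that resolve to units ($p$ forced to a constant), and more generally forced literals.

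I would first try failed-literal style preprocessing. Any literal $\ell$ with a path from $\ell$ to $\overline{\ell}$ is forced false. Record it as a unit clause and remove its variable, after first using unit propagation on $\varphi_1$ via the implication graph. This costs $O(n)$ unit clauses, which can be made binary with a fresh variable if pure 2-CNF is required, at $O(1)$ extra clauses each.

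After removing forced variables, two clauses on the same variable pair would imply a path $\ell\to\overline{\ell}$ or an equivalence, both already excluded. So condition \ref{second-cond} holds.

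\textbf{Step 4: correctness.} The final output is the encoding of $\psi$ (after undoing the renaming), together with the equivalence clauses and the unit/forced clauses. It has at most $f(n)+O(n)$ clauses, since $\psi$ has at most $n$ variables and $f$ can be taken monotone, or padded by dummy variables. I will argue that it is an encoding by checking that, for every assignment of the base variables, satisfiability is preserved through each substitution step.
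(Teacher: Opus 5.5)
Your clause count is correct, and your route is the paper's: equivalent-literal substitution, removal of forced literals, Horn renaming by a satisfying assignment, then undoing the renaming on the final encoding. The gap is in the other half of the statement, the running time.

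In Step~3 you remove \emph{every} literal $\ell$ with a path from $\ell$ to $\overline{\ell}$, i.e.\ you compute all failed literals. You never say how to do this within $O(n^2)$. One search per literal costs $O(n)\cdot O(n^2)=O(n^3)$, and no general shortcut is known. To see why, take a graph $H$ on $[n]$ and form the simple 2-CNF with:
\begin{itemize}
\item clauses $(\overline{u_i}\lor v_j)$, $(\overline{v_i}\lor w_j)$, $(\overline{w_i}\lor u'_j)$ for every ordered pair $(i,j)$ with $\{i,j\}\in E(H)$;
\item clauses $(\overline{u_i}\lor c_i)$ and $(\overline{u'_i}\lor\overline{c_i})$ for every $i$.
\end{itemize}
In this formula $u_i$ is a failed literal exactly when $i$ lies in a triangle of $H$. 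An implementation of your Step~3 running in time quadratic in the number of variables would therefore detect triangles in dense graphs in $O(n^2)$ time, which is not known to be possible. As written, your reduction is not shown to run in $O(g(n))$. It would also not support the $O(n^2)$ bound of the corollary built on this proposition.

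The fix is that you do not need all failed literals, only enough of them to destroy pairs of clauses on a common variable set. As you noted, after the SCC substitution such a pair can only resolve to a literal $\ell$ occurring in both clauses. So proceed as follows:
\begin{itemize}
\item Let $L$ be the set of literals $\ell$ for which $(\ell\lor m)$ and $(\ell\lor\overline{m})$ are both present for some $m$. One scan over the clauses, using a table indexed by literal pairs, finds $L$ in $O(n^2)$ time.
\item Assert $L$, together with any unit clauses your substitution created, and unit-propagate in linear time.
\item Record every literal that gets set. There are at most $n$ of them, each expressible by two binary clauses.
\end{itemize}
The surviving binary clauses are a subset of the ones you already had, so conditions (a) and (b) are preserved. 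No pair on a common variable set survives, because its resolvent is in $L$, was set true, and occurs in both clauses. This is essentially the paper's proof. With this change the rest of your argument goes through, including your padding remark for when fewer than $n$ variables remain.
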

\begin{proof}[Proof sketch]
    Unsatisfiable formulas can be trivially reencoded into the empty clause, so suppose that $\varphi$ is a satisfiable 2-CNF formula we wish to reencode. Using equivalent literal substitution, we can reduce to a formula $\varphi'$ whose implication graph contains no strongly connected components of size greater than $1$; it suffices to reencode $\varphi'$, since the equivalences of literals can be appended afterward. Next, let $L$ be the set of literals $\ell$ such that $(\ell \lor x_j)$ and $(\ell \lor \overline{x_j})$ are both present in $\varphi'$ for some $x_j$, and note that $\varphi' \models \ell$ for $\ell \in L$, so $\varphi' \equiv \varphi' \land \bigwedge_{\ell \in L} \ell$. Then, applying unit propagation on $\varphi' \land \bigwedge_{\ell \in L} \ell$ yields a formula $\varphi''$ containing no pairs of clauses over the same two variables; it suffices to reencode $\varphi''$, since the unit clauses $\ell$ can be dealt with afterward. Finally, since $\varphi''$ is satisfiable and 2-CNF, it is Horn-renamable (see \Cref{lem:no-pos}), so we can reduce to a formula $\varphi'''$ obtained by replacing some variables by their negations throughout $\varphi''$, and an encoding for $\varphi'''$ can be transformed into an encoding for $\varphi''$ via the same replacement.
\end{proof}

Now, we introduce the graph-theoretic notions at the heart of our framework and explain their correspondence to the logical notions we study (see \Cref{tab:dictionary}). A virtue of working with simple 2-CNF formulas is that they have a convenient graph-theoretic representation using what we call \emph{diagrams} (a similar definition appears in \cite{allen2cnf}):
\begin{definition}[Diagram]
    A \emph{diagram} $G = (V,E)$ is a graph with both undirected and directed edges allowed (but no loops) and such that $|\{\{u, v\}, (u, v), (v, u)\} \cap E| \le 1$ for all $u,v \in V$. Given a simple 2-CNF formula $\varphi$, its \emph{associated diagram} $G_\varphi$ is defined by $V(G_\varphi) = \textsf{Var}(\varphi)$ and $E(G_\varphi) = \{\{x_i,x_j\} \mid (\overline{x_i} \lor \overline{x_j}) \in \varphi\} \cup \{(x_i,x_j) \mid (\overline{x_i} \lor x_j) \in \varphi\}$.
\end{definition}

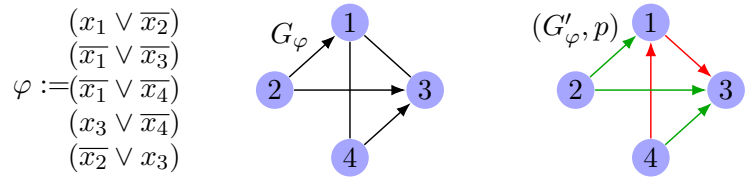
\begin{figure}[b]
    \centering
    \begin{tikzpicture}[ v/.style   ={circle, draw=none, fill=blue!35, minimum size=5mm, inner sep=0pt}]

        \node[] (phi) at (-1.1, -0.9) {$\varphi :=$};
        \def\vertDistance{0.45}
        \node[] (c1) at (0,0) {$(x_1 \lor \overline{x_2})$};
        \node[] (c2) at (0,-1*\vertDistance) {$(\overline{x_1} \lor \overline{x_3})$};
        \node[] (c3) at (0,-2*\vertDistance) {$(\overline{x_1} \lor \overline{x_4})$};
        \node[] (c4) at (0,-3*\vertDistance) {$(x_3 \lor \overline{x_4})$};
        \node[] (c5) at (0,-4*\vertDistance) {$(\overline{x_2} \lor x_3)$};

        \node[] (gphi) at (3.2-1, -0.2) {$G_\varphi$};
        \node[v] (v1) at (4-1, 0) {1};
        \node[v] (v2) at (3-1, -0.9) {2};
        \node[v] (v3) at (5-1, -0.9) {3};
        \node[v] (v4) at (4-1, -1.8) {4};

        \draw[line width=0.55pt] (v1) -- (v3);
        \draw[line width=0.55pt] (v1) -- (v4);

        \draw[line width=0.55pt, -Latex] (v2) -- (v1);
        \draw[line width=0.55pt, -Latex] (v4) -- (v3);
        \draw[line width=0.55pt, -Latex] (v2) -- (v3);

        \def\hdist{3.0}
         \node[] (gphip) at (3.2 + \hdist -0.2, -0.1) {$(G'_\varphi, p)$};
        \node[v] (v1p) at (4+ \hdist, 0) {1};
        \node[v] (v2p) at (3+ \hdist, -0.9) {2};
        \node[v] (v3p) at (5+ \hdist, -0.9) {3};
        \node[v] (v4p) at (4+ \hdist, -1.8) {4};

        \draw[line width=0.55pt, red, -Latex] (v1p) -- (v3p);
        \draw[line width=0.55pt, red, -Latex] (v4p) -- (v1p);

        \draw[line width=0.55pt, -Latex, \edgegreen] (v2p) -- (v1p);
        \draw[line width=0.55pt, -Latex, \edgegreen] (v4p) -- (v3p);
        \draw[line width=0.55pt, -Latex, \edgegreen] (v2p) -- (v3p);

    \end{tikzpicture}
    \caption{Illustration of a diagram and a polarized diagram for a simple 2-CNF. Edges with \emph{positive polarity} (i.e., $p(u, v) = +$) are colored green, whereas \emph{negative polarity} edges are colored red.}
    \label{fig:polarized-diagram}
\end{figure}

When we define polarized rectifier networks and their relation to diagrams, it will be convenient to work with \emph{polarized diagrams}, in which each undirected edge of a diagram is oriented arbitrarily (see \Cref{fig:polarized-diagram}):
\begin{definition}[Polarized Diagram]
    Given a diagram $G$ with undirected edges $E_1$ and directed edges $E_2$, a \emph{polarized diagram} of $G$ is a pair $(G',p)$ consisting of a directed graph $G'$ obtained from $G$ by orienting every edge in $E_1$ together with a function $p \colon E(G') \to \{-,+\}$ given by
    \(
        p(u,v) = \begin{cases}
            - \quad &\text{if} \ \{u,v\} \in E_1 \\
            + \quad &\text{if} \ (u,v) \in E_2.
        \end{cases}
    \)
\end{definition}
When BVA reencodes a formula containing a clause of the form $(\overline{x_i} \lor \overline{x_j})$, it can do so using a chain of implications either of the form $x_i \rightarrow y_1 \rightarrow \dots \rightarrow y_k \rightarrow \overline{x_j}$ or of the form $x_j \rightarrow y_1 \rightarrow \dots \rightarrow y_k \rightarrow \overline{x_i}$, where $y_1, \dots, y_k$ are auxiliary variables. In a polarized diagram, the orientation of the edge $\{x_i,x_j\}$ represents a choice of which of these alternatives idealized BVA may take. Concretely, if a clause $(\overline{x_i} \lor \overline{x_j})$ ever forms part of a BVA step $\mathcal{C} \bowtie \mathcal{D}$, we will orient it as $(x_i, x_j)$ if $\overline{x_i} \in \mathcal{C}$ and $\overline{x_j}  \in \mathcal{D}$, and as $(x_j, x_i)$ if $\overline{x_j} \in \mathcal{C}$ and $\overline{x_i}  \in \mathcal{D}$. If the clause is never part of a BVA step, then its edge orientation can be chosen arbitrarily.

Next, we extend polarized diagrams into \emph{polarized rectifier networks}, which are polarized diagrams with a distinguished set of \emph{auxiliary vertices}:


\begin{definition}[Polarized Rectifier Network]
    A \emph{polarized rectifier network} (PRN) $\mathcal{R}$ is a quadruple $(B, A, E, p)$, where $B$ and $A$ are disjoint sets of \emph{vertices}, $(B \sqcup A, E)$ is a directed graph without loops, and $p$ is a function $\{(u,v) \in E \mid v \in B\} \to \{-,+\}$. We say that $B$ and $A$ are the \emph{base} and \emph{auxiliary} vertices of $\mathcal{R}$ respectively. Given $u,v \in B$ (potentially $u = v$), a \emph{valid walk} from $u$ to $v$ in $\mathcal{R}$ is a sequence of vertices $\pi := (\pi_1, \ldots, \pi_k)$ for some $k \ge 2$ such that $(\pi_1,\pi_k) = (u,v)$, $(\pi_i, \pi_{i+1}) \in E$ for each $i \in [k-1]$, and $\pi_i \in A$ for each $i \in [2,k-1]$. We say that $\pi$ is \emph{positive} or \emph{negative} according to the value of $p(\pi_{k-1},\pi_k)$. When a valid walk $\pi$ is positive, we write $p(\pi) = +$, and $p(\pi) = -$ otherwise.
\end{definition}

Recall from \Cref{tab:dictionary} that PRNs correspond to simple 2-CNF formulas with auxiliary variables.\footnote{The correspondence here is not exact. Some simple 2-CNF formulas with auxiliary variables (namely, those containing clauses of the form $(\overline{y}_i \lor \overline{y_j})$, where $y_i$ and $y_j$ are auxiliary variables) do not correspond to polarized rectifier networks. However, our goal is to characterize encodings constructible by idealized BVA (\Cref{thm:characterization}), which never constructs clauses of that form.} If $\mathcal{R} = (B, A, E, p)$ is a PRN, then $B$ and $A$ correspond to base and auxiliary variables respectively. An edge $(u,v) \in E$ corresponds to an implication $u \rightarrow v$, unless $v$ is a base vertex and $p(u,v) = -$, in which case it corresponds to the implication $u \rightarrow \overline{v}$. A positive (respectively, negative) valid walk from $u$ to $v$ in $\mathcal{R}$ corresponds to a chain of implications $u \rightarrow y_2 \rightarrow \dots \rightarrow y_{k-1} \rightarrow v$ (respectively, $u \rightarrow y_2 \rightarrow \dots \rightarrow y_{k-1} \rightarrow \overline{v}$), where $y_2, \dots, y_{k-1}$ are auxiliary variables. Next, we define the translation from a PRN to a 2-CNF formula that makes this correspondence precise:
\begin{definition}[PRN to Encoding]
    Given a PRN $\mathcal{R} = (B, A, E, p)$, we define a 2-CNF formula $F_\mathcal{R}$ over the variables $B \sqcup A$ by mapping each edge $(u,v) \in E$ into a clause $C_{(u,v)}$ as follows: if $v \in B$ and $p(u,v) = -$, then $C_{(u,v)} := (\overline{u} \lor \overline{v})$; otherwise, $C_{(u,v)} := (\overline{u} \lor v)$. Finally, $F_\mathcal{R} := \bigwedge_{e \in E} C_e$.
\end{definition}

\begin{table}[t]
\centering
\caption{Translation between formulas/encodings and graph-theoretic notions}
\label{tab:dictionary}
\renewcommand{\arraystretch}{1.15}
\begin{tabularx}{\linewidth}{X|X}
\toprule
\textbf{Formulas and encodings} & \textbf{Graph-theoretic notions} \\ \midrule
Simple 2-CNF formula & Diagram \\ 
Simple 2-CNF formula with auxiliary variables & Polarized rectifier network \\ 
Implication chain $x_i \! \to \! y_2 \! \to \! \dots \! \to \! y_{k-1} \! \to \pm x_j$ & Valid walk $(x_1, y_1, \dots, y_{k-1}, x_j)$\\
A simple 2-CNF formula with auxiliary variables \emph{encodes} a simple 2-CNF formula & A polarized rectifier network \textit{realizes} a diagram \\
Encoding constructible by idealized BVA & Strict polarized rectifier network\\
\bottomrule
\end{tabularx}
\end{table}

Next, we define the relation between a PRN $\mathcal{R}$ and a diagram $G_\varphi$ that holds when $F_\mathcal{R}$ is an encoding of $\varphi$:
\begin{definition}[PRN Realizing a Diagram]
    Given a polarized diagram $(G,p)$, a PRN $\mathcal{R} = (V(G), A, E, p')$ \emph{realizes} $(G,p)$ if for every pair $u,v \in V(G)$ (potentially $u = v$), we have $(u,v) \in E(G)$ if and only if there is a valid walk $\pi$ from $u$ to $v$ in $\mathcal{R}$ such that $p'(\pi) = p(u, v)$. If $(G,p)$ is a polarized diagram of $G'$ and $\mathcal{R}$ realizes $(G,p)$, we also say that $\mathcal{R}$ \emph{realizes} $G'$.
\end{definition}
The previous definition is motivated by the fact that direct implications in a formula, which correspond to edges of a polarized diagram, must be captured by implication chains in an encoding, which correspond to valid walks in a PRN. Note that given a polarized diagram $(G,p)$, we can naturally define a PRN $\mathcal{R}_{(G,p)} = (V(G),\emptyset,E(G),p)$ that realizes $(G,p)$.

\begin{remark}\label{remark:varphi-eq-F}
    Let $\varphi$ be a simple 2-CNF formula, and let $(G, p)$ be a polarized diagram of $G_{\varphi}$. Then, $\varphi = F_{\mathcal{R}_{(G, p)}}$.
\end{remark}

It turns out that polarized rectifier networks represent a more general class of simple 2-CNF encodings than what can be achieved by idealized BVA. To that end, we define the following:
\begin{definition}[Strict Polarized Rectifier Network]
        Let $\mathcal{R} = (B,A,E,p)$ be a PRN. We say that $\mathcal{R}$ is a \emph{strict polarized rectifier network} (SPRN) if the following conditions are satisfied:
        \begin{enumerate}
            \item for every $\{u,v\} \in \binom{B}{2}$, $\mathcal{R}$ contains at most one valid walk from $u$ to $v$ or from $v$ to $u$;
            \item every vertex $u \in A$ belongs to some valid walk.
        \end{enumerate}
\end{definition}
The second condition of this definition rules out ``dangling'' auxiliary vertices, which serve no purpose and are never constructed by idealized BVA. The first condition is more interesting. It forbids us from realizing an implication between base vertices via multiple valid walks, something idealized BVA will never do. There is, however, no principled justification for this restriction; for some diagrams, the smallest SPRNs realizing them may be asymptotically larger than the smallest PRNs realizing them~\cite[Chapter~5]{linear-boolean}. This suggests a possible avenue for improving BVA by allowing it to construct multiple implication chains capturing an implication between base variables.

Our characterization theorem identifies which formulas can be obtained by idealized BVA from a simple 2-CNF formula. To state it, we define the following equivalence relation on formulas: Given formulas $\varphi$ and $\varphi'$ and a set of variables $W$, we write $\varphi \cong_W \varphi'$ if $\varphi'$ can be obtained from $\varphi$ by replacing some variables from $W$ by their negations wherever they appear. Naturally, replacing auxiliary variables by their negations is irrelevant for encodings: if $\varphi$ is an encoding of some boolean function $f$ with auxiliary variables $W$, and $\varphi \cong_W \varphi'$, then $\varphi'$ is also an encoding of $f$. We can now state our characterization theorem:

\begin{restatable}[BVA Characterization]{theorem}{bvacharacterization} \label{thm:characterization}
    Let $\varphi$ and $\varphi'$ be 2-CNF formulas, where $\varphi$ is simple. We have $\varphi \bvachain \varphi'$ if and only if there is an SPRN $\mathcal{R}$ realizing the diagram $G_{\varphi}$ with $F_\mathcal{R} \cong_W \varphi'$, where $W = \textsf{Var}(\varphi') \setminus \textsf{Var}(\varphi)$.
\end{restatable}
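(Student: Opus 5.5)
Both directions go by induction on the number of BVA steps, with an invariant on SPRNs that is maintained by each step.

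\textbf{Forward direction.} Suppose $\varphi \bvachain \varphi'$. I would strengthen the claim: every formula $\psi$ with $\varphi \bvachain \psi$ satisfies $\psi \cong_W F_{\mathcal{R}}$ for some SPRN $\mathcal{R}$ realizing $G_\varphi$. I also require that every clause of $F_\mathcal{R}$ between two base variables is an edge of $G_\varphi$ in its chosen orientation. The base case uses $\mathcal{R}_{(G,p)}$ and Remark~\ref{remark:varphi-eq-F}. This PRN is strict because simplicity condition~\ref{second-cond} gives at most one edge per pair of base vertices, and $A=\emptyset$.

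For the inductive step, consider a BVA step $\mathcal{C}\bowtie\mathcal{D}\subseteq\psi$ with fresh variable $y$. After undoing the renaming, each clause $(c\lor d)$ is $C_e$ for an edge $e=(u,v)$ of $\mathcal{R}$. The key observation is a normal form: for each $c\in\mathcal{C}$, either all of its clauses come from edges out of a single vertex $u_c$, or all come from edges into it. Mixing the two would require $c$ to appear both as tail literal $\overline{u}$ and as head literal. The head literal is $v$ or $\overline v$ with $v\in B$, and this case analysis uses the Horn and acyclicity conditions~\ref{four-cond},~\ref{third-cond}. Up to swapping $\mathcal{C}$ and $\mathcal{D}$ and renaming $y$ (this is where $\cong_W$ enters), we get a complete bipartite set of edges $U\times V$ in $\mathcal{R}$. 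I replace it by $U\to y\to V$. The edges $y\to v$ inherit the polarities $p(u,v)$, which must be independent of $u$ because the literal $d$ is fixed.

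Valid walks through the old edges biject with walks through $y$. Since $y$ is new, the sets of endpoint pairs are preserved, so realization and strictness are preserved, and $y$ lies on a valid walk. One subtle case needs care: the middle vertices of a valid walk must be auxiliary. If $u,v$ are both base vertices, then $(u,y,v)$ is valid. If either endpoint is auxiliary, the walks simply extend, so this is fine.

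\textbf{Reverse direction.} Given an SPRN $\mathcal{R}$ realizing $G_\varphi$, I induct on $|A|$ and eliminate auxiliaries one at a time by DP resolution, which is the inverse of a BVA step. Pick an auxiliary vertex $a$ that is a source among auxiliaries in the acyclic order, or show that some elimination order exists. Replace all in-edges $U\to a$ and out-edges $a\to V$ by $U\times V$. Strictness ensures that none of these new edges already exists, since it would give two walks between base endpoints. It also ensures that no parallel duplicates arise, so the clause set shrinks back exactly and the step is reversible as a BVA step with $\mathcal{C}=\{\overline u\}$ and $\mathcal{D}$ the head literals. Condition~2 guarantees that $U,V\neq\emptyset$, so the clauses are nonempty.

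Two issues must be handled. First, cycles through auxiliary vertices must be ruled out. A cycle reachable on a valid walk would yield infinitely many valid walks, and since walks can be prolonged and condition~2 holds, two distinct walks between the same pair would contradict strictness. The exception is $u=v$ walks, which realization forbids since $G$ has no loops. Second, the new edge $(u,v)$ between base vertices might create an undirected-edge conflict; the orientation recorded in $p$ handles this. At $|A|=0$, realization forces $E=E(G)$ with matching polarities, so $F_\mathcal{R}=\varphi$.

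\textbf{Main obstacle.} I expect the hardest part to be the normal-form lemma in the forward direction. It must show that an arbitrary BVA step applied to an SPRN-shaped formula (up to renaming of auxiliaries) always corresponds to a biclique of edges sharing a uniform polarity. In particular, literals on auxiliary variables may appear with either sign after renaming, and this requires the careful case analysis above.
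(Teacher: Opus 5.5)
\textbf{The gap is the normal-form step of your forward direction, and that step is false, not just unproved.} Horn-ness and acyclicity do not stop a literal from being the tail literal of one clause of $\mathcal{C}\bowtie\mathcal{D}$ and the head literal of another. When the head role comes from an edge leaving an auxiliary vertex, no renaming or re-orientation repairs this.

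Take the simple formula $\varphi=(\overline{u}\lor\overline{w})\land(\overline{u}\lor y)\land(\overline{w}\lor x)$. The step with $\mathcal{C}=\{\overline{u}\}$, $\mathcal{D}=\{\overline{w},y\}$ and fresh $g$ gives $\varphi_1=(\overline{w}\lor x)\land(\overline{u}\lor g)\land(\overline{g}\lor\overline{w})\land(\overline{g}\lor y)$. Renaming $g$ would create the all-positive clause $(g\lor y)$, so $\varphi_1$ has only one network representation. Its edges are $(w,x)$ and $(g,y)$ with polarity $+$, $(u,g)$, and $(g,w)$ with polarity $-$.

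Now apply the step with $\mathcal{C}=\{x,\overline{g}\}$, $\mathcal{D}=\{\overline{w}\}$ and fresh $a$. Both clauses are present, but $\overline{w}$ is the tail literal of $(w,x)$ and the head literal of $(g,w)$. The result is
\[
\varphi_2=(\overline{u}\lor g)\land(\overline{g}\lor y)\land(x\lor a)\land(\overline{g}\lor a)\land(\overline{a}\lor\overline{w}),
\]
and it is not $\cong_{\{a,g\}}$ to any $F_{\mathcal{R}}$:
\begin{itemize}
\item renaming $g$ produces $(g\lor y)$;
\item keeping $g$ and renaming $a$ produces $(\overline{g}\lor\overline{a})$, a negative clause on two auxiliary variables;
\item keeping both leaves $(x\lor a)$.
\end{itemize}
None of these clauses has the form $C_e$. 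Duplicating $u$, $w$ and $x$ makes both steps strictly clause-reducing, so this is not an artifact of degenerate steps. Your invariant therefore breaks after two steps, and the ``only if'' direction of the statement, as written, does not hold.

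The same mixing already occurs with only base vertices, but there it is harmless. For example, take $(\overline{w'}\lor\overline{u})\land(\overline{u}\lor w)$ with $\{u,w'\}$ oriented as $(w',u)$, and the step $\mathcal{C}=\{w,\overline{w'}\}$, $\mathcal{D}=\{\overline{u}\}$. A single base--base negative edge can simply be re-oriented. This is why the orientation of $(G,p)$ cannot be fixed at the start as in your invariant, and why the paper reads it off the final formula.

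The paper's own forward argument (\Cref{lem-characterization-forward}) has the same hole as yours. It asserts that $(X,Y)$ is a coherent biclique. On the example above its Horn renaming renames $a$, giving $X=\{w\}$ and $Y=\{x,g\}$, yet the clause $(\overline{w}\lor\overline{g})$ comes from the edge $(g,w)$, not $(w,g)$.

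What does go through is the forward direction for BVA steps that are biclique reductions of the current network, up to swapping $\mathcal{C}$ and $\mathcal{D}$ and re-orienting base--base negative edges. Your reverse direction is sound and is essentially the paper's (\Cref{lem-sprn-reduction} together with \Cref{lem-biclique-implies-bva}). You still need that BVA derivations are closed under renaming auxiliary variables (\Cref{lem-commute}) to pass from $F_{\mathcal{R}}$ to $\varphi'$.
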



\begin{figure}[t]
\centering

\tikzset{
  v/.style   ={circle, draw=none, fill=blue!35, minimum size=5mm, inner sep=0pt, font=\small},
  hot/.style ={circle, draw=none, fill=orange!70, minimum size=5mm, inner sep=0pt, font=\small},
  >={Latex},
  e/.style   ={\edgegreen, line width=0.55pt},
  en/.style   ={black, line width=0.55pt},
  er/.style={red, line width=0.55}
}


\begin{subfigure}[t]{0.32\linewidth}
\centering
\begin{tikzpicture}[x=1.00cm,y=1.05cm]


\node[v]   (3) at (4,3.2) {1};
\node[v]   (4) at (1,3.2) {2};


\node[v]   (7) at (1,2) {4};
\node[v]   (6) at (4,2) {3};


\node[v]   (2) at (1,0) {6};
\node[v]   (1) at (4,0) {5};
\node[v]   (5) at (2,0) {7};

\node[v]   (0) at (2,-1) {8};

\draw[er,->] (3) -- (6);
\draw[e,->] (3) to[bend right=20] (1);
\draw[e,->] (3) -- (5);
\draw[e,->] (3) -- (0);
\draw[e,->] (3) -- (2);

\draw[er,->] (4) -- (6);
\draw[e,->] (4) -- (7);
\draw[e,->] (4) to[bend right=20] (2);
\draw[e,->] (4) -- (1);
\draw[e,->] (4) -- (5);
\draw[e,->] (4) -- (0);

\draw[e,->] (6) -- (7);
\draw[er,->] (6) -- (1);
\draw[e,->] (6) -- (5);

\draw[e,->] (7) -- (1);
\draw[e,->] (7) -- (2);
\draw[e,->] (7) -- (5);

\draw[e,->] (2) -- (5);
\draw[er,->] (2) -- (0);

\draw[e,->] (5) -- (0);

\end{tikzpicture}
\caption{$(G, p)$}
\label{fig:phi-polarized}
\end{subfigure}
\hfill
\begin{subfigure}[t]{0.33\linewidth}
\centering
\begin{tikzpicture}[x=1.00cm,y=1.05cm]

\node[v]   (3) at (4,3.2) {1};
\node[v]   (4) at (1,3.2) {2};

\node[hot] (8) at (3,2.7) {1};

\node[v]   (7) at (1,2) {4};
\node[v]   (6) at (4,2) {3};


\node[v]   (2) at (1,0) {6};
\node[v]   (1) at (4,0) {5};
\node[v]   (5) at (2,0) {7};

\node[v]   (0) at (2,-1) {8};

\draw[en,->] (3) -- (8);
\draw[en,->] (4) -- (8);
\draw[e,->] (4) -- (7);

\draw[er,->] (8) -- (6);
\draw[e,->] (8) -- (0);

\draw[e,->] (6) -- (7);
\draw[er,->] (6) -- (1);
\draw[e,->] (6) -- (5);

\draw[e,->] (8) -- (2);
\draw[e,->] (8) -- (1);
\draw[e,->] (8) -- (5);

\draw[e,->] (7) -- (2);
\draw[e,->] (7) -- (1);
\draw[e,->] (7) -- (5);


\draw[e,->] (2) -- (5);
\draw[er,->] (2) -- (0);

\draw[e,->] (5) -- (0);

\end{tikzpicture}
\caption{$\mathcal{R}_1$}
\label{fig:first-bva-step}
\end{subfigure}
\hfill
\begin{subfigure}[t]{0.33\linewidth}
\centering
\begin{tikzpicture}[x=1.00cm,y=1.05cm]

\node[v]   (3) at (4,3.2) {1};
\node[v]   (4) at (1,3.2) {2};

\node[hot] (8) at (3,2.7) {1};

\node[v]   (7) at (1,2) {4};
\node[v]   (6) at (4,2) {3};

\node[hot] (9) at (2,1) {2};

\node[v]   (2) at (1,0) {6};
\node[v]   (1) at (4,0) {5};
\node[v]   (5) at (2,0) {7};

\node[v]   (0) at (2,-1) {8};

\draw[en,->] (3) -- (8);
\draw[en,->] (4) -- (8);
\draw[e,->] (4) -- (7);

\draw[er,->] (8) -- (6);
\draw[en,->] (8) -- (9);
\draw[e,->] (8) -- (0);

\draw[e,->] (6) -- (7);
\draw[er,->] (6) -- (1);
\draw[e,->] (6) -- (5);

\draw[en,->] (7) -- (9);

\draw[e,->] (9) -- (2);
\draw[e,->] (9) -- (5);
\draw[e,->] (9) -- (1);

\draw[e,->] (2) -- (5);
\draw[er,->] (2) -- (0);

\draw[e,->] (5) -- (0);

\end{tikzpicture}
\caption{$\mathcal{R}_2$}
\label{fig:second-bva-step}
\end{subfigure}

\caption{Illustration of the BVA steps in \Cref{ex-bva}}
\end{figure}
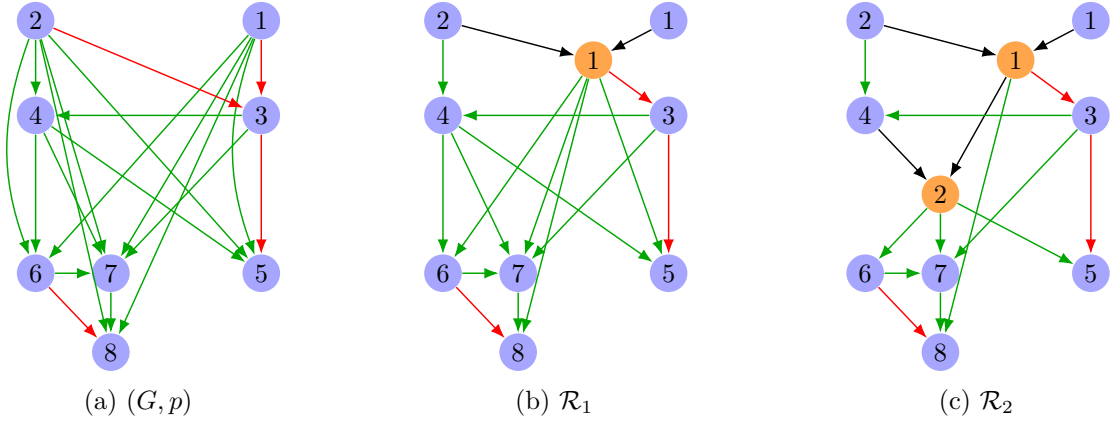

Since the proof of~\Cref{thm:characterization} (in \Cref{sec-sprn-appendix}) is quite technical, let us give some intuition for the result through the following example:
\begin{example} \label{ex-bva}
Consider the following simple 2-CNF formula:
\begin{align*}
   \varphi := \;& (\overline{x_1} \lor \overline{x_3})  \land (\overline{x_1} \lor x_5) \land (\overline{x_1} \lor x_6) \land (\overline{x_1} \lor x_7) \land (\overline{x_1} \lor x_8) 
    \land  (\overline{x_2} \lor \overline{x_3}) \land (\overline{x_2} \lor x_5) \\
    &\land  (\overline{x_2} \lor x_4) \land (\overline{x_2} \lor x_6) \land (\overline{x_2} \lor x_7) \land (\overline{x_2} \lor x_8) 
     \land (\overline{x_3} \lor \overline{x_5}) \land (\overline{x_3} \lor x_4) \land (\overline{x_3} \lor x_7) \\ 
     &\land (\overline{x_4} \lor x_5) \land (\overline{x_4} \lor x_6) \land (\overline{x_4} \lor x_7)
     \land (\overline{x_6} \lor x_7) \land (\overline{x_6} \lor \overline{x_8}) \land (\overline{x_7} \lor x_8).
\end{align*}

A polarized diagram for $\varphi$ is given in \Cref{fig:phi-polarized}. As our first BVA step, we take $\mathcal{C} = \{\overline{x_1}, \overline{x_2}\}$ and $\mathcal{D} = \{\overline{x_3}, x_5, x_6, x_7, x_8\}$, and replace $\mathcal{C} \bowtie \mathcal{D}$ by $\{(C_i \lor y_1) \mid C_i \in \mathcal{C}\} \cup \{(\overline{y_1} \lor D_j) \mid D_j \in \mathcal{D}\}$, where $y_1$ is a new auxiliary variable. After this BVA step, the formula corresponds to the rectifier network in \Cref{fig:first-bva-step}.
For the second BVA step, we take $\mathcal{C} = \{\overline{x_4}, \overline{y_1}\}$ and $\mathcal{D} = \{x_5, x_6, x_7\}$, and replace $\mathcal{C} \bowtie \mathcal{D}$ by $\{(C_i \lor y_2) \mid C_i \in \mathcal{C}\} \cup \{(\overline{y_2} \lor D_j) \mid D_j \in \mathcal{D}\}$, where $y_2$ is a new auxiliary variable. The resulting formula corresponds to the rectifier network in \Cref{fig:second-bva-step}.
\end{example}

\section{Encodings for General 2-CNF Formulas} \label{sec-2cnf-general}

As a first application of our characterization of idealized BVA (\Cref{thm:characterization}), we answer the following natural question: Given an arbitrary 2-CNF formula, how many clauses are in its smallest 2-CNF encoding constructible by idealized BVA? The answer, which depends on whether we use the simplification from \Cref{prop:simple-to-general}, is presented in \Cref{tab:bounds}.

\begin{table}[htbp]
\centering
\caption{Size of the smallest 2-CNF reencoding of a 2-CNF formula in the worst case}
\begin{tabular}{@{~~}l@{~~~~~~~~}c@{~~~~~~~~~~}c@{~~}}
\toprule
\textbf{} & \textbf{Lower bound} & \textbf{Upper bound} \\ \midrule
Idealized BVA & \makecell{$\displaystyle \left(1 - o(1)\right) \frac{n^2}{\lg n}$ \\[2ex] (\Cref{prop-bva-no-preprocessing-lb})} & \makecell{$\displaystyle \left(1 + o(1)\right) \frac{n^2}{\lg n}$ \\[2ex] (\Cref{thm:bva-no-preprocessing})} \\ \hline
Idealized BVA with simplification & \makecell{$\displaystyle \Big(\frac{\lg(3)}{4} - o(1)\Big) \frac{n^2}{\lg n}$ \\[2ex] (\Cref{prop-bva-with-preprocessing-lb})} & \makecell{$\displaystyle \Big(\frac{\lg(3)}{4} + o(1)\Big) \frac{n^2}{\lg n}$ \\[2ex] (\Cref{cor-bva-with-preprocessing})} \\ \hline
Arbitrary 2-CNF reencoding methods & \makecell{$\displaystyle \left(\frac{1}{4} - o(1)\right) \frac{n^2}{\lg n}$ \\[2ex] (\Cref{prop-monotone-lb})} & \makecell{$\displaystyle \Big(\frac{\lg(3)}{4} + o(1)\Big) \frac{n^2}{\lg n}$ \\[2ex] (\Cref{cor-bva-with-preprocessing})} \\ \bottomrule
\end{tabular}
\label{tab:bounds}
\end{table}

Notice that, although a 2-CNF formula can have $\Omega(n^2)$ clauses, it always has an encoding with $O(n^2 / \lg n)$ clauses, and such an encoding can be constructed by idealized BVA. Furthermore, there is no reencoding method that can improve on this by more than a factor of 4. On the other hand, the simplification from \Cref{prop:simple-to-general} yields an improved bound when compared to idealized BVA alone. In fact, \Cref{cor-bva-with-preprocessing} proves that there is an implementation of idealized BVA with simplification that achieves the stated bound whose runtime is $O(n^2)$; note that the time complexity of \textsf{factor}, the BVA implementation used in \textsf{CaDiCaL} and \textsf{Kissat}, is at least $\Omega(n^3)$.

The upper bounds in \Cref{tab:bounds} make essential use of our characterization of idealized BVA in terms of SPRNs. Specifically, \Cref{thm:bva-no-preprocessing} and~\Cref{cor-bva-with-preprocessing} employ a technique used by Nechiporuk~\cite{nechiporuk} to construct small rectifier networks for bipartite graphs; we generalize his result to our setting of polarized rectifier networks, following the line of work of Krapivin et al.~\cite{stoc}, which used similar techniques to extend results about \emph{biclique partitions} (i.e., strict rectifier networks without edges between auxiliary vertices) from bipartite graphs to arbitrary graphs.

\subsection{Upper Bound for BVA With Simplification}
To prove \Cref{cor-bva-with-preprocessing}, we start by proving a similar result for simple 2-CNF formulas:
\begin{restatable}[Formal version of \Cref{thm-2cnf}]{theorem}{thmnechiporukdiagram} \label{thm:nechiporuk-diagram}
    For every simple 2-CNF formula $\varphi$ on $n$ variables, there is a 2-CNF formula $\varphi'$ such that $\varphi \bvachain \varphi'$ and $|\varphi'| \le \left(\frac{\lg(3)}{4} + o(1)\right) \frac{n^2}{\lg n}$.
\end{restatable}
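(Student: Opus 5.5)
By \Cref{thm:characterization}, it suffices to construct an SPRN $\mathcal{R}$ realizing some polarized diagram $(G,p)$ of $G_\varphi$ with $|E(\mathcal{R})| \le (\frac{\lg 3}{4}+o(1))\frac{n^2}{\lg n}$; then $F_\mathcal{R}$ (up to $\cong_W$) is reachable by idealized BVA and has the same number of clauses. The plan is a Nechiporuk-style block construction. Each unordered pair $\{u,v\}$ of vertices of the diagram is in one of three relevant states: no edge, a directed edge (in one of two directions), or an undirected edge. Simplicity guarantees at most one such relation per pair. This suggests that the ``alphabet'' per unordered pair has effectively three useful symbols once we fix an orientation convention, which is where $\lg 3$ should come from.

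\textbf{Step 1: orient and split.} Fix a linear order $x_1<\dots<x_n$ of the variables. Orient every undirected edge $\{x_i,x_j\}$ with $i<j$ as $(x_i,x_j)$ with polarity $-$. A directed edge $(x_i,x_j)$ with $i<j$ is a forward $+$ edge and needs a forward walk. A directed edge with $i>j$ is a backward $+$ edge. Thus the forward relation from $x_i$ to $x_j$ ($i<j$) takes values in $\{\text{none},+,-\}$, which has 3 values, while the backward relation is a plain 0/1 relation. Since a pair carries at most one edge, we could equally encode everything per unordered pair as one of $\{\text{none},+_{\to},+_{\leftarrow},-\}$. The claimed constant, however, suggests a finer scheme. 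Reorient undirected edges so that each unordered pair uses only three states from its point of view: write the pair as the upper-triangle entry $M_{ij}\in\{0,+,-\}$ for $i<j$, where backward positive edges are treated together with negative ones via the orientation freedom of undirected edges. I expect pinning this down to be the delicate bookkeeping step. The target count is $n^2/2$ pair entries over a 3-letter alphabet, contributing $\frac{n^2}{2}\cdot\frac{\lg 3}{2\lg n}$, which matches $\frac{\lg 3}{4}\frac{n^2}{\lg n}$.

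\textbf{Step 2: Nechiporuk blocks.} Partition the target vertices into groups of size $s\approx(1-\epsilon)\lg n/\lg 3$. For each group $T$ and each source $u$, the row pattern of $u$ restricted to $T$ is a word in $\{0,+,-\}^{T}$. Split the sources into strips of size $k$ (roughly $\lg n$ to some power, or $n/\lg^2 n$, chosen by optimization). For each strip and group, introduce one auxiliary vertex per ``elementary'' pattern: the standard trick decomposes each pattern into a fixed family of patterns, and each pattern used is realized by one auxiliary $y$ with edges $u\to y$ from the sources using it and edges $y\to t$ with the appropriate polarity. The count is the sum over sources of the number of patterns each one uses, plus the sum over auxiliaries of $|T|$. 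The first term is about $n\cdot\frac{n}{s}$, which gives $\frac{\lg3}{2}\frac{n^2}{\lg n}$ over full rows; restricting to the upper triangle halves this to $\frac{\lg 3}{4}\frac{n^2}{\lg n}$. The second term is about $\frac{n}{k}\cdot\frac{n}{s}\cdot 3^s s$, which is $o(n^2/\lg n)$ once $3^s\le n^{1-\epsilon}$ and $k$ is large. To keep strictness (condition 1 of an SPRN), each pattern is assigned exactly one auxiliary, and each source uses exactly one auxiliary per group. This gives a unique valid walk per realized pair, and no walk for non-edges because the pattern has 0 there. Sources whose pattern is all 0 get no auxiliary, and unused auxiliaries are deleted so that condition 2 holds.

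\textbf{Step 3: triangular restriction.} To gain the factor $1/2$, apply the blocks only to pairs with $i<j$. Process the variables in consecutive chunks: for a source $x_i$, only groups $T$ entirely above $i$ are used, and the $O(n\lg n)$ leftover diagonal-block pairs become direct edges. Finally, I will check that diagonal walks $u\to y\to u$ never arise, which holds since targets in $T$ lie strictly above the strip. Self-loops are required to have no walks, so I need strips and groups disjoint, which is arranged by indexing.

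The main obstacle is Step 1: justifying that each unordered pair genuinely needs only a 3-symbol alphabet in a single direction. Backward positive edges would naively force a separate lower-triangle pass costing another full $\frac{1}{4}\frac{n^2}{\lg n}$ term. If the reduction to three symbols fails, my first fallback is to handle backward positive edges as a separate 0/1 matrix in the lower triangle. That costs $\frac14\cdot\frac{n^2}{\lg n}$ for the lower triangle plus the 3-symbol upper triangle, so I would then need to re-split the budget (for example, two binary triangles and undirected edges distributed between them) and optimize the resulting constant.
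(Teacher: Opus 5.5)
Your proposal has two genuine gaps. The first is in Step 1, exactly where the paper uses a hypothesis you never invoke. A directed edge $(x_j,x_i)$ with $j>i$ must be realized by a valid walk \emph{from} $x_j$ to $x_i$. A walk from $x_i$ to $x_j$ realizes $x_i\to x_j$ or $x_i\to\overline{x_j}$, never $x_j\to x_i$. So no choice of orientation for the undirected edges lets you fold backward positive edges into the upper-triangle alphabet $\{0,+,-\}$.

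The missing observation is that condition (a) of simplicity makes the directed part of $G_\varphi$ acyclic. A directed cycle in $G_\varphi$ is a cycle of implications among positive literals, which would put them in one strongly connected component of the implication graph. So choose $x_1<\dots<x_n$ to be a topological order of the directed part. Then every directed edge already points forward, you orient undirected edges forward too, and by condition (c) each pair $i<j$ carries exactly one symbol from $\{0,+,-\}$. There are no backward edges at all; this is how the paper begins. Your fallback of a separate $0/1$ lower-triangle pass adds another constant-size term and cannot reach $\frac{\lg 3}{4}$.

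The second gap is that, even after this fix, your count in Step 2 is off by a factor of $2$. With groups of size $s\approx\log_3 n=\lg n/\lg 3$, giving each source one edge into one pattern vertex per group costs $n\cdot\frac{n}{s}\approx \lg 3\cdot\frac{n^2}{\lg n}$ on full rows, not $\frac{\lg 3}{2}\frac{n^2}{\lg n}$. The triangular restriction only brings this down to $\frac{\lg 3}{2}\frac{n^2}{\lg n}$, twice the target. Enlarging $s$ does not help: the pattern-to-target edges cost about $n\cdot 3^s$ per source strip, which forces $s\le(1+o(1))\log_3 n$.

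What you are missing is Nechiporuk's pairing trick, the core of the paper's construction. Your source strips of size about $n/\lg^2 n$ are the right scale, but they should host pairing vertices rather than multiply pattern vertices:
\begin{enumerate}
\item Inside each block $B_\ell$ of size $q\approx n/\log_3^2 n$, add an auxiliary $z_{\{u,u'\}}$ with edges $(u,z_{\{u,u'\}})$ and $(u',z_{\{u,u'\}})$ for every pair $\{u,u'\}\subseteq B_\ell$. This costs only $O(nq)=o(n^2/\lg n)$ edges.
\item For each part $P_i$ and pattern $S\colon P_i\to\{-,0,+\}$, let $A(S)$ be the set of vertices in earlier parts whose pattern on $P_i$ is $S$. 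Pair up the elements of $A(S)\cap B_\ell$ and add a single edge $(z_{\{u,u'\}},y_S)$ per pair, plus one direct edge for an odd leftover (negligible in total).
\end{enumerate}
Each edge into $y_S$ now serves two sources, so the dominant term becomes $\frac12\sum_{i,S}|A(S)|\approx\frac{n^2}{4\log_3 n}=\frac{\lg 3}{4}\frac{n^2}{\lg n}$. Strictness survives: each source lies in exactly one $A(S)$ per later part and in at most one pair there. Hence every edge $(u,v)$ of $(G,p)$ between different parts is realized by the unique walk $(u,z_{\{u,u'\}},y_S,v)$ or $(u,y_S,v)$.
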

\begin{proof}[Proof sketch]
    Let $G_\varphi$ be the associated diagram of $\varphi$. By \Cref{thm:characterization}, it suffices to build an SPRN $\mathcal{R}$ realizing $G_\varphi$ with at most $\left(\frac{\lg(3)}{4} + o(1)\right) \frac{n^2}{\lg n}$ edges. Since $\varphi$ is simple, the directed part of $G_\varphi$ is acyclic, so we can topologically order the vertices $v_1, \dots, v_n$ so that every directed edge $(v_i,v_j)$ satisfies $i < j$. We create a polarized diagram $(G,p)$ of $G_\varphi$ by orienting each $\{v_i,v_j\} \in E(G_\varphi)$ to be $(v_i,v_j)$, where $i < j$. Our SPRN $\mathcal{R} = (V(G), A, E, p')$ will be a realization of $(G,p)$ and thus of $G_\varphi$.

    Let $\log_3 x = \frac{\lg x}{\lg 3}$. Let $q = \lfloor n / \log_3^2 n\rfloor$, and partition $V(G)$ into blocks $B_1, \ldots, B_{\lceil n/q\rceil}$ of size at most $q$; specifically, let $B_i = \{v_{(i-1)\cdot q + 1}, \dots, v_{i\cdot q}\}$ or, if $i \cdot q > n$, a truncation thereof. Within each block $B_i$, for each pair of vertices $\{u, v\} \in \binom{B_i}{2}$ create in $\mathcal{R}$ an auxiliary vertex  $z_{\{u, v\}}$, and create edges $(u,z_{\{u, v\}})$ and $(v,z_{\{u, v\}})$. Now, let $r = \lfloor \log_3 n  - 3\log_3 \log_3 n\rfloor$, and partition as well $V(G)$ into parts  $P_1, \ldots, P_{\lceil n/r\rceil}$ of size at most $r$; specifically, let $P_i = \{v_{(i-1)\cdot r + 1}, \dots, v_{i\cdot r}\}$ or, if $i \cdot r > n$, a truncation thereof. For each part $P_i$ and edge $(u, v) \in E(G) \cap (P_i \times P_i)$, create in $\mathcal{R}$ a directed edge $(u,v)$ with $p'(u,v) = p(u,v)$.

    Now, for each part $P_i$ and each function $S \colon P_i \to \{-,0,+\}$, create in $\mathcal{R}$ an auxiliary vertex $y_S$, and create a directed edge $(y_S,v)$ with $p'(y_S,v) = -$ for every $v \in P_i$ satisfying $S(v) = -$, and create a directed edge $(y_S,v)$ with $p'(y_S,v) = +$ for every $v \in P_i$ satisfying $S(v) = +$. Given a vertex $v \in V(G)$, let $N^\pm(v) = \{w \in V(G) \mid (v,w) \in E(G) \ \text{and} \ p(v,w) = \pm\}$. Then, for each $S \colon P_i \to \{-,0,+\}$, let 
    \[A(S) := \{ v \in P_j \mid j < i, \ N^-(v) \cap P_i = S^{-1}(-), \ \text{and} \ N^+(v) \cap P_i = S^{-1}(+)\}.
    \]
    Finally, for each $\ell \in [ \lceil n/q \rceil]$, let $A_\ell(S) := A(S) \cap B_\ell$, and let $a_1, \ldots, a_k$ be the elements of $A_\ell(S)$ in an arbitrary order. Then, create in $\mathcal{R}$ directed edges $(z_{\{a_{2j-1}, a_{2j}\}}, y_S)$ for $j \in [\lfloor k/2\rfloor]$, and if $k$ is odd, create the directed edge $(a_k,y_S)$. If $\mathcal{R}$ contains any auxiliary vertices not belonging to a valid walk, then delete them and their incident edges.

    This completes the construction of $\mathcal{R}$. The full proof in the appendix proves that $\mathcal{R}$ is a strict rectifier network for $(G,p)$ and has at most $\left(\frac{\lg(3)}{4} + o(1)\right) \frac{n^2}{\lg n}$ edges.
\end{proof}
\noindent
Together with the simplification from \Cref{prop:simple-to-general}, we obtain the following:
\begin{corollary} \label{cor-bva-with-preprocessing}
    Every 2-CNF formula $\varphi$ on $n$ variables has a 2-CNF encoding $\varphi'$ such that $|\varphi'| \le (\frac{\lg(3)}{4} + o(1)) \frac{n^2}{\lg n}$, and moreover, such an encoding can be computed in time $O(n^2)$.
\end{corollary}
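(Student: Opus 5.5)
The corollary follows by combining \Cref{thm:nechiporuk-diagram} with \Cref{prop:simple-to-general}. We instantiate the proposition with $f(n) = \left(\frac{\lg(3)}{4} + o(1)\right)\frac{n^2}{\lg n}$. The additive $O(n)$ term it introduces is $o(n^2/\lg n)$, so it is absorbed into the $o(1)$ and the clause bound is as claimed. The substance is therefore the running time. We must show that the SPRN from the proof sketch of \Cref{thm:nechiporuk-diagram} can be built in time $O(n^2)$, and that each simplification step in \Cref{prop:simple-to-general} also fits in $O(n^2)$.

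For the simplification, the input 2-CNF has $m = O(n^2)$ clauses, since there are only $O(n^2)$ distinct 2-clauses on $n$ variables. We proceed as follows.
\begin{enumerate}
\item Build the implication graph and compute its strongly connected components (Tarjan) in $O(n+m)$. This detects unsatisfiability and performs equivalent literal substitution.
\item Compute the set $L$ by scanning clauses bucketed by their variable pair, and unit propagate in linear time.
\item Find a Horn renaming in linear time, e.g.\ via a 2-SAT formulation as in \Cref{lem:no-pos}.
\end{enumerate}
Reversing the transformations afterwards (appending the equivalences and unit clauses, undoing the renaming) adds $O(n)$ clauses and $O(n^2)$ time.

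For the construction itself, we first compute a topological order of the directed part of $G_\varphi$ in $O(n^2)$. The block gadgets $z_{\{u,v\}}$ contribute $\lceil n/q\rceil \binom{q}{2} = O(nq) = O(n^2/\log^2 n)$ vertices and edges. The part-internal edges are at most $n r$. For each part $P_i$ there are $3^{r} \le n/\log_3^3 n$ functions $S$. Creating every $y_S$ together with its outgoing edges therefore costs $O((n/r)\cdot 3^r\cdot r) = O(n^2/\log^3 n)$.

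The main obstacle is computing all sets $A(S)$ without iterating over all $S$ for every $v$. I would handle this by computing, for each $v$ and each later part $P_i$, the signature $S_{v,i}\colon P_i \to \{-,0,+\}$ read off from $N^\pm(v)\cap P_i$. This signature is encoded as a base-3 integer in $[0, 3^r)$, obtained by scanning the adjacency row of $v$. Over all $i$ this costs $O(n)$ per $v$, hence $O(n^2)$ total. We then append $v$ to the bucket indexed by $(i, S_{v,i}, \ell)$, where $\ell$ is the block of $v$. Each bucket is paired off into edges $(z_{\{a_{2j-1},a_{2j}\}}, y_S)$ in time linear in its size. The total bucket content is at most $n\cdot\lceil n/r\rceil$, and the number of buckets is $O((n/r)\cdot 3^r\cdot n/q) = o(n^2)$. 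Finally, pruning auxiliary vertices not on a valid walk is a reachability computation: forward from base vertices and backward from base vertices. It runs in time linear in the size of $\mathcal{R}$, which is $O(n^2)$.

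Correctness of the resulting SPRN, and the fact that it corresponds via \Cref{thm:characterization} to a formula reachable by idealized BVA, are already given by \Cref{thm:nechiporuk-diagram}. Hence the time analysis above, together with \Cref{prop:simple-to-general}, completes the proof.
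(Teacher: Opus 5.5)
Your proposal is correct and takes the paper's route: combine \Cref{prop:simple-to-general} with \Cref{thm:nechiporuk-diagram}, and absorb the additive $O(n)$ term into the $o(1)$. The paper only asserts that the construction runs in time $O(n^2)$, citing a similar algorithm in the literature; your argument fills that in soundly, by bucketing each vertex according to its base-3 signature on each later part to compute the sets $A_\ell(S)$, and by pruning dangling auxiliary vertices with a single forward/backward reachability pass.
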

\begin{proof}
    It suffices to combine~\Cref{prop:simple-to-general} with~\Cref{thm:nechiporuk-diagram}, and observe from the previous proof that the construction can be carried out in time $O(n^2)$.\footnote{A similar construction, with pseudocode, appears in~\cite[Algorithm 1]{stoc}.}
\end{proof}

\subsection{Lower Bound for BVA With Simplification}

Using an information-theoretic argument, we now show that the bound from \Cref{thm:nechiporuk-diagram} is sharp. We start by bounding the number of 2-CNF formulas with at most $m$ clauses and the number of simple 2-CNF formulas on $n$ variables.

\begin{restatable}{lemma}{lemcountformulas} \label{lem-count-formulas}
    The number of 2-CNF formulas with at most $m$ clauses is at most $2^{(1+o(1)) m \cdot \lg m}$.
\end{restatable}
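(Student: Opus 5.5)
The count goes through a canonical relabeling of the variables. A 2-CNF formula with at most $m$ clauses mentions at most $2m$ distinct variables. Since formulas are sets of clauses over abstract variables, we count them up to renaming, i.e., we may assume the variables are drawn from a fixed pool $\{x_1,\dots,x_{2m}\}$. Without this convention the count is infinite, and this is the reading needed for the information-theoretic lower bound that follows.

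The plan is as follows.

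\textbf{Step 1.} Over the pool $\{x_1,\dots,x_{2m}\}$ there are $4m$ literals. A clause of size exactly $2$ is therefore specified by an ordered pair of literals, so there are at most $(4m)^2 = 16m^2$ possible clauses.

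\textbf{Step 2.} A formula with exactly $k \le m$ clauses can be described by a sequence of $k$ such clauses, which over-counts sets. Summing over $k$ gives at most
\[
\sum_{k=0}^{m} (16m^2)^k \le (m+1)(16m^2)^m
\]
formulas.

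\textbf{Step 3.} Taking logarithms,
\[
\lg\!\bigl((m+1)(16m^2)^m\bigr) = 2m\lg m + O(m).
\]
This exceeds the claimed $(1+o(1))\,m\lg m$ by a factor of $2$.

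That factor of $2$ is the main obstacle, and it has to be removed by sharper counting. The loss comes from allowing each of the two literals per clause to range over all $\Theta(m)$ variables independently. I would instead encode a formula as follows. First, canonically relabel the variables $1,\dots,v$ in order of first appearance when the clauses are listed in sorted order. Then record, for each variable, its degree, i.e., the number of clause occurrences. This degree sequence is a composition of $2k$ and costs only $O(m)$ bits. Given the degrees, the formula is determined by a matching of the $2k$ occurrence slots into pairs plus $2k$ sign bits. The number of perfect matchings on $2k$ slots is
\[
(2k-1)!! = 2^{k\lg k + O(k)},
\]
which yields $2^{(1+o(1))m\lg m}$ after summing over $k \le m$.

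Dividing by the $v!$ relabelings is not needed, because the relabeling is canonical. The point to check carefully is that the map (degree sequence, matching, signs) $\mapsto$ formula is surjective onto the canonically labeled formulas. It is, since every formula arises from some pairing of its own occurrences. Once surjectivity holds, the $O(m)$ overhead terms are absorbed into the $o(1)$ in the exponent, and the bound follows.
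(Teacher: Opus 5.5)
Your matching count is sound in itself. For formulas whose variable set is exactly $\{1,\dots,v\}$, the data (a composition of $2k$, a pairing of the $2k$ slots, $2k$ sign bits) maps onto all of them and gives $2^{k\lg k+O(k)}$. The gap is the canonical relabeling, which changes what you are counting. You set up the count over the fixed pool $\{x_1,\dots,x_{2m}\}$, and that is the right reading for the applications. In \Cref{prop-bva-with-preprocessing-lb,prop-monotone-lb} the base variables $x_1,\dots,x_n$ keep their identities, and distinct $\varphi$ must give distinct $\varphi'$. ``Up to renaming'' and ``over a fixed pool'' are not the same thing, though. Relabeling by order of first appearance is many-to-one on pool formulas, so a bound on canonical forms is not a bound on pool formulas. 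To recover a pool formula you must also record which pool variable each label $1,\dots,v$ stands for. Because your relabeling depends on the formula, this is a general injection $[v]\to[2m]$, with up to $(2m)!$ possibilities, i.e.\ roughly another $2m\lg m$ bits. So your remark about $v!$ has the direction wrong: going back to the pool requires \emph{multiplying} by the number of label assignments. As written, your final bound controls only formulas up to renaming, which is strictly weaker and does not plug into the lower-bound arguments. The repair is cheap. Relabel order-preservingly, so only the support $U\subseteq\{x_1,\dots,x_{2m}\}$ must be recorded ($2^{2m}$ choices). Alternatively, drop the relabeling and record degrees over all $2m$ pool variables as a weak composition of $2k$ ($\binom{2k+2m-1}{2m-1}\le 2^{4m}$ choices). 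Either costs only $O(m)$ bits.

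Your diagnosis of the factor $2$ is also off, and correcting it makes the detour unnecessary. The loss in Step~2 does not come from the literals ranging over $\Theta(m)$ variables. It comes from counting \emph{sequences} of clauses. A formula is a set of distinct clauses, so each $k$-clause formula is counted exactly $k!$ times, and $\lg k! = k\lg k - O(k)$ is exactly the surplus $m\lg m$. Counting sets directly gives $\sum_{k\le m}\binom{16m^2}{k}\le (m+1)(16em)^m = 2^{m\lg m+O(m)}$. This is the paper's proof, which bounds the count by $\binom{(4m)^2}{m+1}\le\bigl(e(4m)^2/(m+1)\bigr)^{m+1} = 2^{(1+o(1))m\lg m}$.
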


\begin{restatable}{lemma}{lemcountsimple} \label{lem-count-simple}
    The number of simple 2-CNF formulas on $n$ variables is at least $3^{(1/2-o(1)) n^2}$.
\end{restatable}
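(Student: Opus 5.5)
We have to show there are at least $3^{(1/2-o(1))n^2}$ simple 2-CNF formulas on $n$ variables. The plan is to exhibit an explicit family, indexed by all choices of one of three states for each pair of variables, all of whose members are simple.

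Fix variables $x_1,\dots,x_n$ in this order. For each pair $i<j$, choose independently one of three options:
\begin{itemize}
\item no clause on $\{x_i,x_j\}$;
\item the clause $(\overline{x_i}\lor x_j)$, i.e.\ the implication $x_i\to x_j$;
\item the clause $(\overline{x_i}\lor\overline{x_j})$.
\end{itemize}
This gives $3^{\binom n2}$ formulas. They are pairwise distinct, since each formula determines the option chosen for each pair. Since $3^{\binom n2}=3^{(1/2-o(1))n^2}$, it suffices to check that every formula in the family is simple.

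\textbf{Conditions (c) and (b).} Condition (c) holds because at most one clause is placed on each pair of variables. Condition (b) holds because every clause contains the negative literal $\overline{x_i}$, so each formula is Horn.

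\textbf{Condition (a).} This is the main step. The implication-graph edges are:
\begin{itemize}
\item from $(\overline{x_i}\lor x_j)$: the edges $x_i\to x_j$ and $\overline{x_j}\to\overline{x_i}$;
\item from $(\overline{x_i}\lor\overline{x_j})$: the edges $x_i\to\overline{x_j}$ and $x_j\to\overline{x_i}$.
\end{itemize}
Every edge therefore goes either from a positive literal to a literal, or from a negative literal to a negative literal. Hence no path ever leaves the set of negative literals and returns to a positive one, so any cycle consists entirely of positive literals or entirely of negative literals.

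\begin{itemize}
\item Among positive literals, the only edges are $x_i\to x_j$ with $i<j$.
\item Among negative literals, the only edges are $\overline{x_j}\to\overline{x_i}$ with $i<j$.
\end{itemize}
Both subgraphs are acyclic by the index order. So the implication graph is acyclic, and every strongly connected component is a single literal, which is condition (a).

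The obstacle is minor. One might worry that the definition implicitly requires every variable to occur in $\textsf{Var}(\varphi)$. Ignoring this is harmless: alternatively, force each $x_i$ with $i<n$ to have a clause with $x_{i+1}$, losing only a factor $3^{O(n)}$, which is absorbed in the $o(1)$.
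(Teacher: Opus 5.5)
Your proposal is correct and uses the same construction as the paper's proof: for each pair $i<j$ you make a three-way choice among $(\overline{x_i}\lor\overline{x_j})$, $(\overline{x_i}\lor x_j)$, or no clause, which gives $3^{\binom{n}{2}}$ distinct simple formulas. Your explicit check of condition (a) fills in a step the paper leaves unstated: no implication-graph edge goes from a negative literal to a positive one, and within each polarity the edges follow the index order.
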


\noindent
Now, we can combine these two lemmas to prove the lower bound:

\begin{proposition} \label{prop-bva-with-preprocessing-lb}
    There is a simple 2-CNF formula $\varphi$ on $n$ variables such that, for every $\varphi'$ with $\varphi \bvachain \varphi'$, we have $|\varphi'| \ge \left(\frac{\lg(3)}{4} - o(1)\right) \frac{n^2}{\lg n}$.
\end{proposition}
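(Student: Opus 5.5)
We argue by counting. Fix $n$ and let $m$ be the largest size such that every simple 2-CNF formula $\varphi$ on $n$ variables (say on a fixed variable set $X=\{x_1,\dots,x_n\}$) admits some $\varphi'$ with $\varphi \bvachain \varphi'$ and $|\varphi'| \le m$. The plan is to show that the map $\varphi \mapsto \varphi'$ (choosing one such $\varphi'$ for each $\varphi$) is injective. Combining \Cref{lem-count-simple} with \Cref{lem-count-formulas} then gives
\[
3^{(1/2-o(1))n^2} \le 2^{(1+o(1)) m \lg m},
\]
that is, $m \lg m \ge (\frac{\lg 3}{2} - o(1)) n^2$. Since trivially $m \le \binom{n}{2}$, we have $\lg m \le 2\lg n$, and hence $m \ge (\frac{\lg 3}{4}-o(1))\frac{n^2}{\lg n}$. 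This inequality is the statement.

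\textbf{Step 1: injectivity.} By \Cref{lemma:bva-encoding}, if $\varphi \bvachain \varphi'$ then $\varphi'$ encodes the same boolean function over the base variables $X$ as $\varphi$. So two distinct $\varphi_1,\varphi_2$ can share the same image only if they are logically equivalent over $X$. We must therefore check that distinct simple 2-CNF formulas on the same variable set are inequivalent.

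Suppose $\varphi_1 \neq \varphi_2$ are simple, and take a clause $C=(\ell_1\lor \ell_2)\in\varphi_1\setminus\varphi_2$. If $\varphi_2 \models C$, then, since $\varphi_2$ is satisfiable 2-CNF (simple formulas are Horn with no complementary literals in one SCC, so they are satisfiable), $C$ is derivable by resolution. Concretely, $\overline{\ell_1}\to\ell_2$ is a path in the implication graph of $\varphi_2$, or a unit is implied. We then want to derive a contradiction with the simplicity of $\varphi_1$ using the fact that $\varphi_1$ implies everything in $\varphi_2$.

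\textbf{Main obstacle.} Simple formulas need not be closed or reduced: a simple formula may contain transitive clauses such as $x_1\to x_2$, $x_2\to x_3$, $x_1\to x_3$, and deleting $x_1\to x_3$ gives a different simple formula that is equivalent. So the map may fail to be injective on all simple formulas. I would handle this by restricting to a subfamily that is still large. For example, take formulas in which every clause involves a variable from $L=\{x_1,\dots,x_{n/2}\}$ and a variable from $R$, with each cross pair $(x_i,x_j)$, $i\in L$, $j\in R$, assigned one of three states: no clause, $(\overline{x_i}\lor x_j)$, or $(\overline{x_i}\lor\overline{x_j})$. Implication paths there have length one, so these formulas are simple and pairwise inequivalent: any derived implication between base literals must come from a single clause, because in $\varphi$ no nontrivial chains exist (positive literals of $R$ have no outgoing edges, and the negative literals reached are sinks as well). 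This family has only $3^{n^2/4}$ members, which would lose a factor of 2 in the constant. So instead I would use the paper's count from \Cref{lem-count-simple} directly, and argue that its construction (presumably via a topological order where the bulk of formulas are "transitively irredundant") can be refined to an inequivalent subfamily of size $3^{(1/2-o(1))n^2}$. Alternatively, I would avoid needing logical inequivalence altogether: \Cref{thm:characterization} shows that $\varphi'$ determines an SPRN whose realized diagram is exactly $G_\varphi$, up to negating auxiliary variables. Then $\varphi$ is recoverable from $\varphi'$ by reading valid walks between base vertices, and this gives injectivity for free. I expect this second route, recovering $G_\varphi$ from $\varphi'$ via \Cref{thm:characterization}, to be the clean one.

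Concretely, from $\varphi'$ we read off $W=\textsf{Var}(\varphi')\setminus X$ and fix the polarity of auxiliary variables up to $\cong_W$. This can be done canonically, since each auxiliary variable must appear positively in its incoming clauses along valid walks. Reading valid walks then reconstructs $(G,p)$ and hence $G_\varphi$ and $\varphi$. The $2^{|W|}$ possible renamings contribute at most a $2^{O(m)}$ factor, which is absorbed into $2^{o(m\lg m)}$ in any case.
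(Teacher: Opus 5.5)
Your committed route is correct and is essentially the paper's proof: you recover $\varphi$ syntactically from $\varphi'$ via the valid walks of the SPRN from \Cref{thm:characterization}, absorb the at most $2^{|W|}\le 2^{2m}$ renamings into $2^{o(m\lg m)}$, and compare \Cref{lem-count-simple} with \Cref{lem-count-formulas}; the paper gets the same injectivity more directly by Davis--Putnam elimination of the auxiliary variables in $\varphi'$ itself, which inverts the BVA steps and ignores the polarity of eliminated variables, so neither the characterization theorem nor a renaming factor is needed. Do discard your other alternative, since it cannot work: there are only $2^{(1/2+o(1))n^2}$ distinct 2-CNF functions on $n$ variables (Bollob\'as--Brightwell--Leader; Allen), so no family of $3^{(1/2-o(1))n^2}$ pairwise inequivalent simple formulas exists, and injectivity must come from syntactic recovery rather than from \Cref{lemma:bva-encoding}.
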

\begin{proof}
    Let $g(m)$ be the number of 2-CNF formulas with at most $m$ clauses, and let $h(n)$ be the number of simple 2-CNF formulas on $n$ variables. By \Cref{lem-count-formulas,lem-count-simple}, $\lg g(m) \le (1+o(1)) m \cdot \lg m$ and $\lg h(n) \ge (\lg(3)/2 - o(1)) n^2$.

    If $\varphi \bvachain \varphi'$, then performing variable elimination (in the Davis--Putnam sense) on the auxiliary variables in $\varphi'$ yields $\varphi$. In particular, $\varphi'$ has enough information to uniquely reconstruct $\varphi$. Thus, if for every simple 2-CNF formula $\varphi$ on $n$ variables, there is a 2-CNF formula $\varphi'$ with at most $m$ clauses such that $\varphi \bvachain \varphi'$, then $g(m) \ge h(n)$, and consequently $\lg g(m) \geq \lg h(n)$.
    If $m \le (r+o(1))n^2/\lg n$ for some constant $r$, then $\lg m \le (2+o(1))\lg n$, and thus
    \[
        (\lg(3)/2 - o(1)) n^2 \le \lg h(n) \le \lg g(m) \le (1+o(1)) m \cdot \lg m \le (2r+o(1)) n^2,
    \]
    from where $r \ge \lg(3)/4$; that is, $m \ge (\frac{\lg(3)}{4} - o(1)) n^2/\lg n$.
\end{proof}

\subsection{Monotone Formulas}

The proof of \Cref{prop-bva-with-preprocessing-lb} exploits the fact that idealized BVA reencodes distinct simple 2-CNF formulas into distinct encodings. To get a lower bound for arbitrary reencoding methods, which may not have this property, we focus on monotone 2-CNF formulas:
\begin{restatable}{proposition}{lowerbound} \label{prop-monotone-lb}
    There is a monotone 2-CNF formula with $n$ variables whose smallest 2-CNF encoding has at least $(\frac{1}{4}-o(1)) \frac{n^2}{\lg n}$ clauses.
\end{restatable}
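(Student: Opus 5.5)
We use a counting argument, as in \Cref{prop-bva-with-preprocessing-lb}, but now the encoding map need not come from idealized BVA, so injectivity must be argued differently. We restrict to antitone formulas $\bigwedge_{\{i,j\} \in E(H)} (\overline{x_i} \lor \overline{x_j})$, one for each graph $H$ on vertex set $[n]$; the monotone case is symmetric after negating all variables. There are $2^{\binom{n}{2}} = 2^{(1/2 - o(1))n^2}$ such formulas.

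The first step is to show that any encoding determines the formula it encodes. An encoding $\varphi'$ over base variables $X = \{x_1,\dots,x_n\}$ determines the boolean function $f(\tau) = \textsf{SAT}(\varphi'|_\tau)$. Distinct graphs $H \ne H'$ give distinct functions: if $\{i,j\} \in E(H) \setminus E(H')$, the assignment setting exactly $x_i, x_j$ to $\top$ falsifies the first formula and satisfies the second. Hence the map from graphs to their smallest 2-CNF encodings is injective. This is the key point, and it needs no structure of the reencoding method at all.

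The second step is to bound the number of possible encodings. A 2-CNF formula with at most $m$ clauses uses at most $2m$ variables. Up to renaming the auxiliary variables, we may assume they are drawn from a fixed pool of size $2m$, and the base variables are the fixed $x_1,\dots,x_n$. Renaming auxiliaries does not change the encoded function, so this normalization preserves the injectivity argument: we map each graph to a canonical (e.g.\ lexicographically least) representative. \Cref{lem-count-formulas} then bounds the number of such formulas by $2^{(1+o(1)) m \lg m}$. If \Cref{lem-count-formulas} counts formulas over arbitrary labels, we instead apply it directly after fixing labels in $[n + 2m]$; the count is then at most $\binom{4(n+2m)^2}{\le m} \le 2^{(1+o(1)) m \lg m}$ whenever $m \gg n$, which is the relevant regime since otherwise the bound is trivial.

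Finally, we compare the two counts. If every graph had an encoding with $m \le (r+o(1)) n^2/\lg n$ clauses, then $\lg m \le (2+o(1))\lg n$, giving
\[
(1/2 - o(1)) n^2 \le (1+o(1)) m \lg m \le (2r + o(1)) n^2 ,
\]
so $r \ge 1/4$. The main obstacle is the normalization of auxiliary variable names in the counting step: one must ensure the count depends only on $m$ and not on unbounded labels, and that the $n$ base labels are absorbed into the $o(1)$ term. Fixing labels to $[n+2m]$ handles both, since $\lg(n+2m) = (1+o(1))\lg m$ in this regime.
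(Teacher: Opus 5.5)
Your proof is correct and follows the paper's argument: compare the $2^{\binom{n}{2}}$ distinct (anti)monotone 2-CNF functions against the $2^{(1+o(1))m\lg m}$ bound of \Cref{lem-count-formulas}, then solve for $m$. Your explicit injectivity check and your normalization of labels to $[n+2m]$ (base variables fixed, only auxiliaries renamed) make precise a point the paper's counting lemma leaves implicit, at no asymptotic cost since $m \ge n$ in the relevant regime.
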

\noindent
In fact, idealized BVA can match this lower bound:
\begin{restatable}[Formal version of \Cref{thm-2cnf-monotone}]{theorem}{idealizedbvamonotone} \label{thm:idealized-bva-monotone}
    For every monotone 2-CNF formula $\varphi$ on $n$ variables, there is a 2-CNF formula $\varphi'$ such that $\varphi \bvachain \varphi'$ and $|\varphi'| \le \left(\frac{1}{4} + o(1)\right) \frac{n^2}{\lg n}$.
\end{restatable}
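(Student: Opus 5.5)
We reduce to the antitone case, build an SPRN with only negative polarities, and then apply \Cref{thm:characterization}. The construction is the one in the proof of \Cref{thm:nechiporuk-diagram}. The gain from $\frac{\lg(3)}{4}$ to $\frac14$ comes from one fact: with a single polarity, each part $P_i$ needs only $2^{|P_i|}$ patterns rather than $3^{|P_i|}$, so parts can have size about $\lg n$ instead of about $\log_3 n$.

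\textbf{Reduction to the antitone case.} Suppose $\varphi$ is monotone with all clauses positive. Let $\psi$ be obtained by replacing every variable by its negation. A BVA step on $\varphi$ with sets $\mathcal{C},\mathcal{D}$ corresponds to a BVA step on $\psi$ with the renamed sets, using the same fresh variable. Hence $\psi \bvachain \psi'$ implies $\varphi \bvachain \varphi'$, where $\varphi'$ is $\psi'$ with the base variables renamed back, and $|\varphi'|=|\psi'|$. So assume every clause has the form $(\overline{x_i}\lor\overline{x_j})$.

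Such a $\varphi$ is simple. Each implication edge $x_i\to\overline{x_j}$ ends at a negative literal, and negative literals have no outgoing edges, so there are no nontrivial strongly connected components. Every clause is Horn, and a set of clauses over distinct variable pairs is automatically distinct. The diagram $G_\varphi$ is therefore an undirected graph on $v_1,\dots,v_n$. Orient each edge $\{v_i,v_j\}$ with $i<j$ as $(v_i,v_j)$; every polarity is then $-$.

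\textbf{Construction.} Take blocks $B_\ell$ of size $q=\lfloor n/\lg^2 n\rfloor$ and put an auxiliary vertex $z_{\{u,w\}}$ for every pair inside a block, with edges $(u,z_{\{u,w\}})$ and $(w,z_{\{u,w\}})$. Take parts $P_i$ of size $r=\lfloor \lg n-3\lg\lg n\rfloor$.
\begin{itemize}
\item Inside each part, copy its edges directly.
\item For each part $P_i$ and each subset $S\subseteq P_i$, add a vertex $y_S$ with a negative edge $(y_S,v)$ for every $v\in S$.
\item Let $A(S)$ be the set of vertices $v$ in earlier parts $P_j$, $j<i$, with $N(v)\cap P_i=S$. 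Within each block, pair up the elements of $A(S)$: each pair $\{a,b\}$ contributes the edge $(z_{\{a,b\}},y_S)$, and a leftover element $a$ contributes $(a,y_S)$.
\item Delete dangling auxiliary vertices.
\end{itemize}

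\textbf{Realization and strictness.} Take $u<v$ base vertices with $u\in P_j$ and $v\in P_i$.
\begin{itemize}
\item If $i=j$, the only valid walk is the direct edge, present exactly when $uv$ is an edge.
\item If $j<i$, then $u$ lies in exactly one $A(S)$ with $S\subseteq P_i$. Its unique route to $y_S$ is through its own pair vertex $z$ or directly, and $y_S\to v$ exists exactly when $v\in S$, i.e.\ when $uv\in E$. Other targets of $z$ are vertices $y_{S'}$ for parts $P_{i'}\neq P_i$, so they give no second walk to $v$.
\item Auxiliary vertices never point to earlier base vertices and $z$-vertices have no edges to base vertices, so there are no walks from $v$ to $u$, and none from $u$ to $u$.
\end{itemize}
Thus $\mathcal{R}$ is an SPRN realizing $G_\varphi$, and \Cref{thm:characterization} yields $\varphi \bvachain \varphi'$ with $|\varphi'|=|E(\mathcal{R})|$.

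\textbf{Counting edges.} This is the main step to verify.
\begin{itemize}
\item The dominant term is the edges into the $y_S$. Each (vertex, later part) pair contributes about half an edge thanks to the pairing. There are about $\frac{n}{r}\cdot\frac{n}{2}$ such pairs, giving $(\frac14+o(1))\frac{n^2}{r}=(\frac14+o(1))\frac{n^2}{\lg n}$.
\item Leftover edges number at most one per (part, $S$, block), i.e.\ $\frac{n}{r}\,2^r\frac{n}{q}=O(n\lg n\cdot n/\lg^3 n)$, which is $o(n^2/\lg n)$.
\item Edges out of the $y_S$ number at most $\frac{n}{r}\,r2^r=n2^r=n^2/\lg^3 n$.
\item Edges into $z$-vertices number $O(nq)=O(n^2/\lg^2 n)$.
\item Edges inside parts number $O(nr)$.
\end{itemize}
All but the first term are $o(n^2/\lg n)$, which gives the claimed bound.
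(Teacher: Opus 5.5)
Your proposal is correct and follows essentially the same route as the paper. You reduce to the antitone (hence simple) case, then reuse the Nechiporuk-style construction from the proof of \Cref{thm:nechiporuk-diagram}, with subsets $S \subseteq P_i$ in place of functions $P_i \to \{-,0,+\}$ and with $q = \lfloor n/\lg^2 n\rfloor$, $r = \lfloor \lg n - 3\lg\lg n\rfloor$, so that the paired edges into the $y_S$ give the dominant $(\frac14+o(1))\frac{n^2}{\lg n}$ term. You also spell out the justification for the reduction, the strictness check and the edge count, which the paper only describes as ``very similar'' to the earlier proof.
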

\noindent
This strengthens a result of Subercaseaux~\cite[Theorem~10]{subercaseaux2025asymptoticallysmallerencodingsgraph}, who proved that every monotone 2-CNF formula has a 2-CNF encoding with $O(n^2 / \lg n)$ clauses.

Allen~\cite{allen2cnf} proved the remarkable result that almost all 2-CNF functions are monotone after possibly negating some input variables (i.e., \emph{unate}). Together with \Cref{thm:idealized-bva-monotone}, this implies that almost every 2-CNF function on $n$ variables has a 2-CNF encoding with at most $(\frac{1}{4} + o(1)) n^2/\lg n$ clauses, matching the lower bound from \Cref{prop-monotone-lb}. On this basis, we conjecture that this bound holds for \emph{all} 2-CNF functions, which would imply that there is a reencoding method improving idealized BVA with simplification:
\begin{conjecture}
    Every 2-CNF formula $\varphi$ on $n$ variables has a 2-CNF encoding $\varphi'$ such that $|\varphi'| \le (\frac{1}{4} + o(1)) \frac{n^2}{\lg n}$.
\end{conjecture}

\section{Encodings for \textsf{AtMostOne}} \label{sec-amo}
The previous section showed that idealized BVA performs well from the perspective of worst-case analysis. But many 2-CNF formulas that arise in practice are highly structured and can therefore be encoded with far fewer than $\Theta(n^2 / \lg n)$ clauses. Our framework also allows us to analyze the operation of idealized BVA on specific 2-CNF formulas. We demonstrate this by studying encodings of the most structured 2-CNF formula of all, one which also frequently arises in practice, namely the at-most-one constraint:
\[
    \textsf{AtMostOne}(x_1,\dots,x_n) \coloneqq \bigwedge_{1 \le i < j \le n} (\overline{x_i} \lor \overline{x_j}).
\]
Actual implementations of BVA reencode this formula into one with $3n-6$ clauses~\cite{mantheyAutomatedReencodingBoolean2012}; while this has been empirically observed before, we formalize and prove this fact in \Cref{sec-amo-in-practice}. By \Cref{thm:characterization}, a BVA-constructible encoding for this formula corresponds to an SPRN realizing $K_n$. \Cref{fig:optimal-amo} depicts the corresponding SPRN for the encoding with $3n-6$ edges.

\begin{figure}[ht]
    \centering
    \begin{tikzpicture}[scale=1.2]

\tikzset{
  basenode/.style   ={circle, draw=none, fill=blue!35, minimum size=5mm, inner sep=0pt, font=\small},
  auxnode/.style    ={circle, draw=none, fill=orange!70, minimum size=5mm, inner sep=0pt, font=\small},
  >={Latex},
  e/.style          ={blue, line width=0.55pt},
  en/.style         ={black, line width=0.55pt},
  er/.style         ={red, line width=0.55pt}
}


\node[basenode] (x1)  at (0, -2) {$1$};

\node[basenode] (x2)  at (1, 0) {$2$};
\node[basenode] (x3)  at (1, -1.25) {$3$};
\node[auxnode]  (y1)  at (2, -2) {$1$};

\node[basenode] (x4)  at (3, 0) {$4$};
\node[basenode] (x5)  at (3, -1.25) {$5$};
\node[auxnode]  (y2)  at (4, -2) {$2$};

\node[basenode] (x6)  at (5, 0) {$6$};
\node[basenode] (x7)  at (5, -1.25) {$7$};
\node[auxnode]  (y3)  at (6, -2) {$3$};

\node (dots1) at (7.25, 0) {$\cdots$};
\node (dots2) at (7.25, -1.25) {$\cdots$};
\node (dots3) at (7.25, -2) {$\cdots$};

\node[auxnode]  (yk)  at (8.5, -2) {\scalebox{0.65}{$\left\lfloor\!\tfrac{n-3}{2}\!\right\rfloor$}};
\node[basenode] (xn2) at (9.5, 0) {\scalebox{0.65}{$n\!-\!2$}};
\node[basenode] (xn1) at (9.5, -1.25) {\scalebox{0.65}{${n\!-\!1}$}};
\node[basenode] (xn)  at (10.5, -2) {${n}$};


\draw[er, ->] (x1) -- (x2);
\draw[er, ->] (x1) -- (x3);
\draw[en, ->] (x1) -- (y1);
\draw[er, ->] (x2) -- (x3);
\draw[en, ->] (x2) -- (y1);
\draw[en, ->] (x3) -- (y1);

\draw[er, ->] (y1) -- (x4);
\draw[er, ->] (y1) -- (x5);
\draw[en, ->] (y1) -- (y2);
\draw[er, ->] (x4) -- (x5);
\draw[en, ->] (x4) -- (y2);
\draw[en, ->] (x5) -- (y2);

\draw[er, ->] (y2) -- (x6);
\draw[er, ->] (y2) -- (x7);
\draw[en, ->] (y2) -- (y3);
\draw[er, ->] (x6) -- (x7);
\draw[en, ->] (x6) -- (y3);
\draw[en, ->] (x7) -- (y3);

\draw[er, ->] (yk) -- (xn2);
\draw[er, ->] (yk) -- (xn1);
\draw[er, ->] (yk) -- (xn);
\draw[er, ->] (xn2) -- (xn1);
\draw[er, ->] (xn2) -- (xn);
\draw[er, ->] (xn1) -- (xn);

\draw[er, dashed, ->] (y3) -- (6.8, -0.3);

\draw[en, dashed, ->] (y3) -- (6.9, -2);

\draw[er, dashed, ->] (y3) -- (6.7, -1.45);
\draw[en, dashed, ->] (7.7, -0.3) -- (yk);
\draw[en, dashed, ->] (7.8, -1.45) -- (yk);

\draw[en, dashed, ->] (7.6,-2.0) -- (yk);

\end{tikzpicture}
    \caption{SPRN realizing $K_n$ with $3n-6$ edges}
    \label{fig:optimal-amo}
\end{figure}
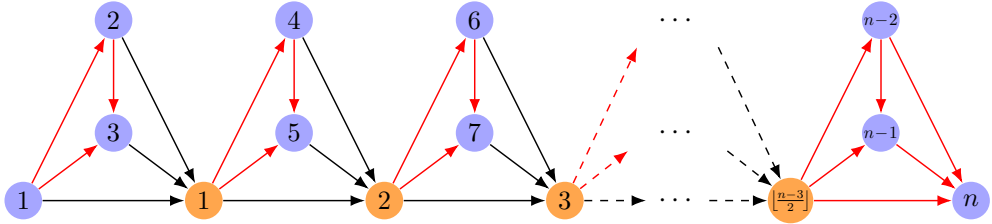

Using \Cref{thm:characterization}, we prove that $3n-6$ clauses is the smallest possible encoding for this constraint achievable by idealized BVA:
\begin{theorem}[Formal version of \Cref{thm:amo}] \label{thm:amo-formal}
    For every $\varphi$ with $\textsf{AtMostOne}(x_1,\dots,x_n) \bvachain \varphi$, we have $|\varphi| \ge 3n-6$.
\end{theorem}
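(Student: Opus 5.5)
By \Cref{thm:characterization}, it suffices to show that every SPRN $\mathcal{R}=(B,A,E,p)$ realizing $K_n$ has at least $3n-6$ edges. Here $|B|=n$. Realizing $K_n$ forces three things. First, every pair $\{u,v\}\in\binom{B}{2}$ is joined by a negative valid walk in exactly one direction. Second, there are no positive valid walks between base vertices. Third, there is no valid walk from a base vertex to itself with polarity matching a nonexistent edge; more precisely, there are no closed valid walks at all, since $(u,u)\notin E(G)$ and any such walk would have some polarity.

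By strictness, each pair is covered by exactly one valid walk. Because there are no positive walks, every edge entering a base vertex is negative. I would proceed by induction on $n+|A|$. For the base cases $n\le 3$ the bound is trivial, since $3n-6\le 3$ and every pair needs some edge; more precisely, $n=3$ needs at least $3$ edges, as one can check directly.

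For the inductive step, I would first normalize $\mathcal{R}$. Every auxiliary vertex lies on a valid walk, so it has positive in-degree and out-degree. An auxiliary vertex $a$ with in-degree $1$ (or out-degree $1$) can be contracted into its unique neighbour. This keeps the set of valid walks between base vertices the same (strictness is preserved) and decreases the edge count. So I may assume every auxiliary vertex has in-degree and out-degree at least $2$.

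The main step is to find a base vertex $v$ whose deletion, together with a cleanup, removes at least $3$ edges and leaves an SPRN realizing $K_{n-1}$. I would pick $v$ to be the \emph{last} vertex in the tournament that $\mathcal{R}$ induces on $B$. This tournament must be transitive: the reachability graph through auxiliary vertices is acyclic, because a cycle would yield a closed valid walk or a duplicate walk. So I take $v$ to be the sink, meaning every pair containing $v$ is realized by a walk ending at $v$. Deleting $v$ removes its $\deg^-(v)\ge 1$ in-edges. It may also leave auxiliary vertices whose out-degree drops to $1$, or to $0$ (these become dangling and are deleted), and contracting these saves extra edges. The counting I expect: if $\deg^-(v)\ge 3$, we are done immediately. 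If $\deg^-(v)\in\{1,2\}$, then some in-neighbour $a$ of $v$ must reach many base vertices. Such an $a$ is auxiliary, since otherwise it would cover only one pair. Here strictness is used: the base vertices that reach $a$ must not reach $v$ by any other route. After removing $v$, $a$'s out-degree drops by one, and I would argue that either $a$ becomes contractible or a second in-neighbour exists, yielding the missing savings.

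The main obstacle is this local counting when $\deg^-(v)$ is small, since the savings must come from the auxiliary structure around $v$. If the sink choice does not work directly, my fallback is a global double-counting argument. Assign each auxiliary vertex $a$ its in-set $I(a)$ and out-set $O(a)$ of base vertices. Strictness makes the bicliques $I(a)\times O(a)$ over the ``last-hop'' auxiliary vertices partition the pairs, and transitivity/acyclicity forces a laminar structure. Then I would prove $|E|\ge 3n-6$ by induction on this laminar tree, in the style of the known $3n-6$-type bounds for planar or biclique-partition structures, checking that the chain of Figure~\ref{fig:optimal-amo} is exactly the extremal case.
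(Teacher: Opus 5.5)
\paragraph{Verdict.} The proposal has a genuine gap: its inductive step needs a sink of the tournament on $B$, and such a sink need not exist.

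\paragraph{The gap.} Your reduction to SPRNs via \Cref{thm:characterization} and your contraction step (this is \Cref{lem-contract}) are fine. The problem is the claim that the tournament $\mathcal{R}$ induces on $B$ is transitive. It need not be, and it need not have a sink. Valid walks cannot have base vertices in their interior. So a directed cycle $u_1\to u_2\to u_3\to u_1$ in the tournament, with each arc realized by its own valid walk, does not concatenate into a closed or duplicate valid walk. The only acyclicity that strictness gives you is that of the subgraph induced on $A$.

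Concretely, the negative edges $(x_1,x_2),(x_2,x_3),(x_3,x_1)$ form an SPRN realizing $K_3$. Replacing the transitive triangle on $x_{n-2},x_{n-1},x_n$ in \Cref{fig:optimal-amo} by such a directed triangle gives an SPRN realizing $K_n$ with exactly $3n-6$ edges. It has no auxiliary vertex of in- or out-degree $1$, and no sink. So even for a minimum normalized network your step can have no vertex to delete. Your fallback laminar-tree argument relies on the same acyclicity, and it is too vague to close the gap.

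\paragraph{What to keep, and how the paper avoids the issue.} Your local analysis at a vertex $v$ with only incoming edges can be completed, but its key step is missing.
\begin{itemize}
\item In-degree $1$ is impossible after normalization, by the mirror image of \Cref{lem-deg-1}. In-degree at least $3$ gives the three-edge saving directly.
\item Suppose $v$ has exactly two in-neighbours $a_1,a_2$. They split $B\setminus\{v\}$ into sets $U_1,U_2$.
\item If both $a_i$ are auxiliary, a second out-neighbour of $a_1$ leads to some $w\in U_2$, and one of $a_2$ leads to some $w'\in U_1$. This gives valid walks $(w',\dots,a_1,\dots,w)$ and $(w,\dots,a_2,\dots,w')$ in opposite directions, contradicting strictness.
\item If $a_2$ is a base vertex, strictness forces $a_1$ to have out-degree exactly $2$. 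Then $a_1$ becomes contractible once $v$ is deleted, giving the third saved edge.
\end{itemize}
This crossing argument is the mirror image of the one in the paper.

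The paper avoids the tournament altogether. It deletes any base vertex of total degree at least $3$ (\Cref{lem-deg-3}), and contracts at total degree $1$. It handles two adjacent degree-$2$ base vertices by \Cref{lem-deg-2}. In the remaining case it derives the crossing contradiction at a base vertex whose two edges both go to auxiliary vertices.

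One caution if you repair your proof along these lines. The crossing argument needs a vertex whose two edges point the same way. At a vertex with edges $(y_1,x)$ and $(x,y_2)$, the two walks produced go in the same direction and may be a single walk through $y_1$ and then $y_2$, so no contradiction follows. A suitable vertex exists among the base in-neighbours of a source $s$ of the acyclic auxiliary subgraph. Indeed, two base vertices whose only out-edge goes to $s$ cannot be joined by a valid walk without creating a closed one.
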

\noindent
Note that there are 2-CNF encodings for $\textsf{AtMostOne}(x_1,\dots,x_n)$ using $2n + o(n)$ clauses~\cite{chen2010new}, and it was previously unknown whether idealized BVA could construct these. Thus, for a reencoding method to encode $\textsf{AtMostOne}(x_1,\dots,x_n)$ using fewer than $3n-6$ clauses, it must be able to create encodings that cannot be expressed as SPRNs.

We start with some lemmas that will help us establish properties of a minimal counterexample to \Cref{thm:amo-formal}.

\begin{restatable}{lemma}{lemcontract} \label{lem-contract}
    Let $\mathcal{R}$ be an SPRN with $m$ edges realizing a diagram $G$, and let $y$ be an auxiliary vertex. If the in-degree or out-degree of $y$ is 1, then there is an SPRN realizing $G$ with at most $m-1$ edges.
\end{restatable}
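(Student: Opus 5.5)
The plan is to contract $y$ away. By symmetry, suppose first that $y$ has in-degree $1$, with unique in-neighbor $w$ and out-neighbors $z_1,\dots,z_k$. Note $k \ge 1$: by the second SPRN condition $y$ lies on some valid walk, and $y$ is never an endpoint of such a walk. Form $\mathcal{R}'$ by deleting $y$ and its $k+1$ incident edges and adding the edges $(w,z_i)$ for $i \in [k]$. Polarities are inherited: whenever $z_i \in B$, set $p'(w,z_i) := p(y,z_i)$. This removes one edge. If some $(w,z_i)$ already exists in $E$, we drop the duplicate, which only lowers the count further. The loop case $w = z_i$ is handled below. The out-degree-$1$ case is symmetric: with unique out-neighbor $z$, replace the edges $(x_j,y)$ and $(y,z)$ by the edges $(x_j,z)$, giving each $(x_j,z)$ the polarity $p(y,z)$ when $z \in B$.

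Next, show that valid walks correspond bijectively, preserving endpoints and polarity. Every valid walk in $\mathcal{R}$ that passes through $y$ must pass through it as $\dots,w,y,z_i,\dots$, since $y$ is auxiliary and therefore interior. Replacing $w,y,z_i$ by $w,z_i$ gives a valid walk in $\mathcal{R}'$. Its last edge, and hence its polarity, is unchanged unless that last edge was $(y,z_i)$, and in that case the polarity was copied. Conversely, each use of a new edge $(w,z_i)$ expands back uniquely through $y$. So reachability with each polarity between base vertices is identical, and $\mathcal{R}'$ realizes $G$.

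The first strictness condition also transfers, because the map is a bijection. This is where duplicates matter. If $(w,z_i)$ already existed, then two distinct walks of $\mathcal{R}$ would share endpoints, namely one using the old edge and one going through $y$. These extend to valid walks between base vertices because every vertex lies on a valid walk. Extending a prefix from some base vertex to $w$ and a suffix from $z_i$ to some base vertex gives two distinct valid walks between the same base pair, contradicting strictness. So duplicates never actually arise. The second condition holds because every auxiliary vertex other than $y$ that lay on a walk still lies on the image walk.

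The main obstacle is degeneracy: $w = z_i$, which would create a loop, or walks with repeated vertices. If $w = z_i$, then $w$ is auxiliary, since otherwise the walk $w,y,w$ would be a valid walk from $w$ to itself. The SPRN conditions only restrict pairs $\{u,v\}$ of distinct base vertices. However, any cycle through auxiliary vertices lying on a valid walk yields infinitely many valid walks between some pair of base vertices, again contradicting condition 1. The exception would be a cycle only reachable on walks from $u$ to $u$, so I would argue separately that in that situation the edge can simply be deleted, or rule it out directly. Concretely, I would first prove a small lemma: in an SPRN, the subgraph induced on the auxiliary vertices that lie on walks between distinct base vertices is acyclic. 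Then I would handle walks from $u$ to $u$ (which matter only for loops $(u,u)$ in $G$, and $G$ has none) by noting that they must not exist at all in a realization of a loopless diagram. Hence there are no cycles through auxiliary vertices, and $w \neq z_i$.
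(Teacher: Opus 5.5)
Your proposal is correct and follows the paper's proof: contract the unique in-edge (or out-edge) of $y$, redirect $y$'s other edges to its unique neighbor with polarities copied, and note that valid walks correspond bijectively with endpoints and polarities preserved. Your extra arguments ruling out duplicate edges and loops fill in details the paper leaves implicit. Those arguments work because two parallel routes would give two distinct valid walks between the same base pair, and realizing a loopless diagram forbids valid walks from a base vertex to itself. The tentative detour about ``deleting the edge'' is unnecessary: your direct argument already shows that the auxiliary vertices induce an acyclic subgraph, so $w \neq z_i$.
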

\begin{proof}[Proof sketch]
    If $(y',y)$ is the unique incoming edge to $y$, or $(y,y')$ is the unique outgoing edge from $y$, then we can contract this edge to get an SPRN with one fewer edge.
\end{proof}

\begin{lemma} \label{lem-deg-3}
    Let $\mathcal{R}$ be an SPRN realizing $K_n$ with $m$ edges for some $n \ge 4$, and let $x$ be a base vertex. If the total degree of $x$ is at least 3, then there is an SPRN realizing $K_{n-1}$ with at most $m-3$ edges.
\end{lemma}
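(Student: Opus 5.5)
The plan is to delete the base vertex $x$ together with all its incident edges, and then clean up whatever auxiliary vertices become useless. Let $\mathcal{R} = (B, A, E, p)$ realize a polarized diagram $(G,p_G)$ of $K_n$, with $x \in B$. Define $\mathcal{R}_0 = (B \setminus \{x\}, A, E_0, p|_{E_0})$, where $E_0$ is $E$ minus every edge incident to $x$. Since $x$ has total degree at least $3$, we get $|E_0| \le m-3$. Next, let $\mathcal{R}'$ be obtained from $\mathcal{R}_0$ by deleting every auxiliary vertex that lies on no valid walk of $\mathcal{R}_0$, together with its incident edges. This only removes edges, so $\mathcal{R}'$ has at most $m-3$ edges.

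The key observation is that valid walks between base vertices other than $x$ never use $x$. By definition of a valid walk, every interior vertex is auxiliary, and $x$ is a base vertex. So for $u,v \in B\setminus\{x\}$, the valid walks from $u$ to $v$ in $\mathcal{R}$ are exactly those in $\mathcal{R}_0$, with the same polarities, since the last edge and its $p$-value are unchanged. The same holds in $\mathcal{R}'$. Indeed, every vertex of a valid walk of $\mathcal{R}_0$ lies on a valid walk, so the cleanup step never destroys a valid walk. Conversely, deleting vertices and edges cannot create new walks.

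With this in hand, I would check the required properties in order.
\begin{enumerate}
\item \emph{Realization.} Let $(G',p_{G'})$ be the restriction of $(G,p_G)$ to $B \setminus \{x\}$, which is a polarized diagram of $K_{n-1}$. For $u,v \in B\setminus\{x\}$, including $u=v$, the condition ``$(u,v)\in E(G')$ iff there is a valid walk from $u$ to $v$ with matching polarity'' is inherited verbatim from $\mathcal{R}$. This is because both the edge set of $G'$ and the set of valid walks, with their polarities, restrict exactly.
\item \emph{Strictness, condition 1.} For any pair $\{u,v\}\subseteq B\setminus\{x\}$, the valid walks between them in $\mathcal{R}'$ are a subset of those in $\mathcal{R}$, so there is still at most one.
\item \emph{Strictness, condition 2.} This holds by construction of the cleanup step: after removing the useless auxiliary vertices, every remaining auxiliary vertex lies on a valid walk.
\end{enumerate}
The hypothesis $n \ge 4$ only guarantees that $K_{n-1}$ is a nontrivial target with $n-1\ge 3$ vertices.

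The only point needing care, and the one I expect to be the main obstacle, is the cleanup step. I need to confirm that removing auxiliary vertices that became dangling does not break any surviving valid walk between remaining base vertices. As argued above, this follows because such a walk witnesses that each of its vertices lies on a valid walk, so none of its vertices is removed. I also need to confirm that the deleted edges are counted correctly: the at least $3$ edges at $x$ are removed first, and any further deletions only lower the count. Everything else reduces to the fact that base vertices cannot be interior to valid walks.
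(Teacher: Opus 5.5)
Your proposal is correct and matches the paper's proof: delete $x$ together with its at least three incident edges, then prune the auxiliary vertices that no longer lie on any valid walk. You also spell out the checks the paper only asserts, namely that $x$ cannot be interior to a valid walk (so walks and polarities among the remaining base vertices are unchanged), and that pruning cannot destroy a surviving walk, which gives realization of $K_{n-1}$ and both strictness conditions.
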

\begin{proof}
    Let $\mathcal{R}'$ be obtained from $\mathcal{R}$ by deleting $x$ and its incident edges; also if $\mathcal{R}'$ contains any auxiliary vertices not belonging to a valid walk, then delete them and their incident edges. Then, $\mathcal{R}'$ is an SPRN realizing $K_{n-1}$ and has at most $m-3$ edges.
\end{proof}

\begin{restatable}{lemma}{lemdegone} \label{lem-deg-1}
    Let $\mathcal{R}$ be an SPRN realizing $K_n$ with $m$ edges for some $n \ge 4$, and let $x$ be a base vertex. If the total degree of $x$ is exactly 1, then there is an SPRN realizing $K_n$ with at most $m-1$ edges.
\end{restatable}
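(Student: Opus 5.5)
The plan is to show that the unique edge at $x$ joins $x$ to an auxiliary vertex $y$ whose in-degree (or out-degree) is exactly $1$, and then apply \Cref{lem-contract}. Since $\mathcal{R}$ realizes $K_n$ for some polarized diagram, every pair $\{x,v\}$ with $v \in B\setminus\{x\}$ must be joined by a valid walk in one direction. Also, because $K_n$ has no loops, there is no valid walk from a base vertex to itself. Finally, strictness allows at most one valid walk per unordered pair.

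\emph{Case A: the edge is $(x,y)$, outgoing from $x$.} Since $x$ has no incoming edges, every pair $\{x,v\}$ is realized by a valid walk starting at $x$, hence starting $x \to y$. If $y$ were a base vertex, the only valid walk out of $x$ would be the one-edge walk $(x,y)$. Then $x$ could be related to at most one of the $n-1 \ge 3$ other base vertices, a contradiction. So $y \in A$, and for every $v \in B\setminus\{x\}$ there is a valid walk $x \to y \to \cdots \to v$.

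I then claim $y$ has in-degree $1$. Suppose $(z,y) \in E$ with $z \neq x$. Then $z$ is auxiliary, since $x$ is the only base in-neighbour of $y$ that I need to rule out and any base in-neighbour $w \neq x$ would directly give the walk below. By condition 2 of SPRNs, $z$ lies on some valid walk, so some base vertex $w$ reaches $z$ through auxiliary vertices. There are two subcases.
\begin{itemize}
\item If $w \neq x$, concatenating $w \leadsto z \to y$ with the walk $y \leadsto w$ (the tail of the walk from $x$ to $w$) gives a valid walk from $w$ to $w$. This is impossible, since $K_n$ has no loops.
\item If $w = x$, then the walk from $x$ must begin $x\to y$. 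So $y \leadsto z \to y$ is a cycle through auxiliary vertices, and prepending it yields two distinct valid walks from $x$ to any $v$. This violates condition 1.
\end{itemize}
One has to be slightly careful that the concatenated sequences stay inside $A$ except at their endpoints. They do, because the intermediate vertices of valid walks are auxiliary by definition.

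\emph{Case B: the edge is $(y,x)$, incoming to $x$.} This case is symmetric. All pairs are realized by walks ending $\cdots \to y \to x$, so $y$ must be auxiliary. Any out-neighbour $z \neq x$ of $y$ would produce either a closed valid walk at some base vertex or a duplicate walk. Hence $y$ has out-degree $1$.

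In both cases $y$ is an auxiliary vertex with in-degree or out-degree $1$, so \Cref{lem-contract} gives an SPRN realizing $K_n$ with at most $m-1$ edges. The main point requiring care is the claim that no other walk enters (resp. leaves) $y$. In particular, I will check that the contracted network keeps the correct polarity: in Case B, the redirected edges into $x$ inherit $p(y,x)$. That check is absorbed into \Cref{lem-contract}, which I take as given.
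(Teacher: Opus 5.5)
Your proposal is correct and follows the paper's proof. In both, you show the unique neighbour $y$ of $x$ is auxiliary, rule out any second in-neighbour (resp.\ out-neighbour) of $y$ using the no-self-walk condition together with strictness, and then invoke \Cref{lem-contract}. Beyond the paper, you spell out the case the paper compresses into ``since $\mathcal{R}$ is strict'': a second in-neighbour reachable only from $x$ forces a cycle through $y$, hence duplicate valid walks. You also treat the reversed orientation $(y,x)$ explicitly instead of by ``without loss of generality.''
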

\begin{proof}[Proof sketch]
    If the total degree of $x$ is exactly 1, then its unique neighbor must be an auxiliary vertex $y$. Without loss of generality, $(x,y)$ is the edge in $\mathcal{R}$ connecting these two vertices. Then, $y$ has in-degree exactly 1, so the conclusion follows by \Cref{lem-contract}.
\end{proof}

\begin{lemma} \label{lem-deg-2}
    Let $\mathcal{R}$ be an SPRN realizing $K_n$ with $m$ edges for some $n \ge 4$, and let $x_1$ and $x_2$ be base vertices. If the total degrees of $x_1$ and $x_2$ in $\mathcal{R}$ are both 2 and there is an edge between $x_1$ and $x_2$, then there is an SPRN realizing $K_{n-1}$ with at most $m-3$ edges.
\end{lemma}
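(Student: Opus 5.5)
The plan is to delete one of the two vertices and then invoke \Cref{lem-deg-1} on the other. Let $\mathcal{R}'$ be obtained from $\mathcal{R}$ by deleting $x_1$ together with its two incident edges, one of which is the edge between $x_1$ and $x_2$. Then remove any auxiliary vertices that no longer lie on a valid walk, together with their incident edges, exactly as in the proof of \Cref{lem-deg-3}.

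\textbf{Step 1: $\mathcal{R}'$ is an SPRN realizing $K_{n-1}$.} Deleting vertices and edges cannot create new valid walks, so strictness is preserved. For $u,v \in B\setminus\{x_1\}$, the unique valid walk realizing the pair in $\mathcal{R}$ never passes through $x_1$, because interior vertices of valid walks are auxiliary. Such a walk therefore survives, and none of its vertices is removed in the cleanup. Hence $\mathcal{R}'$ realizes $K_{n-1}$ and has at most $m-2$ edges.

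\textbf{Step 2: reduce $x_2$ to degree one.} In $\mathcal{R}'$, the vertex $x_2$ has lost its edge to $x_1$, so its total degree is at most $1$. It cannot be $0$: since $n-1\ge 3$, the vertex $x_2$ must have valid walks to or from the other base vertices. If the cleanup removed any further edges, we already have at most $m-3$ edges and are done. Otherwise $x_2$ has total degree exactly $1$ in an SPRN realizing $K_{n-1}$ with at most $m-2$ edges. When $n-1\ge 4$, \Cref{lem-deg-1} then yields an SPRN realizing $K_{n-1}$ with at most $m-3$ edges.

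\textbf{Step 3: the boundary case $n=4$.} Here \Cref{lem-deg-1} does not apply directly, since $n-1=3$. There are two routes. The first is to re-run its argument by hand: the unique neighbor of $x_2$ must be auxiliary, since a base neighbor would leave no valid walk to the third vertex. We then need the in- or out-degree of that auxiliary vertex $y$ to be $1$, after which \Cref{lem-contract} finishes. The second route is to note that $K_3$ is realized trivially with $3$ edges, so it suffices to show $m\ge 6$ for every SPRN realizing $K_4$ in which $x_1,x_2$ have degree $2$ and are adjacent. For this I would count the degrees of the remaining two base vertices and the auxiliary vertices, which must each have in- and out-degree at least $2$ by \Cref{lem-contract}, or else we reduce further.

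This small case is the step I expect to be the main obstacle. The general case is a direct composition of earlier lemmas, whereas the $n=4$ case needs either a careful recheck that the hypothesis $n\ge 4$ in \Cref{lem-deg-1} is not really used, or an ad hoc edge count for $K_4$.
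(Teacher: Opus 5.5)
Your proof is the paper's: delete $x_1$ with its two incident edges and prune auxiliary vertices that are no longer on a valid walk, which gives an SPRN realizing $K_{n-1}$ with at most $m-2$ edges in which $x_2$ has total degree exactly $1$, and then apply \Cref{lem-deg-1}. The paper also applies \Cref{lem-deg-1} when $n=4$ without comment, so your Step~3 concern is well spotted, but your first route settles it: the proof of \Cref{lem-deg-1} needs only a third base vertex (to force the unique neighbor $y$ to be auxiliary) plus the self-walk/strictness argument showing $y$ has in-degree (or out-degree) $1$, so it works verbatim for $K_3$ and \Cref{lem-contract} finishes.
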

\begin{proof}
    Let $\mathcal{R}'$ be obtained from $\mathcal{R}$ by deleting $x_1$ and its two incident edges; also if $\mathcal{R}'$ contains any auxiliary vertices not belonging to a valid walk, then delete them and their incident edges. Then $\mathcal{R}'$ is an SPRN realizing $K_{n-1}$ and has at most $m-2$ edges. In $\mathcal{R}'$, the total degree of $x_2$ is exactly 1, so there is an SPRN realizing $K_{n-1}$ with at most $m-3$ edges by \Cref{lem-deg-1}.
\end{proof}

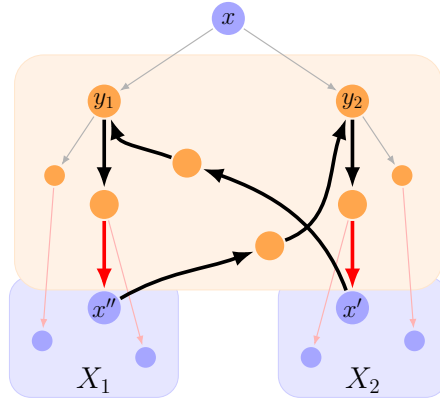
\begin{figure}
    \centering
    \scalebox{0.55}{
    \begin{tikzpicture}[
    scale=1.0,
    xnode/.style={circle,  fill=blue!35, thick, minimum size=0.5cm, inner sep=0pt},
    ynode/.style={circle,  fill=orange!70, thick, minimum size=0.7cm, inner sep=0pt},
    norm_edge/.style={-Latex, draw=gray!60, thick, shorten >=1pt, shorten <=1pt},
    hi_edge/.style={-Latex, draw=black, line width=2.5pt, shorten >=1pt, shorten <=1pt},
    key_node/.style={minimum size=0.8cm}
]


    \node[xnode, key_node] (x) at (0, 7) {\LARGE $x$};

    \node[ynode, key_node] (y1) at (-3, 5) {\LARGE  $y_1$};
    \node[ynode, key_node] (y2) at (3, 5) {\LARGE $y_2$};

    \node[xnode, key_node] (xpp) at (-3, 0) {\LARGE $x''$};
    \node[xnode] (x1a) at (-4.5, -0.8) {};
    \node[xnode] (x1b) at (-2, -1.2) {};

    \node[xnode, key_node] (xp) at (3, 0) {\LARGE $x'$};
    \node[xnode] (x2a) at (4.5, -0.8) {};
    \node[xnode] (x2b) at (2, -1.2) {};

    \node[ynode] (y_down1) at (-3, 2.5) {};
    \node[ynode] (y_down2) at (3, 2.5) {};
    \node[ynode] (y_cross1) at (-1, 3.5) {};
    \node[ynode] (y_cross2) at (1, 1.5) {};

    \node[ynode, minimum size=0.5cm] (y_fill1) at (-4.2, 3.2) {};
    \node[ynode, minimum size=0.5cm] (y_fill2) at (4.2, 3.2) {};

    \begin{pgfonlayer}{background}
        \node[fill=blue!10, draw=blue!20, rounded corners=15pt, fit=(xpp)(x1a)(x1b), inner sep=15pt, yshift=-5pt] { };

        \node[]  at (-3.25, -1.75) {\huge $X_1$};

        \node[fill=blue!10, draw=blue!20, rounded corners=15pt, fit=(xp)(x2a)(x2b), inner sep=15pt, yshift=-5pt] {};

         \node[]  at (3.25, -1.75) {\huge $X_2$};
        
        \node[fill=orange!10, draw=orange!20, rounded corners=20pt, fit=(y1)(y2)(y_cross2)(y_fill1)(y_fill2), inner sep=20pt] {};
    \end{pgfonlayer}


    \draw[norm_edge] (x) -- (y1);
    \draw[norm_edge] (x) -- (y2);

    \draw[norm_edge] (y1) -- (y_fill1);
    \draw[norm_edge,red!30] (y_fill1) -- (x1a);
    \draw[norm_edge,red!30] (y_down1) -- (x1b);
    \draw[norm_edge] (y2) -- (y_fill2);
    \draw[norm_edge, red!30] (y_fill2) -- (x2a);
    \draw[norm_edge, red!30] (y_down2) -- (x2b);


    \draw[hi_edge] (xp) to[out=110, in=-20] (y_cross1);
    \draw[hi_edge] (y_cross1) to[out=160, in=-70] (y1);

    \draw[hi_edge] (y1) -- (y_down1);
    \draw[hi_edge, red] (y_down1) -- (xpp);

    \draw[hi_edge] (xpp) to[out=30, in=200] (y_cross2);
    \draw[hi_edge] (y_cross2) to[out=20, in=250] (y2);

    \draw[hi_edge] (y2) -- (y_down2);
    \draw[hi_edge, red] (y_down2) -- (xp);

\end{tikzpicture}
    }
    \caption{Illustration for the proof of~\Cref{thm:amo-formal}}
    \label{fig:amo-lb}
\end{figure}

We are now ready to prove \Cref{thm:amo-formal}. Given a minimal counterexample (an SPRN realizing $K_n$ with fewer than $3n-6$ edges), we derive a contradiction by showing that the SPRN is not strict. Specifically, we identify two base vertices $x'$ and $x''$ such that we have a valid walk from $x'$ to $x''$ and a valid walk from $x''$ to $x'$. See \Cref{fig:amo-lb} for an illustration of the proof.

\begin{proof}[Proof of \Cref{thm:amo-formal}]
    Let $\mathcal{R}$ be an SPRN realizing $K_n$ with the minimum number of edges. By \Cref{thm:characterization}, it suffices to show that $\mathcal{R}$ has at least $3n-6$ edges. We may assume that $n \ge 3$. The proof is by induction on $n$. The base case, $n=3$, is easy to check. If $n \ge 4$, and there is some base vertex whose total degree is at least 3, then we are done by \Cref{lem-deg-3}. So assume that every base vertex has total degree at most 2. By \Cref{lem-deg-1}, every base vertex has total degree exactly 2. By \Cref{lem-deg-2}, we may assume that there are no edges between two base vertices.

    From these assumptions, we will arrive at a contradiction. Fix a base vertex $x$, and let $y_1$ and $y_2$ be its two adjacent auxiliary vertices. Suppose that the edges incident to $x$ are $(x,y_1)$ and $(x,y_2)$; if either edge goes in the other direction, the rest of the argument proceeds with only minor modifications. For $i \in \{1,2\}$, let $X_i$ be the elements of $V(K_n) \setminus \{x\}$ reachable by a valid walk starting with $(x,y_i)$. Then, since $\mathcal{R}$ is strict, $X_1 \cup X_2$ is a partition of $V(K_n) \setminus \{x\}$. By \Cref{lem-contract}, there is a base variable $x' \in V(K_n) \setminus \{x\}$ with a valid walk to $y_1$ and a base variable $x'' \in V(K_n) \setminus \{x\}$ with a valid walk to $y_2$. We must have $x' \in X_2$, or else there would be a valid walk $(x',\dots,y_1,\dots,x')$, which would contradict our assumption that $\mathcal{R}$ is an SPRN realizing $K_n$. Similarly, $x'' \in X_1$. But then, we have valid walks $(x',\dots,y_1,\dots,x'')$ and $(x'',\dots,y_2,\dots,x')$, which contradicts our assumption that $\mathcal{R}$ is strict.
\end{proof}






\section{Discussion and Experimental Results}\label{sec:discussion}

We have developed a theoretical framework to better understand Bounded Variable Addition (BVA) as a preprocessing technique, especially when accompanied by simplification techniques such as equivalent literal substitution and failed literal elimination.

We have focused on characterizing BVA's behavior and reencoding potential on the 2-CNF fragment of formulas. This not only simplifies the theoretical analysis, but is also justified by practical considerations. Namely, structured formulas are heavily biased towards narrow clauses: analyzing the 2025 SAT Competition, over 60\% of clauses have width 2 (avg. width is about 2.67, and fewer than 2\% have width greater than $4$). When \textsf{factor}, the state-of-the-art implementation of BVA, is run on these formulas, over 70\% of the clauses saved by the reencoding correspond to binary clauses, and fewer than $4\%$ correspond to width greater than $3$. Thus, width $3$ is arguably the only remaining case with practical applicability.

Generalizing our framework from 2-CNF to $k$-CNF will require reasoning about hypergraphs instead of graphs. Some initial work in this direction was recently done by Krapivin et al.~\cite{stoc}, who studied $k$-partite decompositions of $k$-uniform graphs, which can be seen as depth-2 rectifier networks for hypergraphs. In particular, their work implies that every monotone $k$-CNF formula on $n$ variables can be encoded with at most $(\frac{1}{k!} + o_k(1))n^{k}/k!$ clauses. We also remark that Allen's result that almost all 2-CNF functions are unate has recently been generalized to $k$-CNF functions~\cite{unate}.

Within the 2-CNF fragment, we have both established results for very structured formulas, such as encodings of the at-most-one constraint, as well as for general unstructured (e.g., random) formulas. For unstructured formulas, we have established sharp bounds on the reencoding potential of idealized BVA. While our main focus has been theoretical, we leverage an efficient algorithm of Krapivin et al.~\cite{stoc} for computing biclique partitions to implement a tool, \textsf{BiVA} (Biclique Variable Addition), which reencodes monotone 2-CNF fragments of formulas. Since \textsf{BiVA} is optimized for worst-case formulas, we do not aim to compete with other BVA implementations on structured formulas. Thus, we benchmark on formulas encoding the independent set problem on $G(n, \tfrac{1}{2})$-random graphs.

We benchmark against the \textsf{BVA} implementation from~\cite{mantheyAutomatedReencodingBoolean2012} (available at \url{https://fmv.jku.at/bva/}),
\textsf{factor} from \textsf{Kissat} (v4.0.4), and interestingly, we consider combinations \textsf{BiVA}+\textsf{BVA} and \textsf{BiVA}+\textsf{factor}, meaning that the reencoded output of \textsf{BiVA} is fed for a second round of reencoding (in contrast, if \textsf{BVA} or \textsf{factor} are applied to completion, then running \textsf{BiVA} on top has no effect). We do not include \textsf{SBVA} in our experiments since its exploitation of structure was not beneficial on random formulas, and the reencoding time was significantly longer than for the other methods.

First, as shown in~\Cref{fig:clause-compare}, the reduction in clauses is comparable between all methods, except for a 5--15\% margin by which \textsf{BiVA} is worse; however, this is almost entirely compensated by the combination with $\textsf{BVA}$ or $\textsf{factor}$. In fact, as shown in~\Cref{fig:runtime-both}, these combinations run much more efficiently than $\textsf{BVA}/\textsf{factor}$ in isolation. This can be explained by the fact that their runtimes depend on the amount of compression they achieve, and thus they run much faster on formulas that are already partially compressed. 
Finally, as shown in~\Cref{fig:aux-compare}, \textsf{BiVA} creates significantly fewer auxiliary variables, and unfortunately, its combination with $\textsf{BVA}/\textsf{factor}$ generates the most in aggregate.
More experimental details are provided in~\Cref{sec:experimemtal-details}.


\subparagraph{Future directions.} Our characterization of idealized BVA reveals not only its expressive power but also specific limitations that could be overcome by future reencoding methods. For example, one of BVA's limitations arises from its \emph{strictness}, which algorithmically corresponds to the fact that upon reencoding a biclique, BVA removes the original edges immediately, preventing any future reusage of them. This is necessary (but not sufficient) for automatically deriving the product encoding for \textsf{AtMostOne}. Given our promising preliminary experimental results, another direction is to develop a \textsf{BiVA}-inspired implementation that can exploit the structure present in real-world formulas. One step in this direction was accomplished by Krapivin et al.~\cite{stoc}, who showed how to efficiently construct biclique partitions 
that exploit the edge density of the input graph.



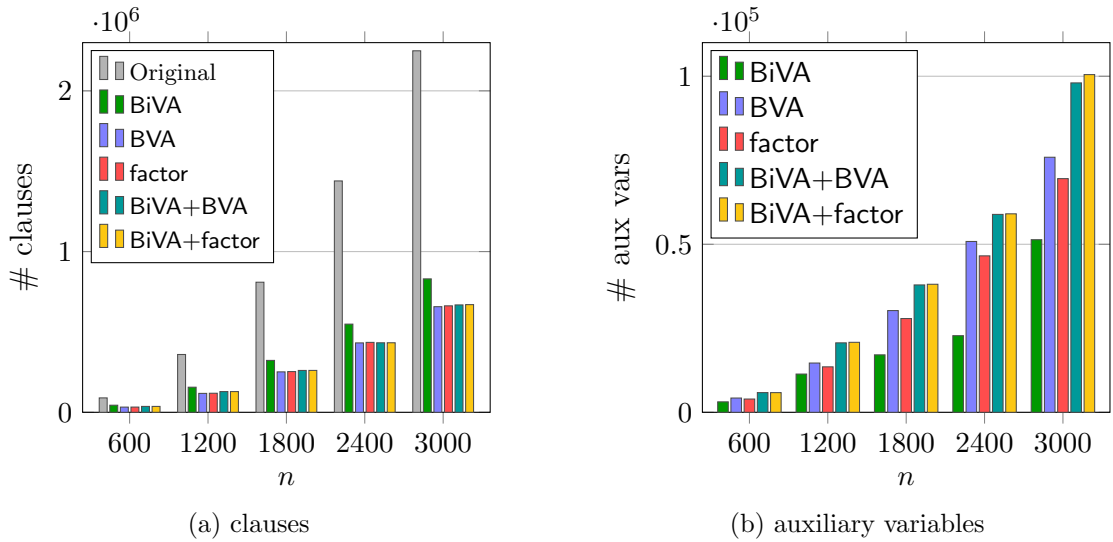
\begin{figure}
    \begin{subfigure}[b]{0.49\linewidth}
      \centering
    \begin{tikzpicture}
\begin{axis}[
    width=7cm, 
    height=6.5cm,
    ybar=1pt, 
    bar width=3pt, 
    enlarge x limits=0.15, 
    ymin=0,
    ymax=2300000,
    ylabel={\# clauses},
    xlabel={$n$},
    ylabel style={font=\large},
    yticklabel style={
        /pgf/number format/fixed,
        /pgf/number format/precision=1
    },
    symbolic x coords={600, 1200, 1800, 2400, 3000},
    xtick=data,
    ymajorgrids,
    tick label style={font=\normalsize},
    legend style={
        at={(0.02,0.98)}, 
        anchor=north west,
        font=\footnotesize,
        cells={anchor=west}
    }
]

\addplot[fill=gray!60!white, draw=black!70] coordinates {
    (600,89671) (1200,359732) (1800,809413) (2400,1439527) (3000,2249189)
};
\addlegendentry{Original}

\addplot[fill=green!60!black, draw=black!70] coordinates {
    (600,44190) (1200,156200) (1800,322931) (2400,548671) (3000,830189)
};
\addlegendentry{\textsf{BiVA}}

\addplot[fill=blue!50!white, draw=black!70] coordinates {
    (600,32404) (1200,117768) (1800,251632) (2400,431841) (3000,657162)
};
\addlegendentry{\textsf{BVA}}

\addplot[fill=red!70!white, draw=black!70] coordinates {
    (600,32583) (1200,118606) (1800,253436) (2400,435083) (3000,662432)
};
\addlegendentry{\textsf{factor}}

\addplot[fill=cyan!60!black, draw=black!70] coordinates {
    (600,36719) (1200,128923) (1800,260504) (2400,432518) (3000,668154)
};
\addlegendentry{\textsf{BiVA}+\textsf{BVA}}

\addplot[fill=lipicsYellow, draw=black!70] coordinates {
    (600,36700) (1200,128976) (1800,260705) (2400,432747) (3000,669987)
};
\addlegendentry{\textsf{BiVA}+\textsf{factor}}

\end{axis}
\end{tikzpicture}
    \caption{clauses}
    \label{fig:clause-compare}
    \end{subfigure}%
    \begin{subfigure}[b]{0.49\linewidth}
      \centering
    \begin{tikzpicture}
\begin{axis}[
    width=7cm, 
    height=6.5cm,
    ybar=1pt, 
    bar width=4pt, 
    enlarge x limits=0.15, 
    ymin=0,
    ymax=110000,
    ylabel={\# aux vars},
    xlabel={$n$},
    ylabel style={font=\large},
    yticklabel style={
        /pgf/number format/fixed,
        /pgf/number format/precision=1
    },
    symbolic x coords={600, 1200, 1800, 2400, 3000},
    xtick=data,
    ymajorgrids,
    tick label style={font=\normalsize},
    legend style={
        at={(0.02,0.98)}, 
        anchor=north west,
        font=\normalsize,
        cells={anchor=west}
    },
]

\addplot[fill=green!60!black, draw=black!70] coordinates {
    (600,3113) (1200,11382) (1800,17099) (2400,22800) (3000,51359)
};
\addlegendentry{\textsf{BiVA}}

\addplot[fill=blue!50!white, draw=black!70] coordinates {
    (600,4247) (1200,14662) (1800,30274) (2400,50834) (3000,75895)
};
\addlegendentry{\textsf{BVA}}

\addplot[fill=red!70!white, draw=black!70] coordinates {
    (600,3951) (1200,13531) (1800,27906) (2400,46555) (3000,69535)
};
\addlegendentry{\textsf{factor}}

\addplot[fill=cyan!60!black, draw=black!70] coordinates {
    (600,5863) (1200,20670) (1800,37903) (2400,58890) (3000,98018)
};
\addlegendentry{\textsf{BiVA}+\textsf{BVA}}

\addplot[fill=lipicsYellow, draw=black!70] coordinates {
    (600,5859) (1200,20828) (1800,38100) (2400,59063) (3000,100489)
};
\addlegendentry{\textsf{BiVA}+\textsf{factor}}
\end{axis}
\end{tikzpicture}
    \caption{auxiliary variables}
    \label{fig:aux-compare}
    \end{subfigure}
    \caption{Comparison of reencoding methods on random monotone formulas}
    \label{fig:compare-clauses-vars}
\end{figure}

\begin{figure}
    \centering
   \begin{tikzpicture}
\begin{axis}[
    width=0.99\linewidth,
    height=6.5cm,
    ybar=1pt,
    bar width=4pt,
    xlabel={$n$},
    ymin=0,
    ylabel={Time (s)},
    symbolic x coords={600,1200,1800,2400,3000},
    xtick=data,
    enlarge x limits=0.10,
    y filter/.code={\pgfmathparse{#1/1000}\pgfmathresult},
    ymajorgrids,
    legend style={font=\scriptsize, at={(0.175,0.99)}, anchor=north, legend columns=1, cells={anchor=west}},
]
\addplot+[ybar,  bar shift=-7.00pt, draw=black!70, fill=gray!60!white] coordinates {(600,364.601) (1200,1641.021) (1800,3684.318) (2400,8617.097) (3000,15724.206)};
\addlegendentry{original (solving)}


\addplot+[ybar,  bar shift=-2.40pt, draw=black!70, fill=red!70!white!65] coordinates {(600,601.941) (1200,1840.980) (1800,3596.980) (2400,9003.040) (3000,17114.337)};
\addlegendentry{factor (solving)}
\addplot+[ybar, bar shift=-2.40pt, draw=black!70, fill=red!70!white] coordinates {(600,134.294) (1200,964.250) (1800,3244.749) (2400,7618.117) (3000,15205.774)};
\addlegendentry{factor (reencoding)}

\addplot+[ybar,  bar shift=2.40pt, draw=black!70, fill=green!60!black!65] coordinates {(600,357.450) (1200,931.889) (1800,1134.397) (2400,3134.673) (3000,3179.836)};
\addlegendentry{BiVA (solving)}
\addplot+[ybar, bar shift=2.40pt, draw=black!70, fill=green!60!black] coordinates {(600,37.626) (1200,132.192) (1800,306.632) (2400,535.168) (3000,867.490)};
\addlegendentry{BiVA (reencoding)}

\addplot+[ybar,  bar shift=7.20pt, draw=black!70, fill=cyan!60!black!65] coordinates {(600,407.571) (1200,979.925) (1800,2470.472) (2400,6903.736) (3000,9466.447)};
\addlegendentry{BiVA+BVA (solving)}
\addplot+[ybar, bar shift=7.20pt, draw=black!70, fill=cyan!60!black] coordinates {(600,99.392) (1200,495.173) (1800,1743.412) (2400,5002.723) (3000,8617.162)};
\addlegendentry{BiVA+BVA (reencoding)}

\addplot+[ybar,  bar shift=12.00pt,draw=black!70, fill=lipicsYellow!65] coordinates {(600,377.507) (1200,838.751) (1800,1185.270) (2400,2957.371) (3000,2968.519)};
\addlegendentry{\textsf{BiVA}+\textsf{factor} (solving)}
\addplot+[ybar, bar shift=12.00pt, draw=black!70, fill=lipicsYellow] coordinates {(600,66.519) (1200,272.411) (1800,754.325) (2400,1546.711) (3000,2398.409)};
\addlegendentry{\textsf{BiVA}+\textsf{factor} (reencoding)}
\end{axis}
\end{tikzpicture}
    \caption{End-to-end runtime comparison of reencoding methods on random monotone  formulas}
    \label{fig:runtime-both}
\end{figure}


\section*{Funding}
All authors were supported by NSF grant DMS-2434625. Przybocki was additionally supported by the NSF Graduate Research Fellowship Program under Grant
No. DGE-2140739. 

\bibliographystyle{abbrvurl}
\bibliography{references}

\appendix

\section{Proofs from \Cref{sec-bva-characterization}} \label{sec-sprn-appendix}
\begin{lemma}\label{lem:no-pos}
    Given a satisfiable 2-CNF formula $\varphi$, we can compute a formula $\varphi'$ in linear time such that (a) $\varphi'$ is obtained from $\varphi$ by replacing some variables by their negations wherever they appear, and (b) every clause in $\varphi'$ has at least one negative literal.
\end{lemma}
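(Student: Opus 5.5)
The plan is to use the standard linear-time 2-SAT algorithm, in the form of Aspvall, Plass, and Tarjan, to obtain a satisfying assignment $\tau$ of $\varphi$. We then negate exactly those variables set to true by $\tau$. Concretely, let $\varphi'$ be obtained from $\varphi$ by replacing $x$ with $\overline{x}$ (and $\overline{x}$ with $x$) wherever it occurs, for every variable $x$ with $\tau(x) = \top$. Property (a) then holds by construction.

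For property (b), define $\tau'$ by $\tau'(x) = \bot$ for every variable $x$. We claim $\tau' \models \varphi'$. Take any clause $C' \in \varphi'$; it arises from some clause $C \in \varphi$. Choose a literal $\ell \in C$ with $\tau \models \ell$, and let $\ell'$ be its image in $C'$.
\begin{itemize}
    \item If $\ell = x$, then $\tau(x) = \top$, so $x$ was flipped and $\ell' = \overline{x}$.
    \item If $\ell = \overline{x}$, then $\tau(x) = \bot$, so $x$ was not flipped and $\ell' = \overline{x}$.
\end{itemize}
In either case $\ell'$ is a negative literal, so every clause of $\varphi'$ contains a negative literal.

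For the running time, the implication graph has $O(|\varphi|)$ vertices and edges. Its strongly connected components can be computed with Tarjan's algorithm in linear time. A satisfying assignment is then read off from a reverse topological order of the components: set a literal true if its component comes after that of its complement in topological order. The flipping pass touches each literal occurrence once, so it is also linear.

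The only step needing care is the appeal to linear-time 2-SAT solving, which produces an actual assignment and not just a yes/no answer. This is classical, so I would cite it rather than reprove it. Everything else is a direct check.
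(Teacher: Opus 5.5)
Your proposal is correct and matches the paper's proof: obtain a satisfying assignment $\tau$ in linear time, negate every variable with $\tau(x)=\top$, and observe that the all-false assignment then satisfies $\varphi'$, so every clause contains a negative literal. Your case check on the satisfied literal and your sketch of how the strongly connected components yield an assignment only spell out details the paper leaves implicit.
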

\begin{proof}
    Let $\varphi$ be a satisfiable 2-CNF formula over the variables $\vec{x}$, and let $\tau \colon \vec{x} \to \{\bot,\top\}$ be a satisfying assignment. Let $\varphi'$ be obtained from $\varphi$ by replacing each variable $x \in \vec{x}$ for which $\tau(x) = \top$ by its negation throughout the formula. Then, the assignment $x \mapsto \bot$ for each $x \in \vec{x}$ is a satisfying assignment of $\varphi'$, so every clause in $\varphi'$ has at least one negative literal. Since 2-SAT can be solved in linear time, this transformation can be carried out in linear time.
\end{proof}

\propsimpletogeneral*
\begin{proof}
    First, note that $g(n) = \Omega(n^2)$, because a simple 2-CNF formula can be of size $\Theta(n^2)$, and a reencoding algorithm must read its input. Thus, it suffices to show that every 2-CNF formula $\varphi$ on $n$ variables has a 2-CNF encoding with at most $f(n) + O(n)$ clauses, computable in time $g(n) + O(n^2)$.

    If $\varphi$ is unsatisfiable, which can be detected in time $O(n^2)$, then we use the trivial encoding $\{\{\}\}$ (an empty clause). Suppose from now on that $\varphi$ is satisfiable.

    First, using Tarjan's algorithm for strongly connected components~\cite{tarjan}, which runs in time $O(n^2)$, we can compute the partition $X_1 \sqcup X_2 \sqcup \dots \sqcup X_{n'}$ of the literals in $\varphi$ into strongly connected components. Given a literal $\ell \in X_i$, let $h(\ell)$ be a canonical representative from $X_i$. Now, let $\varphi^\star$ be the formula obtained from $\varphi$ by replacing each literal $\ell$ in $\varphi$ by $h(\ell)$. It suffices to construct a 2-CNF encoding $\varphi'^\star$ for $\varphi^\star$ with $|\varphi'^\star| \le f(n) + O(n)$, because then we can extend $\varphi'^\star$ to a 2-CNF encoding of $\varphi$ by adding the at most $4n$ clauses of the form $(\overline{\ell} \lor h(\ell))$ and $(\overline{h(\ell)} \lor \ell)$ for each literal $\ell$ in $\varphi$.

    Since the implication graph of $\varphi^\star$ contains no strongly connected components of size greater than $1$, it does not contain a pair of clauses of the form $(x_i \lor x_j)$ and $(\overline{x_i} \lor \overline{x_j})$, nor does it contain a pair of clauses of the form $(\overline{x_i} \lor x_j)$ and $(\overline{x_j} \lor x_i)$. If $\varphi^\star$ contains a pair of clauses of the form $(\ell_i \lor x_j)$ and $(\ell_i \lor \overline{x_j})$, then $\varphi^\star \models \ell_i$.
    Let $L$ be the set of literals $\ell_i$ such that $(\ell_i \lor x_j)$ and $(\ell_i \lor \overline{x_j})$ are both present in $\varphi^\star$ for some $x_j$. Then, since $L$ can be computed in time $O(n^2)$, we can compute
    \[
    \varphi^\dagger := \textsf{UnitPropagation}\left(\varphi^\star \land \bigwedge_{\ell \in L} \ell\right)
    \]
    in time $O(n^2)$. Now, $\varphi^\dagger \subseteq \varphi^\star$, so the implication graph of $\varphi^\dagger$ still contains no strongly connected components of size greater than $1$, and it contains no pair of clauses of the form $(\ell_i \lor x_j)$ and $(\ell_i \lor \overline{x_j})$. It suffices to construct a 2-CNF encoding $\varphi'^\dagger$ for $\varphi^\dagger$ with $|\varphi'^\dagger| \le f(n) + O(n)$, because then 
    \[\varphi'^\dagger \land \bigwedge_{\ell \in L} ((\ell \lor x_1) \land (\ell \lor \overline{x_1})),\]
    where $x_1$ is an arbitrary variable, is a 2-CNF encoding of $\varphi^\star$.

    Next, by \Cref{lem:no-pos}, we can compute in time $O(n^2)$ a formula $\varphi^\diamond$ that is obtained from $\varphi^\dagger$ by replacing a set $X \subseteq \textsf{Var}(\varphi)$ of variables by their negations such that every clause in $\varphi^\diamond$ has at least one negative literal. Then, $\varphi^\diamond$ is simple, and has at most $n$ variables, so by assumption we can compute an encoding $\varphi'^\diamond$ with $|\varphi'^\diamond| \leq f(n)$ in time $g(n)$.
    Finally, by inverting back in $\varphi'^\diamond$ the polarity of variables in $X$, we get an encoding for $\varphi^\dagger$ of the same size.
\end{proof}

The following definitions, which capture BVA steps in terms of an operation on SPRNs, are key to the proof of \Cref{thm:characterization}.

\begin{definition}[Coherent Biclique]
    If $\mathcal{R} = (B,A,E,p)$ is a PRN, then a \emph{coherent biclique} of $\mathcal{R}$ is a pair $(X,Y)$, where $X,Y \subset B \sqcup A$ are disjoint and nonempty, such that:
    \begin{enumerate}
        \item for every $x \in X$ and $y \in Y$, we have $(x, y) \in E$; and
        \item for every $y \in Y \cap B$ and $x_1, x_2 \in X$, 
        we have $p(x_1, y) = p(x_2, y)$.
    \end{enumerate}
\end{definition}

\begin{definition}[Biclique Reduction]\label{def:bic-red}
    Let $\mathcal{R} = (B,A,E,p)$ be a PRN and $(X,Y)$ a coherent biclique with edge set $E_{X,Y}$. Then, the \emph{biclique reduction} of $\mathcal{R}$ with respect to $(X,Y)$ is a new PRN $\mathcal{R}' = (B,A',E',p')$, where $A' = A \sqcup \{z\}$ for some new vertex $z$ and
    \begin{align*}
        E' &:= (E \setminus E_{X,Y}) \cup \{(x, z) \mid x \in X\} \cup \{(z, y) \mid y \in Y\}, \\
    p'(u, v) &:= \begin{cases}
        p(x, v) & \text{if } u = z \ \text{and} \ v \in Y \cap B, \ \text{where} \ x\in X \ \text{is arbitrary} \\
        p(u, v) & \text{otherwise}.
    \end{cases}
    \end{align*}
    We write $\mathcal{R} \hookrightarrow \mathcal{R}'$ if $\mathcal{R}'$ is the result of applying a biclique reduction to $\mathcal{R}$, and we write $\hookrightarrow^\ast$ for the reflexive transitive closure of $\hookrightarrow$.
\end{definition}
\noindent
The following theorem will be useful on the way to \Cref{thm:characterization}:
\begin{theorem}\label{thm:sequential-rectifier}
    Let $(G,p)$ be a polarized diagram. Then, we have $\mathcal{R}_{(G,p)} \hookrightarrow^\ast \mathcal{R}$ if and only if $\mathcal{R}$ is an SPRN realizing $(G,p)$.
\end{theorem}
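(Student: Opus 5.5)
The proof has two directions: the forward direction by induction on the length of a chain $\mathcal{R}_{(G,p)} \hookrightarrow \mathcal{R}_1 \hookrightarrow \dots \hookrightarrow \mathcal{R}_k = \mathcal{R}$, and the converse by induction on the number of auxiliary vertices of $\mathcal{R}$.

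\emph{Forward direction.} The base case $\mathcal{R}_{(G,p)}$ has no auxiliary vertices. It is therefore trivially an SPRN realizing $(G,p)$: every valid walk is a single edge, and $G$ has at most one edge between any pair of base vertices because it is a diagram.

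For the inductive step, suppose $\mathcal{R}$ is an SPRN realizing $(G,p)$ and $\mathcal{R}'$ is its biclique reduction with respect to a coherent biclique $(X,Y)$ with new vertex $z$. The key claim is a bijection between valid walks of $\mathcal{R}$ and valid walks of $\mathcal{R}'$ that preserves endpoints and polarity.
\begin{itemize}
\item A walk using an edge $(x,y)\in E_{X,Y}$ is sent to the walk using $(x,z),(z,y)$ instead.
\item A walk using some pair $(x,z),(z,y)$ is sent back to the walk using $(x,y)$.
\end{itemize}
Since $z$ has in-neighbours exactly $X$ and out-neighbours exactly $Y$, every occurrence of $z$ is of this form. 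Polarity is preserved because of coherence: $p'(z,y)=p(x,y)$ for every $x\in X$. This bijection shows that $\mathcal{R}'$ realizes $(G,p)$ and satisfies strictness condition 1. Condition 2 holds for $z$ because some edge of $E_{X,Y}$ lay on a valid walk; this in turn holds because every edge of an SPRN lies on a valid walk, which I will prove as an auxiliary invariant maintained by the same bijection.

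\emph{Converse direction.} Let $\mathcal{R}$ be an SPRN realizing $(G,p)$. If $A=\emptyset$, realization forces the edges of $\mathcal{R}$ between base vertices to be exactly the edges of $G$ with the same polarities, so $\mathcal{R}=\mathcal{R}_{(G,p)}$. Otherwise, I pick an auxiliary vertex $z$ that is last in a topological order; I first argue the auxiliary part is acyclic, since a cycle through auxiliaries lying on a valid walk would yield infinitely many valid walks between two base vertices, violating strictness. Then all out-neighbours of $z$ are base vertices. I undo the reduction: set $X=N^-(z)$ and $Y=N^+(z)$, delete $z$, add edges $(x,y)$ for $x\in X$, $y\in Y$, and give $(x,y)$ the polarity $p(z,y)$. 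This yields a PRN $\mathcal{R}^-$ in which $(X,Y)$ is a coherent biclique whose reduction is $\mathcal{R}$. The same walk bijection then shows $\mathcal{R}^-$ is an SPRN realizing $(G,p)$ with one fewer auxiliary vertex, and induction finishes the argument.

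\emph{Main obstacle.} The hardest step is showing that $\mathcal{R}^-$ is a legal PRN. The added edges $(x,y)$ must not already be present, must not be loops, and the underlying graph must remain loop-free, with $X\cap Y=\emptyset$ required for a coherent biclique. I expect all of this to follow from strictness. If $(x,y)$ already existed and $x$ were reachable as a walk start, we would get two valid walks between the same endpoints. If $x=y$, or if $x$ and $y$ shared a walk, we would get a valid walk $u\to u$ or two walks between a pair in both directions, which conflicts with $G$ having no loops or with the diagram condition. Both cases also use condition 2 to guarantee that the in-neighbours in $X$ are reached from base vertices. I would check these cases carefully, since they are exactly where strictness enters.
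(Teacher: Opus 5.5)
Your proposal is correct and follows essentially the same route as the paper. The forward direction uses the endpoint- and polarity-preserving bijection on valid walks, with condition~2 for $z$ coming from the fact that every edge of an SPRN lies on a valid walk, and the converse undoes a biclique reduction at an auxiliary vertex by induction on $|A|$. The differences are minor. The paper removes an arbitrary auxiliary vertex, so your acyclicity argument and topologically-last choice are not needed, since the same strictness and no-loop arguments handle auxiliary out-neighbours. Your explicit checks that $X\cap Y=\emptyset$ and that no edge $(x,y)$ already exists are details the paper asserts without proof.
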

\noindent
We prove each direction of~\Cref{thm:sequential-rectifier} individually.
\begin{lemma} \label{lem-reduction-sprn}
    Let $(G,p)$ be a polarized diagram. If $\mathcal{R}_{(G,p)} \hookrightarrow^\ast \mathcal{R}$, then $\mathcal{R}$ is an SPRN realizing $(G,p)$.

\end{lemma}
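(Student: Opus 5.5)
We argue by induction on the length of the chain $\mathcal{R}_{(G,p)} = \mathcal{R}_0 \hookrightarrow \mathcal{R}_1 \hookrightarrow \dots \hookrightarrow \mathcal{R}_k = \mathcal{R}$. The invariant we maintain is slightly stronger than the conclusion. For every $i$ and every ordered pair $(u,v)$ of base vertices (possibly $u=v$), the valid walks of $\mathcal{R}_i$ from $u$ to $v$ are in bijection with the valid walks of $\mathcal{R}_0$ from $u$ to $v$, and the bijection preserves polarity. Every auxiliary vertex of $\mathcal{R}_i$ lies on some valid walk. The base case is immediate. In $\mathcal{R}_0$ there are no auxiliary vertices, so valid walks are single edges. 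Hence $\mathcal{R}_0$ realizes $(G,p)$ with the correct polarities. The diagram condition gives at most one edge among $(u,v),(v,u)$, so there is at most one valid walk between any pair in either direction. There are no loops, so there are no valid walks from $u$ to $u$. Thus $\mathcal{R}_0$ is an SPRN realizing $(G,p)$, and once the invariant is established for all $i$ the lemma follows.

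For the inductive step, let $\mathcal{R}' $ be the biclique reduction of $\mathcal{R}=(B,A,E,p)$ with respect to a coherent biclique $(X,Y)$, with new vertex $z$. We define a map $\Phi$ from valid walks of $\mathcal{R}$ to valid walks of $\mathcal{R}'$. Each occurrence of a biclique edge $(x,y)\in E_{X,Y}$ in a walk is replaced by the two-edge segment $(x,z,y)$. This preserves validity: $z$ is auxiliary, and the internal vertices of the walk remain auxiliary. It also preserves the final polarity. If the last edge was $(x,y)$ with $y\in B$, the new last edge is $(z,y)$, and $p'(z,y)=p(x,y)$ by coherence (condition 2), which is exactly why coherence is required. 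Edges outside $E_{X,Y}$ keep their polarity. The inverse map contracts every occurrence of $z$. Every occurrence of $z$ in a walk of $\mathcal{R}'$ is entered by some $(x,z)$ with $x\in X$ and left by some $(z,y)$ with $y\in Y$, because these are the only edges at $z$. Moreover $z$ can never be an endpoint, since it is not a base vertex. So contracting gives the edge $(x,y)\in E$, and the two maps are mutually inverse. One subtlety is that $z$ might occur several times in one walk, or that walks may repeat vertices. This causes no trouble: both maps act edge-by-edge or occurrence-by-occurrence, so they are well defined on arbitrary walks.

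It remains to check the dangling condition for $\mathcal{R}'$. Old auxiliary vertices lie on walks of $\mathcal{R}$, and their images under $\Phi$ still contain them. For the new vertex $z$ we need some valid walk of $\mathcal{R}$ that uses a biclique edge. This is where I expect the main obstacle. It requires that every edge of $\mathcal{R}$ lies on a valid walk, and that is not part of the SPRN definition. So I would strengthen the invariant to include: every edge of $\mathcal{R}_i$ lies on some valid walk. In $\mathcal{R}_0$ this holds trivially. It is preserved because each new edge $(x,z)$ or $(z,y)$ lies on the image of a walk through the old edge $(x,y)$, and the image under $\Phi$ of a walk through an old edge still contains that edge if it was not removed. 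With this strengthened invariant, $z$ lies on a valid walk, and the induction closes.
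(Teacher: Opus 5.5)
Your proof is correct, and its core is exactly the paper's argument: an endpoint- and polarity-preserving bijection on valid walks, obtained by replacing each biclique edge $(x,y)$ with $(x,z,y)$, where coherence guarantees $p'(z,y)=p(x,y)$.

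The only divergence is in showing that the new vertex $z$ lies on a valid walk, and the obstacle you identify there is not real. In any PRN where every auxiliary vertex lies on a valid walk, every edge $(x,y)$ also lies on one, because walks rather than paths are allowed. You build it by concatenating three pieces:
\begin{itemize}
\item a prefix of a valid walk ending at $x$ if $x\in A$, or just $x$ if $x\in B$;
\item the edge $(x,y)$;
\item a suffix of a valid walk starting at $y$ if $y\in A$, or just $y$ if $y\in B$.
\end{itemize}
The paper does exactly this four-case splicing. It thereby proves, with no strengthened invariant, the one-step statement that if $\mathcal{R}_1$ is an SPRN realizing $(G,p)$ and $\mathcal{R}_1\hookrightarrow\mathcal{R}_2$, then so is $\mathcal{R}_2$, for arbitrary SPRNs $\mathcal{R}_1$.

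Your extra invariant (every edge lies on a valid walk) is nonetheless correctly established and correctly preserved. Each new edge $(x,z)$ or $(z,y)$ lies on the image under $\Phi$ of a walk through some old biclique edge, using that $X$ and $Y$ are nonempty. So your induction along the chain from $\mathcal{R}_{(G,p)}$ closes; it just carries a hypothesis already implied by condition~2 of the SPRN definition.

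A small point you leave implicit: for $\Phi$ and contraction to be mutually inverse, $E'$ must contain no direct edge from $X$ to $Y$. This holds because every such edge lies in $E_{X,Y}$ and is removed.
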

\begin{proof}
    It is direct from the definition that $\mathcal{R}_{(G,p)}$ is an SPRN realizing $(G,p)$. Thus, by induction, it suffices to show that if $\mathcal{R}_1$ is an SPRN realizing $(G,p)$ and $\mathcal{R}_1 \hookrightarrow \mathcal{R}_2$, then $\mathcal{R}_2$ is an SPRN realizing $(G,p)$. Assume $\mathcal{R}_1 = (B,A,E,p')$ is an SPRN realizing $(G,p)$. Let $\mathcal{D}(\mathcal{R}_1)$ be the set of all valid walks in $(G,p)$. Then, let $(X,Y)$ be the coherent biclique such that $\mathcal{R}_2 = (B,A',E',p'')$ is the result of reducing $(X,Y)$ in $\mathcal{R}_1$, and let $z$ be the unique vertex in $A' \setminus A$. Now, given any valid walk $\pi = (\pi_1, \pi_2, \dots, \pi_{k-1}, \pi_k)$ in $\mathcal{D}(\mathcal{R}_1)$, we define its image under a function $\varphi$ as the valid walk $\pi'$ in $\mathcal{R}_2$ obtained by replacing every edge $(\pi_i, \pi_{i+1})$ with $\pi_i \in X, \pi_{i+1} \in Y$ by the walk $(\pi_i, z, \pi_{i+1})$. Then, note that $\varphi$ is a bijection from $\mathcal{D}(\mathcal{R}_1)$ to $\mathcal{D}(\mathcal{R}_2)$ that preserves the endpoints; furthermore, by~\Cref{def:bic-red}, $p'(\pi) = p''(\varphi(\pi))$. It follows that $\mathcal{R}_2$ realizes $(G,p)$ and that for every $\{u,v\} \in \binom{B}{2}$, $\mathcal{R}_2$ contains at most one valid walk from $u$ to $v$ or from $v$ to $u$.
    
    To show that $\mathcal{R}_2$ is an SPRN, it remains to show that $z$ belongs to some valid walk. Let $x \in X$ and $y \in Y$ be vertices of $\mathcal{R}_1$. It suffices to show that there is a valid walk in $\mathcal{R}_1$ containing $(x,y)$, because then its image under $\varphi$ will contain $z$. Suppose first that $x,y \in B$. Then, $(x,y)$ is a valid walk in $\mathcal{R}_1$. Suppose next that $x \in B$ and $y \in A$. Then, $\mathcal{R}_1$ contains a valid walk $\pi' = (\pi'_1, \dots, y, \pi'_\ell, \dots, \pi'_k)$, from where $\pi'' := (x, y, \pi'_\ell, \dots, \pi'_k)$ is a valid walk. If $x \in A$ and $y \in B$, then $\mathcal{R}_1$ contains a valid walk $\pi' = (\pi'_1, \dots, x, \dots, \pi'_k)$, from where $\pi'' := (\pi'_1, \dots, x, y)$ is a valid walk. Finally, suppose $x,y \in A$. Then, there is a valid walk $\pi = (\pi_1, \pi_2, \dots, \pi_{k-1}, \pi_k)$ such that $\pi_i = x$ for some $i \in [2,k-1]$, and there is a valid walk $\pi' = (\pi'_1, \pi'_2, \dots, \pi'_{k'-1}, \pi'_{k'})$ such that $\pi'_j = y$ for some $j \in [2,k'-1]$. Then, $\pi'' = (\pi_1,\dots,\pi_i,\pi'_j,\dots,\pi'_{k'})$ is a valid walk in $\mathcal{R}_1$, and note that $(\pi_i,\pi'_j) = (x,y)$.
\end{proof}

\begin{lemma} \label{lem-sprn-reduction}
    Let $(G,p)$ be polarized diagram and $\mathcal{R}$ be an SPRN realizing $(G,p)$. Then, $\mathcal{R}_{(G,p)} \hookrightarrow^\ast \mathcal{R}$.
\end{lemma}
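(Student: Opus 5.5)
We argue by induction on the number of auxiliary vertices $|A|$ of $\mathcal{R} = (B,A,E,p')$, proving the statement simultaneously for all polarized diagrams. If $A = \emptyset$, every valid walk is a single edge between base vertices. Since $\mathcal{R}$ realizes $(G,p)$, the edges of $\mathcal{R}$ are exactly the edges of $G$, with the same polarities. A base-to-base edge $(u,v)\in E$ is itself a valid walk, so it forces $(u,v)\in E(G)$; conversely, each edge of $G$ needs a valid walk, and with no auxiliary vertices that walk is an edge. Hence $\mathcal{R} = \mathcal{R}_{(G,p)}$, and zero reduction steps suffice.

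For the inductive step, we \emph{undo} one auxiliary vertex. The plan is to find $z\in A$ together with a PRN $\mathcal{R}^-$ on auxiliary set $A\setminus\{z\}$ such that $\mathcal{R}^- \hookrightarrow \mathcal{R}$ by reducing a coherent biclique $(X,Y)$. Let $X = N^-(z)$ and $Y = N^+(z)$ be the in- and out-neighbourhoods of $z$. Define $\mathcal{R}^-$ by deleting $z$ and adding all edges $X\times Y$, giving $(x,y)$ the polarity $p'(z,y)$ when $y\in B$.

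Three things must be checked for this to work.
\begin{enumerate}
    \item $X$ and $Y$ are nonempty and disjoint, and no pair $(x,y) \in X \times Y$ is already an edge of $\mathcal{R}$. Otherwise the added edges would collide with existing ones, and reducing $(X,Y)$ would not return exactly $\mathcal{R}$. There are also no loops $x=y$.
    \item Reducing $(X,Y)$ in $\mathcal{R}^-$ gives back $\mathcal{R}$. This is immediate from \Cref{def:bic-red} once item 1 holds, since the polarities on $X\times Y$ are coherent by construction.
    \item $\mathcal{R}^-$ is an SPRN realizing $(G,p)$, so that the induction hypothesis applies to it.
\end{enumerate}

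For item 3, use the walk map $\varphi$ from the proof of \Cref{lem-reduction-sprn}: it sends a valid walk in $\mathcal{R}^-$ to a valid walk in $\mathcal{R}$ by routing each edge of $X\times Y$ through $z$. Its inverse contracts each occurrence of $x\to z\to y$ to the single edge $(x,y)$. So $\varphi$ is an endpoint- and polarity-preserving bijection, provided every $x\to z\to y$ step in $\mathcal{R}$ corresponds to a unique new edge, which is exactly item 1. Realization and strictness condition 1 then transfer directly. Condition 2 holds because each vertex of $A\setminus\{z\}$ lies on some walk of $\mathcal{R}$, and that walk contracts to a walk of $\mathcal{R}^-$ through the same vertex.

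Item 1 is the main obstacle, and this is where strictness is used.
\begin{itemize}
    \item \emph{Nonemptiness.} Since $z$ lies on a valid walk and $z\in A$, it has at least one in-edge and one out-edge.
    \item \emph{Disjointness.} If some $w$ lies in $X\cap Y$, there is a cycle $z\to w\to z$. Since $z$ lies on a valid walk from $u$ to $v$, we can insert this cycle into that walk. If $w\in A$, this yields infinitely many valid walks from $u$ to $v$, contradicting strictness when $u\neq v$. If $w\in B$, the detour would pass through a base vertex in the interior, which is not allowed; instead, split the walk at $w$ to get valid walks $u\to\cdots\to z\to w$ and $w\to z\to\cdots\to v$. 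Together with the original walk, these give two walks between $u$ and $w$ (or between $w$ and $v$) when the endpoints are distinct.
    \item \emph{No existing edge.} If $(x,y)\in E$ already with $x\in X$, $y\in Y$, then a walk through $z$ can be shortcut along $(x,y)$. This gives two distinct valid walks with the same endpoints, contradicting strictness.
\end{itemize}
The delicate cases are the degenerate ones, where $u=v$ or where the forbidden configuration involves a base vertex. Strictness only controls unordered pairs of distinct base vertices, and walks from a vertex to itself are governed only by realization: $G$ has no loops, so no valid walk returns to its start at all. I would handle these cases by using the fact that $\mathcal{R}$ realizes the loopless $G$ to rule out closed valid walks, choosing $u\ne v$ whenever the argument requires it. If a case still resists, I would choose $z$ cleverly instead, for example a source-most auxiliary vertex in an ordering, since acyclicity of the auxiliary part then follows from the same walk-counting argument.
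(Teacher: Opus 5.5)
Your proof follows the paper's route: induction on $|A|$, un-reducing an arbitrary auxiliary vertex by replacing it with the biclique on its in- and out-neighbourhoods, and transferring realization and strictness through the endpoint- and polarity-preserving walk bijection. Your item~1 (nonemptiness, disjointness, and no pre-existing edges in $X\times Y$) is a check the paper leaves implicit, and it is exactly what makes that bijection and the step $\mathcal{R}^-\hookrightarrow\mathcal{R}$ valid. One slip needs fixing: in the disjointness case $w\in B$, splitting the walk does not give two walks between the same pair. For example, $u\to z\to v$ together with $w\to z\to w$ satisfies strictness condition~1. Instead, $(w,z,w)$ is itself a closed valid walk, and realization of the loopless $G$ rules it out, as your final paragraph already anticipates.
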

\begin{proof}
Write $\mathcal{R} = (B,A,E,p')$. The proof is by induction on $k := |A|$. For the base case, $k = 0$, observe that $\mathcal{R}_{(G,p)} = (V(G),\emptyset,E(G),p)$ is the only SPRN realizing $(G,p)$ with no auxiliary vertices. For the inductive case, suppose $\mathcal{R}$ is an SPRN realizing $(G,p)$ with $|A| = k+1$. Then,
let $a \in A$ be arbitrary, and define
\begin{align*}
    X &:= \{ x \in B \sqcup A \mid (x, a) \in E\} \\
    Y &:= \{ y \in B \sqcup A \mid (a, y) \in E\}.
\end{align*}
Let us denote by $E_a$  the set of edges of $\mathcal{R}$ incident to $a$, both ingoing and outgoing, and define the PRN $\mathcal{R}' = (B,A',E',p'')$ by:
\begin{align*}
    A' &:= A \setminus \{a\} \\
    E' &:= (E \setminus E_a) \cup \{ (x, y) \mid x \in X, y \in Y\} \\
    p''(u,v) &:= \begin{cases}
        p'(a,v) & \text{if } u \in X \ \text{and} \ v \in Y \cap B \\
        p'(u,v) & \text{otherwise}.
    \end{cases}
\end{align*}
We claim that $\mathcal{R}'$ is also an SPRN realizing $(G,p)$. We have $\mathcal{R}' \hookrightarrow \mathcal{R}$, since $\mathcal{R}$ is precisely the result of reducing the coherent biclique $(X,Y)$ in $\mathcal{R}'$. As in the proof of \Cref{lem-reduction-sprn}, there is a bijection $\varphi$ from the valid walks of $\mathcal{R}'$ to the valid walks of $\mathcal{R}$ that preserves the endpoints and polarities of the paths. It follows that $\mathcal{R}'$ realizes $(G,p)$ and that for every $\{u,v\} \in \binom{B}{2}$, $\mathcal{R}'$ contains at most one valid walk from $u$ to $v$ or from $v$ to $u$. Also, given any $z \in A' \subset A$, there is a valid walk $\pi$ in $\mathcal{R}$ containing $z$. Then, $\varphi^{-1}(\pi)$ is a valid walk in $\mathcal{R}'$ containing $z$. Therefore, $\mathcal{R}'$ is an SPRN.

We have $|A'| = k$, so by the inductive hypothesis, $\mathcal{R}_{(G,p)} \hookrightarrow^\ast \mathcal{R}'$. But also, $\mathcal{R}' \hookrightarrow \mathcal{R}$, so $\mathcal{R}_{(G,p)} \hookrightarrow^\ast \mathcal{R}$, as desired.
\end{proof}

\begin{proof}[Proof of \Cref{thm:sequential-rectifier}]
    This is immediate from \Cref{lem-sprn-reduction,lem-reduction-sprn}.
\end{proof}
\noindent
Next, we prove the forward direction of \Cref{thm:characterization}.
\begin{lemma} \label{lem-characterization-forward}
    Let $\varphi$ and $\varphi'$ be 2-CNF formulas, where $\varphi$ is simple. If $\varphi \bvachain \varphi'$, then there is an SPRN $\mathcal{R}$ realizing $G_{\varphi}$ with $F_\mathcal{R} \cong_W \varphi'$, where $W = \textsf{Var}(\varphi') \setminus \textsf{Var}(\varphi)$.
\end{lemma}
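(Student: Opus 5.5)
The plan is to argue by induction on the length of the BVA chain $\varphi = \varphi_0 \bvastep \varphi_1 \bvastep \dots \bvastep \varphi_t = \varphi'$. We will maintain the following invariant. There is a polarized diagram $(G,p)$ of $G_\varphi$, together with a PRN $\mathcal{R}_s$ satisfying $\mathcal{R}_{(G,p)} \hookrightarrow^\ast \mathcal{R}_s$, such that $F_{\mathcal{R}_s} \cong_{W_s} \varphi_s$, where $W_s$ is the set of auxiliary variables introduced so far.

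Once the invariant holds for $s=t$, \Cref{lem-reduction-sprn} shows that $\mathcal{R}_t$ is an SPRN realizing $(G,p)$, hence realizing $G_\varphi$, and this is exactly the claim. The orientation of undirected edges cannot be fixed in advance, so we fix it lazily. Every oriented edge of the polarized diagram is determined by the first BVA step that consumes the corresponding clause, and the remaining edges are oriented arbitrarily at the end. Alternatively, we can run the induction over all orientations at once, with \emph{some} $(G,p)$ existing. For the base case $s=0$ we take $\mathcal{R}_{(G,p)}$ and use \Cref{remark:varphi-eq-F}.

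For the inductive step, suppose $\varphi_s \bvastep \varphi_{s+1}$ via $\mathcal{C} \bowtie \mathcal{D}$ with fresh variable $y$. Both $\mathcal{C}$ and $\mathcal{D}$ consist of unit literals. Through the renaming $\cong_{W_s}$, each clause of $\varphi_s$ is $C_e$ for an edge $e$ of $\mathcal{R}_s$. Since auxiliary variables are never negated inside $F_{\mathcal{R}}$ except via the renaming, the key structural fact is that in $F_{\mathcal{R}_s}$ every clause has the form $(\overline{u}\lor v)$ or $(\overline{u}\lor\overline{v})$ with $v\in B$ in the latter case. Up to the renaming, each literal therefore occurs with a determined ``tail'' or ``head'' role.

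We then need to show that all literals of $\mathcal{C}$ play the tail role and all literals of $\mathcal{D}$ play the head role, or the reverse. In the reverse case we swap $\mathcal{C}$ and $\mathcal{D}$ and negate $y$, which is absorbed into $\cong_W$. Granting this, the sets $X$ of tail vertices and $Y$ of head vertices form a coherent biclique of $\mathcal{R}_s$. Coherence holds because each head base vertex $v$ appears as the same literal ($v$ or $\overline{v}$) across all clauses of $\mathcal{C}\bowtie\mathcal{D}$, so $p(x_1,v)=p(x_2,v)$. The biclique reduction $\mathcal{R}_s \hookrightarrow \mathcal{R}_{s+1}$ then produces exactly $\varphi_{s+1}$ up to renaming $y$.

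The main obstacle is this role-consistency claim, especially when undirected edges $(\overline{x_i}\lor\overline{x_j})$ between base variables are involved, since either endpoint could serve as tail. When a literal $\overline{x}$ with $x\in B$ lies in $\mathcal{C}$ and pairs with $\overline{x'}\in\mathcal{D}$, we orient that (previously unconsumed) edge as $(x,x')$. The tricky case is a mixed one, where $\mathcal{C}$ contains a literal that appears as a tail in some clauses and as a head in others. Here the plan is to exploit simplicity. Condition (a) gives acyclicity of the implication structure, and since $\mathcal{R}_s$ realizes $(G,p)$, this should rule out a literal serving both roles across a full biclique. If it did, strictness or acyclicity would fail: some base pair would gain two walks, or an implication cycle would appear.

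In particular, positive literals of base variables can only be heads, and a negated auxiliary variable (after renaming) can only be a tail. So the only genuinely ambiguous literals are negated base variables. For those, all partners lie in the other set of the biclique, and a short case analysis on whether those partners are heads should settle the orientation consistently.
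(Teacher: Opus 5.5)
\paragraph{The role-consistency claim is false, and so is the statement.} The step you flag as the main obstacle is not just unproven. The claim is false, and neither strictness nor acyclicity rules out the mixed case.

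Take base variables $u_1,u_2,x,y,v$ and the simple formula $\varphi=\{(\overline{u_1}\lor\overline{x}),(\overline{u_2}\lor\overline{x}),(\overline{u_1}\lor y),(\overline{u_2}\lor y),(\overline{x}\lor v)\}$.

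First step: take $\mathcal{C}=\{\overline{u_1},\overline{u_2}\}$, $\mathcal{D}=\{\overline{x},y\}$ and fresh $w$. This gives $\varphi_1=F_{\mathcal{R}_1}$, where $\mathcal{R}_1$ is the SPRN with edges $(u_1,w),(u_2,w)$, the edge $(w,x)$ of polarity $-$, and the edges $(w,y),(x,v)$ of polarity $+$. This witness is forced, since flipping $w$ would create the all-positive clause $(w\lor y)$.

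Second step: take $\mathcal{C}=\{\overline{x}\}$, $\mathcal{D}=\{\overline{w},v\}$ and fresh $z$. In $F_{\mathcal{R}_1}$ the literal $\overline{x}$ is a head in $(\overline{w}\lor\overline{x})$ and a tail in $(\overline{x}\lor v)$. This is exactly your mixed case, and it sits inside a perfectly good SPRN. The result is
$\varphi_2=\{(\overline{u_1}\lor w),(\overline{u_2}\lor w),(\overline{w}\lor y),(\overline{x}\lor z),(\overline{z}\lor\overline{w}),(\overline{z}\lor v)\}$.

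This $\varphi_2$ is not $\cong_W F_{\mathcal{R}}$ for any PRN $\mathcal{R}$. Every clause $C_e$ contains its negated tail, and a clause with two negative literals needs a base head. Check the four renamings of $\{w,z\}$:
\begin{itemize}
\item without flips, $\varphi_2$ contains $(\overline{z}\lor\overline{w})$, two negative auxiliary literals;
\item any renaming that flips $z$ produces $(z\lor v)$;
\item any renaming that flips $w$ produces $(w\lor y)$.
\end{itemize}
So the statement itself fails for this two-step chain, and no case analysis can complete your inductive step.

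\paragraph{Comparison with the paper's proof, and a second defect.} The paper takes a different route. It renames the auxiliary variables once, via a satisfying extension of the all-false base assignment, so that the final formula is Horn, and then inducts. It has the same hole where it asserts that $(X,Y)$ is a coherent biclique. In the example above the extension is forced to be $w=z=\bot$, so $\varphi^\ast=\varphi_2$ is Horn. Yet $(\{x\},\{w,v\})$ is not a coherent biclique of $\mathcal{R}_1$, because its edge between $x$ and $w$ is $(w,x)$.

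Your forward induction also has an independent defect, which shows up even when the conclusion survives. Replace $(\overline{u_i}\lor y)$ by $(\overline{u_i}\lor\overline{y})$ in the example. Then $\varphi_2$ does have a witness, but the only one flips $w$ and orients $\{u_i,x\}$ as $(x,u_i)$. That is the opposite of the orientation your lazy rule fixes at the first step.

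So the lemma needs an extra hypothesis (or a weaker conclusion). Any correct argument must also be able to revise earlier renamings and orientations rather than commit to them step by step.
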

\begin{proof}
    First, we show that there is some Horn 2-CNF formula $\varphi^*$ (i.e., each clause contains a negative literal) such that $\varphi^* \cong_W \varphi'$. Let $\tau \colon \textsf{Var}(\varphi) \to \{\bot,\top\}$ be the assignment $x \mapsto \bot$ for all $x \in \textsf{Var}(\varphi)$. Since $\varphi$ is simple, and hence Horn, $\tau$ is a satisfying assignment, so there is an extension $\tau' \colon \textsf{Var}(\varphi') \to \{\bot,\top\}$ satisfying $\varphi'$. Let $\varphi^*$ be obtained from $\varphi$ by replacing each variable $x \in W$ for which $\tau'(x) = \top$ by its negation throughout the formula. Then, the assignment $x \mapsto \bot$ for each $x \in \textsf{Var}(\varphi^*)$ is a satisfying assignment of $\varphi^*$, so every clause of $\varphi^*$ has at least one negative literal, and thus $\varphi^*$ is Horn.

    We define a polarized diagram $(G,p)$ of $G_{\varphi}$ as follows. We must specify, for each $\{x_i,x_j\} \in G_{\varphi}$ (which corresponds to a clause $(\overline{x_i} \lor \overline{x_j}) \in \varphi$) how to orient this edge in $G$. If $(\overline{x_i} \lor \overline{x_j}) \in \varphi^*$, then orient $\{x_i,x_j\}$ arbitrarily in $G$. Otherwise, since $\varphi^*$ is an encoding of $\varphi$ by \Cref{lemma:bva-encoding} and is 2-CNF, $\varphi^*$ either contains a set of clauses of the form
    \[
        (\overline{x_i} \lor y_1), (\overline{y_1} \lor y_2), \dots, (\overline{y_{k-1}} \lor y_k), (\overline{y_k} \lor \overline{x_j}) \quad (\text{i.e., } x_i \rightarrow y_1 \rightarrow y_2 \rightarrow \dots \rightarrow \overline{x_j})
    \]
    or of the form
    \[
        (\overline{x_j} \lor y_1), (\overline{y_1} \lor y_2), \dots, (\overline{y_{k-1}} \lor y_k), (\overline{y_k} \lor \overline{x_i}) 
       \quad (\text{i.e., } x_j \rightarrow y_1 \rightarrow y_2 \rightarrow \dots \rightarrow \overline{x_i}),
    \]
    where $y_1,\dots,y_k$ are auxiliary variables. If the former case holds, then orient $\{x_i,x_j\}$ as $(x_i,x_j)$ in $G_{\varphi}$; otherwise, orient $\{x_i,x_j\}$ as $(x_j,x_i)$ in $G_{\varphi}$.

    Now, we prove that there is an SPRN $\mathcal{R}$ realizing $(G,p)$ with $F_\mathcal{R} = \varphi^*$. The proof is by induction on the number of BVA steps in $\varphi \bvachain \varphi^*$. For the base case, of $0$ steps, we have $\varphi^* = \varphi$, and $\mathcal{R}_{(G,p)}$ is an SPRN realizing $(G,p)$ with $F_{\mathcal{R}_{(G,p)}} = \varphi$. For the inductive step, suppose that there is an SPRN $\mathcal{R}'$ realizing $(G,p)$, let $\varphi'' := F_{\mathcal{R}'}$, and suppose that $\varphi'' \bvastep \varphi^*$. Our goal is to show that there is an SPRN $\mathcal{R}$ realizing $(G,p)$ with $F_{\mathcal{R}} = \varphi^*$. Write $\mathcal{R}' = (B,A, E', p')$ and $\mathcal{R} = (B,A \sqcup \{z\},E,p)$. Suppose that the BVA step applied to $\varphi''$ consists of replacing the clauses $\mathcal{C} \bowtie \mathcal{D}$ with $\{(C_i \lor z) \mid C_i \in \mathcal{C}\} \cup \{(\overline{z} \lor D_j) \mid D_j \in \mathcal{D}\}$. Since $\varphi^*$ is a Horn 2-CNF formula, we have that $\mathcal{C}$ is a set of negative literals, and $\mathcal{D}$ is a set of literals. Let $X$ and $Y$ be the sets of variables occurring in $\mathcal{C}$ and $\mathcal{D}$ respectively (so $X = \{x \mid \overline{x} \in \mathcal{C}\}$). Then, $(X,Y)$ is a coherent biclique in $\mathcal{R}'$. Let $\mathcal{R}$ be the result of applying a biclique reduction of $\mathcal{R}'$ with respect to $(X,Y)$, and let $z$ be the newly introduced auxiliary vertex. Then, $\mathcal{R}$ is the result of replacing the edges from $(X,Y)$ with the edges $\{(x, z) \mid x \in X\} \cup \{(z, y) \mid y \in Y\}$, where, for all $y \in Y \cap B$, $p(z,y) = p'(x,y)$ for an arbitrary $x \in X$. Thus, in $F_{\mathcal{R}}$, the clauses $\mathcal{C} \bowtie \mathcal{D}$ are replaced by the clauses $(\overline{x} \lor z)$ for every $x \in X$, $(\overline{z} \lor \overline{y})$ for every $y \in Y \cap B$ with $p(z,y) = -$, and $(\overline{z} \lor y)$ for every other $y \in Y$. That is, 
    \[
    F_{\mathcal{R}} = (F_{\mathcal{R}'} \setminus (\mathcal{C} \bowtie \mathcal{D})) \cup (\{(C_i \lor z) \mid C_i \in \mathcal{C}\} \cup \{(\overline{z} \lor D_j) \mid D_j \in \mathcal{D}\}).
    \]
   Thus, $F_\mathcal{R} = \varphi^*$. Since $\mathcal{R}'$ is an SPRN realizing $(G, p)$, by the backward direction of \Cref{thm:sequential-rectifier} we have $\mathcal{R}_{(G,p)} \hookrightarrow^\ast \mathcal{R}'$. Since $\mathcal{R}' \hookrightarrow \mathcal{R}$, we have $\mathcal{R}_{(G,p)} \hookrightarrow^\ast \mathcal{R}$. Hence, by the forward direction of \Cref{thm:sequential-rectifier}, $\mathcal{R}$ is an SPRN realizing $(G,p)$.
\end{proof}
\noindent
For the backward direction of \Cref{thm:characterization}, we need a few lemmas.
\begin{lemma} \label{lem-commute}
    If $\varphi \bvachain \varphi'$ and $\varphi' \cong_W \varphi''$, where $W = \textsf{Var}(\varphi') \setminus \textsf{Var}(\varphi)$, then $\varphi \bvachain \varphi''$.
\end{lemma}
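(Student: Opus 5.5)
We argue by induction on the length of the BVA chain $\varphi \bvachain \varphi'$, showing that every such chain can be ``renamed'' step by step. Write $\varphi = \varphi_0 \bvastep \varphi_1 \bvastep \dots \bvastep \varphi_k = \varphi'$. Each step introduces a fresh variable $y_i$, so $W = \{y_1,\dots,y_k\}$. Let $S \subseteq W$ be the set of auxiliary variables negated in passing from $\varphi'$ to $\varphi''$, and for a formula $\psi$ write $\psi^S$ for the result of replacing each variable of $S \cap \textsf{Var}(\psi)$ by its negation throughout. Then $\varphi'' = \varphi'^S = \varphi_k^S$ and $\varphi_0^S = \varphi_0$, since $\varphi$ contains no variable of $W$. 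It therefore suffices to show that $\varphi_{i-1}^S \bvastep \varphi_i^S$ for each $i$.

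The key observation is that a BVA step commutes with renaming. Suppose $\varphi_{i-1} \bvastep \varphi_i$ via sets $\mathcal{C},\mathcal{D}$ and fresh variable $y_i$. Since renaming is a bijection on literals that respects complements, $\mathcal{C}^S \bowtie \mathcal{D}^S = (\mathcal{C}\bowtie\mathcal{D})^S \subseteq \varphi_{i-1}^S$. Also, renaming commutes with set difference and union of clause sets, so
\[
\varphi_i^S = (\varphi_{i-1}^S \setminus (\mathcal{C}^S\bowtie\mathcal{D}^S)) \cup \{(C^S \lor y_i^S)\} \cup \{(\overline{y_i}^S \lor D^S)\}.
\]
There are two cases for the new variable. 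If $y_i \notin S$, this is literally a BVA step from $\varphi_{i-1}^S$ with fresh variable $y_i$. If $y_i \in S$, the added clauses are $(C^S \lor \overline{y_i})$ and $(y_i \lor D^S)$. This is a BVA step with the roles of the two sides swapped, namely with sets $\mathcal{D}^S,\mathcal{C}^S$ and fresh variable $y_i$, using that $\mathcal{C}\bowtie\mathcal{D} = \mathcal{D}\bowtie\mathcal{C}$ as sets of clauses. Freshness holds because $y_i \notin \textsf{Var}(\varphi_{i-1}) = \textsf{Var}(\varphi_{i-1}^S)$. Nonemptiness of the clauses is preserved under renaming.

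The main point of care is bookkeeping rather than a real obstacle. We must check three things:
\begin{itemize}
\item the renaming of a clause is again a clause with no complementary pair;
\item distinct clauses stay distinct, so the set difference behaves correctly;
\item the symmetry $\mathcal{C}\bowtie\mathcal{D} = \mathcal{D}\bowtie\mathcal{C}$ is exactly what the definition of a BVA step permits.
\end{itemize}
Chaining the steps yields $\varphi = \varphi_0^S \bvachain \varphi_k^S = \varphi''$.
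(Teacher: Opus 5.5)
Your proof is correct and is essentially the paper's argument: both rest on the observation that a BVA step commutes with negating auxiliary variables, with the roles of $\mathcal{C}$ and $\mathcal{D}$ swapped exactly when the step's own fresh variable is negated. The only cosmetic difference is that you apply one fixed renaming $S$ uniformly along the whole chain, whereas the paper inducts on the chain length, applying the hypothesis to the renaming of the penultimate formula by $\Delta \setminus \{z\}$.
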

\begin{proof}
    We prove the lemma by induction on the number of BVA steps in $\varphi \bvachain \varphi'$. In the base case of 0 steps, we have $\varphi' = \varphi$, and the lemma is trivial. For the inductive step, suppose that $\varphi \bvachain \varphi^\dagger$ and for all $\varphi^\star \cong_{W'} \varphi^\dagger$, where $W' = \textsf{Var}(\varphi^\dagger) \setminus \textsf{Var}(\varphi)$, we have $\varphi \bvachain \varphi^\star$. Let $\varphi'$ be such that  $\varphi^\dagger \bvastep \varphi'$ and $\varphi' \cong_W \varphi''$, where $W = \textsf{Var}(\varphi') \setminus \textsf{Var}(\varphi)$. Our goal is to show that $\varphi \bvachain \varphi''$. Let $\Delta \subseteq W$ be the smallest subset such that $\varphi' \cong_\Delta \varphi''$, and let $z$ be the unique element of $\textsf{Var}(\varphi') \setminus \textsf{Var}(\varphi^\dagger)$. Let $\varphi^\star$ be the formula resulting from $\varphi^\dagger$ by replacing every variable from $\Delta \setminus \{z\}$ by its negation wherever it appears. By the inductive hypothesis, $\varphi \bvachain \varphi^\star$, so it suffices to show that $\varphi^\star \bvastep \varphi''$.

    Let $\mathcal{C}$ and $\mathcal{D}$ be such that 
    \[ \varphi' = \left(\varphi \setminus (\mathcal{C} \bowtie \mathcal{D})\right) \cup
        \{(C_i \lor z) \mid C_i \in \mathcal{C}\} \cup \{(\overline{z} \lor D_j) \mid D_j \in \mathcal{D}\}.
    \]
   Now, let $\mathcal{C}'$ (respectively, $\mathcal{D}'$) be the result of replacing every variable from $\Delta \setminus \{z\}$ in $\mathcal{C}$ (respectively, $\mathcal{D}$) by its negation. Then, $\mathcal{C}' \bowtie \mathcal{D}' \subseteq \varphi^\star$, so we have $\varphi^\star \bvastep \varphi_1$ and $\varphi^\star \bvastep \varphi_2$, where
   \begin{align*}
       \varphi_1 &= \left(\varphi^\star \setminus (\mathcal{C}' \bowtie \mathcal{D}')\right) \cup
        \{(C'_i \lor z) \mid C'_i \in \mathcal{C}'\} \cup \{(\overline{z} \lor D'_j) \mid D'_j \in \mathcal{D'}\} \\
        \varphi_2 &= \left(\varphi^\star \setminus (\mathcal{D}' \bowtie \mathcal{C}')\right) \cup
        \{(D'_i \lor z) \mid D'_i \in \mathcal{D}'\} \cup \{(\overline{z} \lor C'_j) \mid C'_j \in \mathcal{C'}\}.
   \end{align*}
    Notice that $\varphi_1$ results from $\varphi'$ by replacing every variable from $\Delta \setminus \{z\}$ by its negation wherever it appears. If $z \notin \Delta$, then $\varphi_1 = \varphi''$. Otherwise, since $\varphi_2$ is the result of replacing every instance of $z$ by its negation in $\varphi_1$, we have $\varphi_2 = \varphi''$. In either case, $\varphi^\star \bvastep \varphi''$, as desired.
\end{proof}

\begin{lemma} \label{lem-biclique-implies-bva}
    Let $\mathcal{R}_1$ and $\mathcal{R}_2$ be PRNs such that $\mathcal{R}_1 \hookrightarrow \mathcal{R}_2$. Then, $F_{\mathcal{R}_1} \bvastep F_{\mathcal{R}_2}$.
\end{lemma}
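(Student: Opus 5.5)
We verify directly that the clause-level effect of a biclique reduction is exactly a BVA step. Write $\mathcal{R}_1 = (B,A,E,p)$, and let $(X,Y)$ be the coherent biclique with $\mathcal{R}_2 = (B, A\sqcup\{z\}, E', p')$ as in \Cref{def:bic-red}. The natural choice is
\[
\mathcal{C} := \{\overline{x} \mid x \in X\}
\]
and $\mathcal{D} := \{D_y \mid y \in Y\}$, where
\begin{itemize}
  \item[] $D_y := \overline{y}$ if $y \in B$ and $p(x,y) = -$ for some (equivalently, by coherence, every) $x \in X$;
  \item[] $D_y := y$ otherwise.
\end{itemize}
The fresh variable is $z$, which does not occur in $F_{\mathcal{R}_1}$ since $z \notin B \sqcup A$.

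\textbf{Step 1: $\mathcal{C} \bowtie \mathcal{D}$ is exactly the set of clauses of the biclique edges.} For $x\in X$ and $y \in Y$ we have $(x,y) \in E$. By the definition of $F_{\mathcal{R}_1}$, the clause $C_{(x,y)}$ equals $(\overline{x} \lor \overline{y})$ when $y \in B$ and $p(x,y)=-$, and equals $(\overline{x} \lor y)$ otherwise. In both cases this is $(\overline{x} \lor D_y)$. Coherence (condition 2) is exactly what makes $D_y$ independent of the choice of $x$. Hence
\[
\mathcal{C}\bowtie\mathcal{D} = \{C_e \mid e \in E_{X,Y}\} \subseteq F_{\mathcal{R}_1}.
\]
We also check that each such pair yields a genuine clause. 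Since $X \cap Y = \emptyset$, we have $x \ne y$, so the two literals are non-complementary. We also need the map $e \mapsto C_e$ to be injective on $E$, so that removing $\mathcal{C}\bowtie\mathcal{D}$ removes exactly the clauses of $E_{X,Y}$. Here a small obstacle appears: two distinct edges $(u,v)$ and $(v,u)$ could both map to $(\overline{u}\lor\overline{v})$ if both endpoints are base vertices and both edges are negative. Then $C_{(v,u)}$ would also be deleted by the BVA step while the edge $(v,u)$ survives in $E'$. I would handle this by noting that the formula-level removal deletes a clause from a set, and $F_{\mathcal{R}_2}$ still contains $C_{(v,u)}$ because $(v,u) \in E'\setminus E_{X,Y}$. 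So one should compare carefully as sets. The cleanest fix is to observe that in this degenerate case the clause is re-added via the surviving edge. Alternatively, we restrict to PRNs where $e\mapsto C_e$ is injective, which holds in all cases where the lemma is used (the SPRNs arising in \Cref{thm:characterization} via \Cref{lem-reduction-sprn}, by strictness). This bookkeeping is the main point requiring care.

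\textbf{Step 2: the new edges produce exactly the new BVA clauses.} The edges $(x,z)$ with $x\in X$ end in the auxiliary vertex $z$, so they give clauses $(\overline{x} \lor z) = (C_x \lor z)$. The edges $(z,y)$ with $y\in Y$ have $p'(z,y) = p(x,y)$ for $y \in Y\cap B$, so they give $(\overline{z} \lor D_y)$. All other edges of $E'$ are edges of $E\setminus E_{X,Y}$ with unchanged polarity, so they give the same clauses as before. Therefore
\[
F_{\mathcal{R}_2} = \bigl(F_{\mathcal{R}_1}\setminus(\mathcal{C}\bowtie\mathcal{D})\bigr) \cup \{(C\lor z)\mid C\in\mathcal{C}\} \cup \{(\overline{z}\lor D)\mid D\in\mathcal{D}\},
\]
which is precisely $F_{\mathcal{R}_1} \bvastep F_{\mathcal{R}_2}$. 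This uses the injectivity caveat from Step 1, resolved as described there.
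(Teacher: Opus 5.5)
Your argument is the paper's: the same $\mathcal{C}=\{\overline{x}\mid x\in X\}$ and $\mathcal{D}=\{D_y\mid y\in Y\}$ (the paper's $\rho(y)$), with coherence making $D_y$ independent of $x$, followed by an edge-by-edge comparison of $F_{\mathcal{R}_1}$ with $F_{\mathcal{R}_2}$. The collision you flag is genuine, and the paper's proof passes over it silently. However, your first ``fix'' does not work, because in that situation the statement is simply false.

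Take $B=\{u,v\}$, $A=\emptyset$, $E=\{(u,v),(v,u)\}$ with both edges negative, and $(X,Y)=(\{u\},\{v\})$. Then $F_{\mathcal{R}_1}=\{(\overline{u}\lor\overline{v})\}$, while $F_{\mathcal{R}_2}=\{(\overline{u}\lor\overline{v}),(\overline{u}\lor z),(\overline{z}\lor\overline{v})\}$. Any BVA step producing $(\overline{u}\lor z)$ and $(\overline{z}\lor\overline{v})$ must have $\overline{u}\in\mathcal{C}$ and $\overline{v}\in\mathcal{D}$. Such a step therefore deletes $(\overline{u}\lor\overline{v})$, and every clause it adds mentions $z$. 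The clause ``re-added'' by the surviving edge $(v,u)$ is exactly what makes $F_{\mathcal{R}_2}$ differ from the BVA result, so it is a counterexample rather than a repair.

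So discard that resolution and keep only your second one. Add the hypothesis that $e\mapsto C_e$ is injective on $E$. This is equivalent to $\mathcal{R}_1$ having no pair $(u,v),(v,u)$ of negative edges between base vertices, since that is the only way two edges can yield the same clause.

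This hypothesis covers every use of the lemma in \Cref{lem-characterization-backward}. Each PRN along a chain $\mathcal{R}_{(G,p)}\hookrightarrow^\ast\mathcal{R}$ is an SPRN by \Cref{lem-reduction-sprn}, and strictness forbids having both valid walks $(u,v)$ and $(v,u)$. Alternatively, $\mathcal{R}_{(G,p)}$ has no antiparallel edges, and biclique reductions never create base--base edges.

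Under that hypothesis, $\{C_e\mid e\in E\setminus E_{X,Y}\}=F_{\mathcal{R}_1}\setminus(\mathcal{C}\bowtie\mathcal{D})$. Your Step 2 identity then holds, and with it the proof.
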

\begin{proof}
    Write $\mathcal{R}_1 = (B,A,E,p)$ and $\mathcal{R}_2 = (B,A \sqcup \{z\}, E', p')$. Suppose that $\mathcal{R}_1 \hookrightarrow \mathcal{R}_2$. Then, $\mathcal{R}_2$ is a biclique reduction of $\mathcal{R}_1$ with respect to some coherent biclique $(X,Y)$. Thus, $\mathcal{R}_1$ has edges $(x,y)$ for every $x \in X$ and $y \in Y$, and for every $y \in Y \cap B$ and $x_1, x_2 \in X$, we have $p(x_1,y) = p(x_2,y)$.   
    For $y \in Y$, we define the literal 
    \[
    \rho(y) :=  \begin{cases}
        \overline{y} & \text{if } y \in B \text{ and } p(x, y) = -\\
        y & \text{otherwise}.
    \end{cases}
    \]
   Note that for every $x \in X$ and $y \in Y$, $F_{\mathcal{R}_1}$ contains the clause $(\overline{x} \lor \rho(y))$. That is, $F_{\mathcal{R}_1}$ contains the clauses $\mathcal{C} \bowtie \mathcal{D}$, where $\mathcal{C} = \{\overline{x} \mid x \in X\}$ and $\mathcal{D} = \{ \rho(y) \mid y \in Y\}$.
    
    In $\mathcal{R}_2$, the edges from $(X, Y)$ are replaced by the edges $\{(x, z) \mid x \in X\} \cup \{(z, y) \mid y \in Y\}$, where, for all $y \in Y \cap B$, $p'(z,y) = p(x,y)$ for an arbitrary $x \in X$. Thus, in $F_{\mathcal{R}_2}$, the clauses $\mathcal{C} \bowtie \mathcal{D}$ are replaced by the clauses $(\overline{x} \lor z)$ for every $x \in X$, $(\overline{z} \lor \overline{y})$ for every $y \in Y \cap B$ with $p'(z,y) = -$, and $(\overline{z} \lor y)$ for every other $y \in Y$. That is, $F_{\mathcal{R}_2}$ contains the clauses $\{(C_i \lor z) \mid C_i \in \mathcal{C}\} \cup \{(\overline{z} \lor D_j) \mid D_j \in \mathcal{D}\}$. Thus, $F_{\mathcal{R}_1} \bvastep F_{\mathcal{R}_2}$.
\end{proof}
\noindent
Now we can prove the backward direction of \Cref{thm:characterization}.
\begin{lemma} \label{lem-characterization-backward}
    Let $\varphi$ and $\varphi'$ be 2-CNF formulas, where $\varphi$ is simple. If there is an SPRN $\mathcal{R}$ realizing $G_{\varphi}$ with $F_\mathcal{R} \cong_W \varphi'$, where $W = \textsf{Var}(\varphi') \setminus \textsf{Var}(\varphi)$, then we have $\varphi \bvachain \varphi'$.
\end{lemma}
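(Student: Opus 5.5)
Since $\mathcal{R}$ realizes $G_\varphi$, it realizes some polarized diagram $(G,p)$ of $G_\varphi$. By the backward direction of \Cref{thm:sequential-rectifier} (that is, \Cref{lem-sprn-reduction}), we then get a chain
\[
\mathcal{R}_{(G,p)} = \mathcal{R}_0 \hookrightarrow \mathcal{R}_1 \hookrightarrow \dots \hookrightarrow \mathcal{R}_k = \mathcal{R}.
\]
Applying \Cref{lem-biclique-implies-bva} to each link gives $F_{\mathcal{R}_{i}} \bvastep F_{\mathcal{R}_{i+1}}$, so $F_{\mathcal{R}_{(G,p)}} \bvachain F_{\mathcal{R}}$. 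By \Cref{remark:varphi-eq-F}, $F_{\mathcal{R}_{(G,p)}} = \varphi$, and hence $\varphi \bvachain F_\mathcal{R}$.

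It remains to pass from $F_\mathcal{R}$ to $\varphi'$. For this I will use \Cref{lem-commute} with the roles $\varphi \bvachain F_\mathcal{R}$ and $F_\mathcal{R} \cong_{W'} \varphi'$, where $W' = \textsf{Var}(F_\mathcal{R}) \setminus \textsf{Var}(\varphi)$. The lemma then yields $\varphi \bvachain \varphi'$.

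The only subtle point is checking that the variable sets match. Concretely, we need $\textsf{Var}(F_\mathcal{R}) \setminus \textsf{Var}(\varphi) = A = W$. Since $\cong$ only renames polarities, $\textsf{Var}(F_\mathcal{R}) = \textsf{Var}(\varphi')$, so $W' = W$ and the substitution given in the hypothesis is exactly the one \Cref{lem-commute} requires.

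One mild caveat is that $\textsf{Var}(\varphi)$ might contain isolated base vertices that do not appear in any clause. This is harmless: base vertices are never negated under $\cong_W$, so they do not affect the matching. I expect no real obstacle beyond this bookkeeping, because all of the substantive work is already contained in \Cref{thm:sequential-rectifier} and \Cref{lem-commute}.
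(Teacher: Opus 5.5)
Your proposal is correct and follows essentially the same route as the paper's proof. You use \Cref{lem-sprn-reduction} to get the chain of biclique reductions from $\mathcal{R}_{(G,p)}$ to $\mathcal{R}$, then \Cref{lem-biclique-implies-bva} to turn each link into a BVA step, then \Cref{remark:varphi-eq-F} to identify the starting formula with $\varphi$, and finally \Cref{lem-commute} to absorb the polarity renaming of the auxiliary variables. Your check that $\textsf{Var}(F_\mathcal{R}) = \textsf{Var}(\varphi')$, so the two sets $W$ coincide, is bookkeeping the paper leaves implicit; your caveat about isolated base vertices is vacuous, since $\textsf{Var}(\varphi)$ contains only variables that occur in clauses of $\varphi$.
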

\begin{proof}
    Suppose that $\mathcal{R}$ is an SPRN realizing $G_\varphi$ with $F_\mathcal{R} \cong_W \varphi'$. Then, by \Cref{thm:sequential-rectifier}, there is a polarized diagram $(G,p)$ of $G_\varphi$ such that $\mathcal{R}_{(G,p)} \hookrightarrow^\ast \mathcal{R}$. Then, by \Cref{lem-biclique-implies-bva}, $F_{\mathcal{R}_{(G,p)}} \bvachain F_{\mathcal{R}}$, and since $\varphi = F_{\mathcal{R}_{(G,p)}}$ (\Cref{remark:varphi-eq-F}), we have $\varphi \bvachain F_{\mathcal{R}}$. Then, as $\varphi' \cong_W F_\mathcal{R}$, we have $\varphi \bvachain \varphi'$ by \Cref{lem-commute}.
\end{proof}

\begin{proof}[Proof of \Cref{thm:characterization}]
    This is immediate from \Cref{lem-characterization-forward,lem-characterization-backward}.
\end{proof}

\section{Proofs from \Cref{sec-2cnf-general}}

\thmnechiporukdiagram*
\begin{proof}
    Let $G_\varphi$ be the associated diagram of $\varphi$. By \Cref{thm:characterization}, it suffices to build an SPRN $\mathcal{R}$ realizing $G_\varphi$ with at most $\left(\frac{\lg(3)}{4} + o(1)\right) \frac{n^2}{\lg n}$ edges. Since $\varphi$ is simple, the directed part of $G_\varphi$ is acyclic, so we can topologically order the vertices $v_1, \dots, v_n$ so that every directed edge $(v_i,v_j)$ satisfies $i < j$. We create a polarized diagram $(G,p)$ of $G_\varphi$ by orienting each $\{v_i,v_j\} \in E(G_\varphi)$ to be $(v_i,v_j)$, where $i < j$. Our SPRN $\mathcal{R} = (V(G), A, E, p')$ will be a realization of $(G,p)$ and thus of $G_\varphi$.
    \subparagraph*{Construction.}
    Let $\log_3 x = \frac{\lg x}{\lg 3}$. Let $q = \lfloor n / \log_3^2 n\rfloor$, and partition $V(G)$ into blocks $B_1, \ldots, B_{\lceil n/q\rceil}$ of size at most $q$; specifically, let $B_i = \{v_{(i-1)\cdot q + 1}, \dots, v_{i\cdot q}\}$ or, if $i \cdot q > n$, a truncation thereof. Within each block $B_i$, for each pair of vertices $\{u, v\} \in \binom{B_i}{2}$ create in $\mathcal{R}$ an auxiliary vertex  $z_{\{u, v\}}$, and create edges $(u,z_{\{u, v\}})$ and $(v,z_{\{u, v\}})$. Now, let $r = \lfloor \log_3 n  - 3\log_3 \log_3 n\rfloor$, and partition as well $V(G)$ into parts  $P_1, \ldots, P_{\lceil n/r\rceil}$ of size at most $r$; specifically, let $P_i = \{v_{(i-1)\cdot r + 1}, \dots, v_{i\cdot r}\}$ or, if $i \cdot r > n$, a truncation thereof. For each part $P_i$ and edge $(u, v) \in E(G) \cap (P_i \times P_i)$, create in $\mathcal{R}$ a directed edge $(u,v)$ with $p'(u,v) = p(u,v)$.

    Now, for each part $P_i$ and each function $S \colon P_i \to \{-,0,+\}$, create in $\mathcal{R}$ an auxiliary vertex $y_S$, and create a directed edge $(y_S,v)$ with $p'(y_S,v) = -$ for every $v \in P_i$ satisfying $S(v) = -$, and create a directed edge $(y_S,v)$ with $p'(y_S,v) = +$ for every $v \in P_i$ satisfying $S(v) = +$. Given a vertex $v \in V(G)$, let $N^\pm(v) = \{w \in V(G) \mid (v,w) \in E(G) \ \text{and} \ p(v,w) = \pm\}$. Then, for each $S \colon P_i \to \{-,0,+\}$, let 
    \[A(S) := \{ v \in P_j \mid j < i, \ N^-(v) \cap P_i = S^{-1}(-), \ \text{and} \ N^+(v) \cap P_i = S^{-1}(+)\}.
    \]
    Finally, for each $\ell \in [ \lceil n/q \rceil]$, let $A_\ell(S) := A(S) \cap B_\ell$, and let $a_1, \ldots, a_k$ be the elements of $A_\ell(S)$ in an arbitrary order. Then, create in $\mathcal{R}$ directed edges $(z_{\{a_{2j-1}, a_{2j}\}}, y_S)$ for $j \in [\lfloor k/2\rfloor]$, and if $k$ is odd, create the directed edge $(a_k,y_S)$. If $\mathcal{R}$ contains any auxiliary vertices not belonging to a valid walk, then delete them and their incident edges.

    \subparagraph*{Correctness.} Now, we claim that $\mathcal{R}$ is a strict rectifier network for $(G,p)$. Suppose $(u,v) \in E(G)$. If $u,v \in P_i$, then $\mathcal{R}$ contains the edge $(u,v)$ with $p'(u,v) = p(u,v)$. Otherwise, let $i,j \in [\lceil n/r\rceil]$ be such that $u \in P_i$ and $v \in P_j$; then, $i < j$ by the topological ordering. Let $S \colon P_j \to \{-,0,+\}$ be such that $u \in A(S)$; such an $S$ is uniquely defined by the conditions $S^{-1}(-) = N^-(u) \cap P_j$ and $S^{-1}(+) = N^+(u) \cap P_j$. Then, either there is a unique $u'$ such that $(z_{\{u,u'\}}, y_S)$ is a directed edge, in which case we have the valid walk $(u,z_{\{u,u'\}},y_S,v)$, or we have the directed edge $(u,y_S)$, in which case we have the valid walk $(u,y_S,v)$. Also, $p'(y_S,v) = p(u,v)$ by our choice of $S$. Thus, if $(u,v) \in E(G)$, then there is a valid walk $\pi$ from $u$ to $v$ with $p'(\pi) = p(u,v)$ in $\mathcal{R}$.

    Conversely, suppose that $\mathcal{R}$ contains a valid walk $\pi$ from $u$ to $v$ for some $u,v \in B$. Again, let $i,j \in [\lceil n/r\rceil]$ be such that $u \in P_i$ and $v \in P_j$. If $i = j$, then the only possibility for the valid walk is $(u,v)$, which only happens if $(u,v) \in E(G)$ and $p(u,v) = p'(u,v) = p'(\pi)$. Otherwise, $i < j$, and the valid walk is either of the form $(u,z_{\{u,u'\}},y_S,v)$ or of the form $(u,y_S,v)$ for some $u' \in B \setminus \{u\}$ and $S \colon P_j \to \{-,0,+\}$; furthermore, it is impossible to have both valid walks $(u,z_{\{u,u'\}},y_S,v)$ and $(u,y_S,v)$. We must have $u \in A(S)$, so $(u,v) \in E(G)$ with $p(u,v) = S(v) = p'(y_S,v) = p'(\pi)$. Thus, we have $(u,v) \in E(G)$ if and only if there is a valid walk $\pi$ from $u$ to $v$ in $\mathcal{R}$ such that $p'(\pi) = p(u, v)$, in which case $\pi$ is the unique valid walk between $u$ and $v$. Therefore, $\mathcal{R}$ is an SPRN realizing $(G,p)$.

    \subparagraph*{Counting the edges.}
    We now show that $\mathcal{R}$ has at most $\left(\frac{\lg(3)}{4} + o(1)\right) \frac{n^2}{\lg n}$ edges. The number of edges of the form $(u,z_{\{u, v\}})$ is at most
    \[
    \lceil n/q\rceil \cdot \binom{q}{2} \leq \left(\frac{n}{q}+1\right) \cdot \frac{q^2}{2} = \frac{nq}{2} + \frac{q^2}{2} \leq \frac{n^2}{2\log_3^2 n} + \frac{n^2}{2\log_3^4 n} = o(n^2/\lg n).
    \]
    The number of edges of the form $(u,v)$ for some $i \in [\lceil n/r\rceil]$ and $u,v \in P_i$ is at most
    \[
        \lceil n/r\rceil \cdot \binom{r}{2} \le \frac{nr}{2} + \frac{r^2}{2} \le \frac{n \log_3 n + \log_3^2 n}{2} = o(n^2/\lg n).
    \]
    The number of edges of the form $(y_S,v)$ for some $i \in [\lceil n/r\rceil]$, $S \colon P_i \to \{-,0,+\}$, and $v \in S^{-1}(\{-,+\})$ is at most
    \[
        \lceil n/r\rceil \cdot 3^r \cdot r \le \left(\frac{n}{r}+1\right) \cdot \frac{n}{\log_3^3 n} \cdot r = \frac{n^2}{\log_3^3 n} + \frac{nr}{\log_3^3 n} = o(n^2 / \lg n).
    \]
    
    We now count the number of edges of the form $(z_{\{a_{2j-1}, a_{2j}\}}, y_S)$ for some $i \in [\lceil n/r\rceil]$, $S \colon P_i \to \{-,0,+\}$, $a_{2j-1}, a_{2j} \in A_\ell(S)$, and $\ell \in [\lceil n/q \rceil]$. Fix $i \in [\lceil n/r\rceil]$ and some $S \colon P_i \to \{-,0,+\}$. Then, for every $\ell \in [\lceil n/q\rceil]$ there are $\left\lfloor |A_\ell(S)|/2\right\rfloor$ edges of the form $(z_{\{a_{2j-1}, a_{2j}\}}, y_S)$.
    Since $A(S) = \bigcup_{\ell \in [\lceil n/q\rceil]} A_\ell(S)$, for every $S$ we have
    \(\sum_{\ell \in [\lceil n/q\rceil]} \lfloor |A_\ell(S)|/2\rfloor \leq  |A(S)|/2.
    \)
    Now, note that $v$ can only belong to $A(S)$, for $S\colon P_i \to \{-,0,+\}$, if $v \in P_j$ for some $j < i$; furthermore, if $j < i$, then $v \in A(S)$ for a unique $S\colon P_i \to \{-,0,+\}$. Thus, by exchanging summations, we have
    \begin{align*}
     \frac{1}{2}\sum_{\substack{i \in [\lceil n/r\rceil] \\ S \colon P_i \to \{-,0,+\}}}  |A(S)| &= \frac{1}{2} \sum_{v \in V(G)}  \sum_{\substack{i \in [\lceil n/r\rceil] \\ S \colon P_i \to \{-,0,+\}}} \mathbbm{1}_{v \in A(S)}\\
    &=\frac{1}{2} \sum_{i \in [\lceil n/r\rceil]} \sum_{j < i} |P_j| = \frac{1}{2} \binom{\lceil n/r\rceil}{2} \cdot r = \left(\frac{1}{4} + o(1)\right) \frac{n^2}{\log_3 n}.
      \end{align*}
    Therefore, the number of edges of the form $(z_{\{a_{2j-1}, a_{2j}\}}, y_S)$ is at most $\left(\frac{\lg(3)}{4} + o(1)\right) \frac{n^2}{\lg n}$.

    Finally, the number of edges of the form $(a_k,y_S)$ for some $i \in [\lceil n/r\rceil]$, $S \colon P_i \to \{-,0,+\}$, $a_k \in A_\ell$, and $\ell \in [\lceil n/q \rceil]$ is at most
    \[
        \lceil n/r\rceil \cdot 3^r \cdot \lceil n/q \rceil \le \left(\frac{n}{r}+1\right) \cdot \frac{n}{\log_3^3 n} \cdot \left(\frac{n}{q} + 1\right) = o(n^2 / \lg n).
    \]
    Thus, the total number of edges in $\mathcal{R}$ is at most $\left(\frac{\lg(3)}{4} + o(1)\right) \frac{n^2}{\lg n}$.
\end{proof}

\lemcountformulas*
\begin{proof}
    In a 2-CNF formula with $m$ clauses, the number of variables is at most $2m$, so the number of 2-CNF formulas with $m$ clauses is at most $\binom{(4m)^2}{m}$. Hence, the number of 2-CNF formulas at most $m$ clauses is at most
    \[
        \binom{(4m)^2}{m+1} \le \left(\frac{(4m)^2 \cdot e}{m+1}\right)^{m+1} = m^{(1+o(1))m} = 2^{(1+o(1)) m \cdot \lg m}. \qedhere
    \]
\end{proof}

\lemcountsimple*
\begin{proof}
    We construct a simple 2-CNF formula $\varphi$ as follows. Let our variables be $x_1, \dots, x_n$. For every pair $\{x_i, x_j\}$ with $i < j$, do exactly one of the following:
    \begin{enumerate}
        \item add the clause $(\overline{x_i} \lor \overline{x_j})$ to $\varphi$;
        \item add the clause $(\overline{x_i} \lor x_j)$ to $\varphi$; or
        \item add neither clause to $\varphi$.
    \end{enumerate}
    There are $3^{\binom{n}{2}} = 3^{(1/2-o(1)) n^2}$ ways to carry out this construction, each of which yields a distinct simple 2-CNF formula on $n$ variables.
\end{proof}

\lowerbound*
\begin{proof}
    Let $g(m)$ be the number of 2-CNF formulas with at most $m$ clauses, and let $h(n) \coloneqq 2^{\binom{n}{2}}$ be the number of monotone 2-CNF functions on $n$ variables. By \Cref{lem-count-formulas}, $\lg g(m) \le (1+o(1)) m \cdot \lg m$. If every monotone 2-CNF function on $n$ variables can be encoded with at most $m$ clauses, then $g(m) \ge h(n)$, and consequently $\lg g(m) \geq \lg h(n)$. 
    If $m \le (r+o(1)) n^2/\lg n$ for some constant $r$, then $\lg m \le (2+o(1)) \lg n$, and thus
    \[
        (1/2 - o(1))n^2 \le \lg h(n) \le \lg g(m) \le (1+o(1)) m \cdot \lg m \le (2r+o(1)) n^2,
    \]
    from where $r \ge 1/4$; that is, $m \ge (\frac{1}{4} - o(1)) n^2/\lg n$.
\end{proof}

\idealizedbvamonotone*
\begin{proof}
    It suffices to prove the theorem for antitone 2-CNF formulas, which is more convenient because they are simple. So let $\varphi$ be an antitone 2-CNF formula, and let $G_\varphi$ be the associated diagram of $\varphi$. By \Cref{thm:characterization}, it suffices to build an SPRN $\mathcal{R}$ realizing $G_\varphi$ with at most $\left(\frac{1}{4} + o(1)\right) \frac{n^2}{\lg n}$ edges. Let $v_1, \dots, v_n$ be an arbitrary enumeration of $V(G_\varphi)$. We create a polarized diagram $(G,p)$ of $G_\varphi$ by orienting each $\{v_i,v_j\} \in E(G_\varphi)$ to be $(v_i,v_j)$, where $i < j$. Our SPRN $\mathcal{R} = (V(G), A, E, p')$ will be a realization of $(G,p)$ and thus of $G_\varphi$.

    Let $q = \lfloor n / \lg^2 n\rfloor$, and partition $V(G)$ into blocks $B_1, \ldots, B_{\lceil n/q\rceil}$ of size at most $q$; specifically, let $B_i = \{v_{(i-1)\cdot q + 1}, \dots, v_{i\cdot q}\}$ or, if $i \cdot q > n$, a truncation thereof. Within each block $B_i$, for each pair of vertices $\{u, v\} \in \binom{B_i}{2}$ create in $\mathcal{R}$ an auxiliary vertex  $z_{\{u, v\}}$, and create edges $(u,z_{\{u, v\}})$ and $(v,z_{\{u, v\}})$. Now, let $r = \lfloor \lg n  - 3\lg \lg n\rfloor$, and partition as well $V(G)$ into parts  $P_1, \ldots, P_{\lceil n/r\rceil}$ of size at most $r$; specifically, let $P_i = \{v_{(i-1)\cdot r + 1}, \dots, v_{i\cdot r}\}$ or, if $i \cdot r > n$, a truncation thereof. For each part $P_i$ and edge $(u, v) \in E(G) \cap (P_i \times P_i)$, create in $\mathcal{R}$ a directed edge $(u,v)$ with $p'(u,v) = p(u,v) = -$.

    Now, for each part $P_i$ and each subset $S \subseteq P_i$, create in $\mathcal{R}$ an auxiliary vertex $y_S$, and create a directed edge $(y_S,v)$ with $p'(y_S,v) = -$ for each $v \in S$. Then, for each $S \subseteq P_i$, let $A(S) := \{ v \in P_j \mid j < i \ \text{and} \ N_G(v) \cap P_i = S \}$. Finally, for each $\ell \in [ \lceil n/q \rceil]$, let $A_\ell(S) := A(S) \cap B_\ell$, and let $a_1, \ldots, a_k$ be the elements of $A_\ell(S)$ in an arbitrary order. Then, create in $\mathcal{R}$ directed edges $(z_{\{a_{2j-1}, a_{2j}\}}, y_S)$ for $j \in [\lfloor k/2\rfloor]$, and if $k$ is odd, create the directed edge $(a_k,y_S)$. If $\mathcal{R}$ contains any auxiliary vertices not belonging to a valid walk, then delete them and their incident edges.

    The proof that the construction is correct and the computation of the number of edges is very similar to the proof of \Cref{thm:nechiporuk-diagram}.
\end{proof}

\begin{theorem}[Formal version of \Cref{thm-idealized-bva-no-simp}] \label{thm:bva-no-preprocessing}
    For every 2-CNF formula $\varphi$ on $n$ variables, there is a 2-CNF formula $\varphi'$ such that $\varphi \bvachain \varphi'$ and $|\varphi'| \le \left(1 + o(1)\right) \frac{n^2}{\lg n}$.
\end{theorem}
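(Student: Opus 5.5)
Since $\varphi$ need not be simple, \Cref{thm:characterization} does not apply directly. I will instead perform the Nechiporuk-style construction of \Cref{thm:nechiporuk-diagram} directly at the level of clauses, as an explicit sequence of BVA steps, working with literals rather than variables. The reason no simplification is needed is that the construction never relies on orientation or acyclicity. Every clause over variables $x_i,x_j$ with $i<j$ is simply charged to its lower-indexed variable. The price is the constant: a source is now a literal ($2n$ of them instead of $n$), and a target pattern is a subset of $\{v,\overline{v}\}$ for each variable $v$, so the base is $4$ instead of $3$. The $2$ and the $\lg 4$ combine to give the constant $1$.

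\textbf{Construction.} Order the variables $x_1,\dots,x_n$ arbitrarily. Let $r=\lfloor \log_4 n-3\log_4\log_4 n\rfloor$ and partition the variables into consecutive parts $P_1,\dots,P_{\lceil n/r\rceil}$ of size at most $r$. Let $q=\lfloor n/\log_4^2 n\rfloor$ and partition into consecutive blocks $B_\ell$ of size at most $q$. Clauses whose two variables lie in the same part are left untouched; there are $O(nr)=o(n^2/\lg n)$ of them.

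For a literal $\lambda$ on a variable of $P_j$ and a later part $P_i$ ($i>j$), let $S_\lambda^i$ be the set of literals $\mu$ over $P_i$ with $(\lambda\lor\mu)\in\varphi$. For each $i$ and each nonempty set $S$ of literals over $P_i$ (at most $4^r$ choices), let $A(S)$ be the set of literals $\lambda$ on earlier parts with $S^i_\lambda=S$.

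\textbf{Phase 1.} For every $(i,S)$ with $A(S)\neq\emptyset$, apply one BVA step with $\mathcal{C}=A(S)$ and $\mathcal{D}=S$, introducing $y_S$. This is legitimate for two reasons. First, $A(S)\bowtie S\subseteq\varphi$ by definition. Second, distinct pairs $(i,S)$ consume disjoint sets of original clauses, because each cross-part clause $(\lambda\lor\mu)$ is determined by $\lambda$ and the part of $\mu$, and $\lambda$ lies in exactly one $A(S)$ per later part. Hence these steps can be applied sequentially without interfering.

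\textbf{Phase 2.} For each block $B_\ell$ and each $S$, pair up the literals of $A(S)$ on variables of $B_\ell$ arbitrarily. For each unordered pair $\{\lambda,\lambda'\}$ of distinct literals on variables of $B_\ell$, let $T_{\lambda\lambda'}$ be the set of $y_S$ for which $\lambda$ and $\lambda'$ were paired. If $T_{\lambda\lambda'}\neq\emptyset$, apply a BVA step with $\mathcal{C}=\{\lambda,\lambda'\}$ and $\mathcal{D}=\{y_S : y_S\in T_{\lambda\lambda'}\}$, introducing $z_{\lambda\lambda'}$. The needed clauses $(\lambda\lor y_S)$ and $(\lambda'\lor y_S)$ exist after Phase 1. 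They are used by exactly one pair, because each literal is paired at most once per $S$. So again the steps are disjoint and can be applied in any order.

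\textbf{Counting.} The edges $(y_S,\mu)$ contribute at most $\lceil n/r\rceil\cdot 4^r\cdot 2r=o(n^2/\lg n)$ by the choice of $r$. The clauses $(\lambda\lor z)$ contribute at most $2\lceil n/q\rceil\binom{2q}{2}=O(nq)=o(n^2/\lg n)$. The unpaired leftover clauses $(\lambda\lor y_S)$ number at most one per $(S,\ell)$, hence $O((n/r)\,4^r\,(n/q))=o(n^2/\lg n)$.

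The main term is the clauses $(\overline{z}\lor y_S)$. Their number is at most $\tfrac12\sum_{i,S}|A(S)|$. Exchanging the summation exactly as in the proof of \Cref{thm:nechiporuk-diagram}, each literal on part $P_j$ lies in exactly one $A(S)$ for each $i>j$. Therefore
\[
\tfrac12\sum_{i,S}|A(S)| \;\le\; \tfrac12\cdot 2r\binom{\lceil n/r\rceil}{2} \;=\; \left(\tfrac12+o(1)\right)\frac{n^2}{r} \;=\; (1+o(1))\frac{n^2}{\lg n},
\]
using $r=(\tfrac12-o(1))\lg n$.

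\textbf{Expected obstacle.} Without \Cref{thm:characterization}, the only delicate point is verifying that each step is a genuine BVA step. This means checking that the required $\mathcal{C}\bowtie\mathcal{D}$ is present at the moment the step is applied, that no clause is claimed twice, and that the resulting clauses are non-tautological. Pairing two distinct literals $u,\overline{u}$ of the same variable produces the clauses $(u\lor z)$ and $(\overline{u}\lor z)$, which are still valid. By \Cref{lemma:bva-encoding}, the final formula is then an encoding of $\varphi$.
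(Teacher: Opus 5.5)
Your proof is correct, but it takes a different route from the paper's. The underlying Nechiporuk-style construction is the same; what differs is the decomposition and the key lemma. The paper builds nothing new here. It splits $\varphi$ into four sign classes: $(x_i\lor x_j)$; $(\overline{x_i}\lor x_j)$ with $i<j$; $(x_i\lor\overline{x_j})$ with $i<j$; and $(\overline{x_i}\lor\overline{x_j})$. Each class is simple up to negating all variables, so the monotone construction of \Cref{thm:idealized-bva-monotone}, which goes through \Cref{thm:characterization}, applies to each and gives $(\frac14+o(1))\frac{n^2}{\lg n}$ per class. The four derivations compose because they consume disjoint clause sets with distinct fresh variables, and the constant is $4\cdot\frac14$.

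You instead run a single construction over literals. You charge every cross-part clause to the literal on the earlier variable, and your targets are arbitrary subsets of the $2r$ literals of a later part, with $r\sim\log_4 n$. The leading term then appears directly as $\frac12\cdot 2r\binom{\lceil n/r\rceil}{2}\sim\frac{n^2}{2r}\sim\frac{n^2}{\lg n}$, and you certify each BVA step by hand rather than via the SPRN characterization.

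Your route is self-contained. It needs neither \Cref{thm:characterization} nor a split into simple pieces, and it makes clear why charging to the earlier variable removes any need for acyclicity or Horn structure. Its price is the bookkeeping you carried out: the clause sets consumed across Phase 1 steps are disjoint, the clauses $(\lambda\lor y_S)$ are still present and used once in Phase 2, and complementary pairs such as $(x\lor z),(\overline{x}\lor z)$ remain legitimate clauses.

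The paper's route is a few lines given its earlier results. It also makes explicit where the loss relative to the $\frac{\lg 3}{4}$ bound comes from: all four sign patterns of a variable pair must be handled, rather than the three available after simplification. Both arguments give the same leading constant, which is tight by \Cref{prop-bva-no-preprocessing-lb}.
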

\begin{proof}
    Let the variables of $\varphi$ be $x_1,\dots,x_n$. We can write $\varphi = \varphi_1 \land \varphi_2 \land \varphi_3 \land \varphi_4$, where
    \begin{align*}
        \varphi_1 &:= \{(x_i \lor x_j) \mid i < j \ \text{and} \ (x_i \lor x_j) \in \varphi\} \\
        \varphi_2 &:= \{(\overline{x_i} \lor x_j) \mid i < j \ \text{and} \ (\overline{x_i} \lor x_j) \in \varphi\} \\
        \varphi_3 &:= \{(x_i \lor \overline{x_j}) \mid i < j \ \text{and} \ (x_i \lor \overline{x_j}) \in \varphi\} \\
        \varphi_4 &:= \{(\overline{x_i} \lor \overline{x_j}) \mid i < j \ \text{and} \ (\overline{x_i} \lor \overline{x_j}) \in \varphi\}.
    \end{align*}
    It suffices to show that, for each $i \in [4]$, there is a 2-CNF formula $\varphi'_i$ such that $\varphi_i \bvachain \varphi'_i$ and $|\varphi'_i| \le \left(\frac{1}{4} + o(1)\right) \frac{n^2}{\lg n}$. By \Cref{thm:idealized-bva-monotone}, this is true for $i = 1$, and the case $i=4$ is the same after negating the variables. A nearly identical construction to that of \Cref{thm:idealized-bva-monotone} (with $p'(u,v) = +$ for each $v \in V(G)$ rather than $p'(u,v) = -$) works for $i = 2$, and the case $i=3$ is the same after negating the variables.
\end{proof}

\begin{lemma} \label{lem-count-2cnf}
    The number of 2-CNF formulas on $n$ variables is $4^{n^2-n}$.
\end{lemma}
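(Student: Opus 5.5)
The plan is a direct count: I will count the possible clauses and then count sets of them. I read ``2-CNF formulas on $n$ variables'' as formulas whose variables are drawn from a fixed set $\{x_1,\dots,x_n\}$; not every variable has to occur. A formula is a set of clauses, so the number of formulas is $2^{N}$, where $N$ is the number of admissible clauses. It therefore suffices to show $N = 2(n^2-n)$, since then $2^{N} = 4^{n^2-n}$.

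First I count the admissible clauses. By the definitions in \Cref{sec:preliminaries}, a clause is a set of non-complementary literals, and in a 2-CNF formula every clause has size exactly $2$. So a clause is a $2$-element set of literals that contains no complementary pair. The two literals therefore come from two distinct variables, because:
\begin{itemize}
    \item a repeated literal collapses to a set of size $1$;
    \item $\{x_i,\overline{x_i}\}$ is forbidden.
\end{itemize}
Hence each clause is determined by an unordered pair $\{x_i,x_j\}$ with $i \neq j$, together with an independent polarity for each of the two variables. This gives $N = 4\binom{n}{2} = 2n(n-1)$ distinct clauses. No two of these coincide as sets, so the count is exact.

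Second, every subset of these $N$ clauses is a distinct 2-CNF formula over $\{x_1,\dots,x_n\}$, and every such formula arises this way. So the number of formulas is $2^{2n(n-1)} = 4^{n^2-n}$.

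There is no real obstacle here. The only point to state carefully is the convention on ``formulas on $n$ variables''. If instead one required all $n$ variables to actually occur, the count would be smaller, and $4^{n^2-n}$ would become an upper bound rather than an equality. I would therefore fix the reading above explicitly, since it is the one under which the stated identity holds. It is also the reading needed later, where only an upper bound on this count matters for comparing against \Cref{lem-count-formulas}.
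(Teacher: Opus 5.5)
Your count is correct and is essentially the paper's argument. The paper counts the $\binom{2n}{2} = 2n^2-n$ pairs of literals and removes the $n$ tautologous pairs $\{x_i,\overline{x_i}\}$, while you count $4\binom{n}{2}$ directly; both give $2n^2-2n$ admissible clauses and hence $2^{2n^2-2n} = 4^{n^2-n}$ formulas. One correction to your closing remark: \Cref{prop-bva-no-preprocessing-lb} uses this count as a \emph{lower} bound, $\lg h(n) \ge (2-o(1))n^2$, which is fed into $g(m) \ge h(n)$; it is not used as an upper bound.
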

\begin{proof}
    With $n$ variables, there are $2n$ literals. A clause consists of a pair of literals, the number of which is $\binom{2n}{2} = 2n^2-n$. But, we do not allow formulas to contain tautologous clauses (i.e., those of the form $(x_i \lor \overline{x_i})$), the number of which is $n$. Thus, there are $2n^2 - 2n$ clauses to choose from when constructing a 2-CNF formula, and any subset of these constitutes a 2-CNF formula, so the number of 2-CNF formulas on $n$ variables is $2^{2n^2-2n} = 4^{n^2-n}$.
\end{proof}

\begin{proposition} \label{prop-bva-no-preprocessing-lb}
    There is a 2-CNF formula $\varphi$ on $n$ variables such that, for every $\varphi'$ with $\varphi \bvachain \varphi'$, we have $|\varphi'| \ge \left(1 - o(1)\right) \frac{n^2}{\lg n}$.
\end{proposition}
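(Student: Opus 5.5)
We use the same information-theoretic counting argument as in \Cref{prop-bva-with-preprocessing-lb}, but now count \emph{all} 2-CNF formulas on $n$ variables instead of only the simple ones. Let $g(m)$ be the number of 2-CNF formulas with at most $m$ clauses, and let $h(n)$ be the number of 2-CNF formulas on $n$ variables. By \Cref{lem-count-formulas}, $\lg g(m) \le (1+o(1))\, m \lg m$. By \Cref{lem-count-2cnf}, $\lg h(n) = 2n^2 - 2n = (2-o(1))n^2$.

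The key step is injectivity: distinct input formulas must have distinct BVA reencodings. If $\varphi \bvachain \varphi'$, then each BVA step is inverted by Davis--Putnam resolution on its fresh variable $y$. This resolution produces exactly the clauses $\mathcal{C} \bowtie \mathcal{D}$ back and nothing else, because $y$ occurs only in those clauses. So if we eliminate, in reverse order of introduction, the variables of $\varphi'$ that are not among $x_1,\dots,x_n$, we recover $\varphi$ exactly. Two details need checking.

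\begin{itemize}
\item The auxiliary variables must be identifiable from $\varphi'$ alone. We fix the base variable names $x_1,\dots,x_n$, so the auxiliary variables are exactly $\textsf{Var}(\varphi')\setminus\{x_1,\dots,x_n\}$.
\item The order of elimination must not matter. Full resolution of an auxiliary variable that occurs only in the BVA-introduced pattern restores the pre-step formula. Alternatively, note that eliminating all auxiliary variables by resolution yields, for each base pair, exactly the clauses implied through chains, which equal $\varphi$ by induction.
\end{itemize}

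Eliminating an auxiliary variable could in principle create a tautology or a clause that duplicates one already present in the formula. I expect handling these degenerate cases cleanly to be the main (mild) obstacle. This is not an issue, because the BVA step only removed $\mathcal{C}\bowtie\mathcal{D}\subseteq\varphi$ and introduced a fresh variable, so resolution reproduces precisely the removed set. Hence the map $\varphi \mapsto \varphi'$ is injective, by the same reasoning already used in \Cref{prop-bva-with-preprocessing-lb}, and its image lies among formulas with at most $m$ clauses.

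It follows that if every 2-CNF formula on $n$ variables had a BVA reencoding with at most $m$ clauses, then $g(m)\ge h(n)$. Now write $m \le (r+o(1))\,n^2/\lg n$. Then $\lg m \le (2+o(1))\lg n$, and we get
\[
(2-o(1))\,n^2 \;\le\; \lg h(n) \;\le\; \lg g(m) \;\le\; (1+o(1))\, m\lg m \;\le\; (2r+o(1))\,n^2 .
\]
This forces $r\ge 1$, so some formula requires $|\varphi'|\ge(1-o(1))\,n^2/\lg n$. We may assume $m = O(n^2/\lg n)$ throughout, since otherwise the bound holds trivially.
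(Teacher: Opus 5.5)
Your proposal is correct and matches the paper's proof essentially step for step. It uses the same two counting lemmas (\Cref{lem-count-formulas} and \Cref{lem-count-2cnf}), the same injectivity argument that Davis--Putnam elimination of the auxiliary variables recovers $\varphi$, and the same asymptotic comparison forcing $r \ge 1$, with your remarks on identifying the auxiliaries and on elimination order only making explicit what the paper leaves implicit.
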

\begin{proof}
    Let $g(m)$ be the number of 2-CNF formulas with at most $m$ clauses, and let $h(n)$ be the number of 2-CNF formulas on $n$ variables. By \Cref{lem-count-formulas,lem-count-2cnf}, $\lg g(m) \le (1+o(1)) m \cdot \lg m$ and $\lg h(n) \ge (2-o(1)) n^2$.
    
    If $\varphi \bvachain \varphi'$, then performing variable elimination (in the Davis--Putnam sense) on the auxiliary variables in $\varphi'$ yields $\varphi$. In particular, $\varphi'$ has enough information to uniquely reconstruct $\varphi$. Thus, if for every 2-CNF formula $\varphi$ on $n$ variables, there is a 2-CNF formula $\varphi'$ with at most $m$ clauses such that $\varphi \bvachain \varphi'$, then $g(m) \ge h(n)$, and consequently $\lg g(m) \geq \lg h(n)$.
    If $m \le (r+o(1))n^2/\lg n$ for some constant $r$, then $\lg m \le (2+o(1))\lg n$, and thus
    \[
        (2 - o(1)) n^2 \le \lg h(n) \le \lg g(m) \le (1+o(1)) m \cdot \lg m \le (2r+o(1)) n^2,
    \]
    from where $r \ge 1$; that is, $m \ge (1 - o(1)) n^2/\lg n$.
\end{proof}

\section{Proofs from \Cref{sec-amo}}

\lemcontract*
\begin{proof}
    Write $\mathcal{R} = (B,A,E,p)$. Suppose first that $y$ has in-degree 1. Let $(y',y)$ be the unique incoming edge to $y$. Let $\mathcal{R}' = (B,A \setminus \{y\},E',p')$ be defined by
    \begin{align*}
        E' &= (E \setminus \{e \in E \mid y \in e\}) \cup \{(y',z) \mid (y,z) \in E\} \\
        p'(u,v) &= \begin{cases}
            p(y,v) &\text{if } u = y' \ \text{and} \ v \in B \\
            p(u,v) & \text{otherwise}.
        \end{cases}
    \end{align*}
    Then, $\mathcal{R}'$ is an SPRN realizing $G$ and has at most $m-1$ edges. On the other hand, if $y$ has out-degree 1, then let $(y,y')$ be the unique outgoing edge from $y$. Let $\mathcal{R}' = (B,A \setminus \{y\},E',p')$ be defined by
    \begin{align*}
        E' &= (E \setminus \{e \in E \mid y \in e\}) \cup \{(z,y') \mid (z,y) \in E\} \\
        p'(u,v) &= \begin{cases}
            p(y,y') &\text{if } v = y' \in B \ \text{and} \ (u,y) \in E \\
            p(u,v) & \text{otherwise}.
        \end{cases}
    \end{align*}
    Then, $\mathcal{R}'$ is an SPRN realizing $G$ and has at most $m-1$ edges.
\end{proof}

\lemdegone*
\begin{proof}
    Let $y$ be the unique vertex adjacent to $x$ in $\mathcal{R}$. Then, $y$ must be an auxiliary vertex, or else there would be no valid walk between $x$ and $V(K_n) \setminus \{x,y\}$. Without loss of generality, suppose $(x,y)$ is the edge connecting these two vertices in $\mathcal{R}$. We claim that there is no base variable $x' \in V(K_n) \setminus \{x\}$ with a valid walk from $x'$ to $y$. Indeed, there is a valid walk $(x,y,\dots,x')$, so if there were a valid walk $(x',\dots,y)$, we could construct the valid walk $(x',\dots,y,\dots,x')$, contradicting our assumption that $\mathcal{R}$ is an SPRN realizing $K_n$. Since $\mathcal{R}$ is strict, it follows that $y$ has in-degree exactly 1. By \Cref{lem-contract}, there is an SPRN realizing $K_n$ with at most $m-1$ edges.
\end{proof}

\subsection{How BVA Reencodes \textsf{AtMostOne} in Practice} \label{sec-amo-in-practice}

Here, we prove that, using the heuristics for performing BVA steps described in \cite{mantheyAutomatedReencodingBoolean2012}, BVA always reencodes $\textsf{AtMostOne}(x_1,\dots,x_n)$ into a formula with $3n-6$ clauses for $n \ge 3$.

To prove this, we must specify how a BVA step is performed in practice. Following the graph-theoretic lens we have been using, we describe the algorithm as it operates on the diagram corresponding to a simple 2-CNF formula. Given a diagram $G = (V,E)$ and $v \in V$, let $N_G(v) = \{u \mid (v,u) \in E\} \cup \{\overline{u} \mid \{v,u\} \in E\}$. Then, a set of clauses of the form $\mathcal{C} \bowtie \mathcal{D}$ corresponds in $G$ to a complete bipartite subgraph $(L,R)$, where $R \subseteq N(v)$ for all $v \in L$; we call this a \emph{biclique}.

\Cref{alg:bva-step} is the algorithm for choosing a BVA step adapted from the original BVA paper~\cite{mantheyAutomatedReencodingBoolean2012}. The algorithm, as we have presented it, is nondeterministic since there may be multiple choices of $v$ in \ref{choose-v} and multiple iteration orders in \ref{loop-w}. Our result will apply to all possible executions of the algorithm, which means that it also applies to SBVA~\cite{haberlandtEffectiveAuxiliaryVariables2023}, which adds tie-breaking heuristics to the original BVA algorithm but still adheres to the specification of \Cref{alg:bva-step}.

\begin{algorithm}[h!]
  \caption{\textsc{Heuristic-BVA-Step}$(G)$}
  \label{alg:bva-step}
  \DontPrintSemicolon
  \KwIn{A diagram $G = (V,E)$}
  \KwOut{A biclique $(L,R)$ found by one BVA step (or $\bot$ if no improvement)}
  
  choose $v \in V$ such that $|N_G(v)|$ is maximum \nllabel{choose-v}\;
  $L \gets \{v\}$ \tcp*{left side}
  $R \gets N_G(v)$ \tcp*{right side (current common neighborhood)}
  $Q \gets |L|\cdot|R| - |L| - |R|$ \tcp*{current biclique quality}
  
  \While{\textsf{true}}{
    $(v^\star, R^\star, Q^\star) \gets (\bot, R, Q)$ \;
    \ForEach{$w \in V \setminus L$}{ \nllabel{loop-w}
      $R_w \gets R \cap N_G(w)$ \tcp*{new right side if $w$ is added to $L$}
      $Q_w \gets (|L|+1)\cdot|R_w| - (|L|+1) - |R_w|$ \tcp*{quality of $(L \cup \{w\}, R_w)$}
      
      \If{$Q_w > Q^\star$}{
        $(v^\star, R^\star, Q^\star) \gets (w, R_w, Q_w)$ \;
      }
    }
    
    \If{$v^\star = \bot$}{
      \textbf{break} \tcp*{no vertex improves the quality}
    }
    
    \tcp{Accept the best candidate and update $(L,R)$}
    $(L, R, Q) \gets (L \cup \{v^\star\}, R^\star, Q^\star)$ \;
  }
  
  \If{$Q > 0$}{
    \Return $(L,R)$ \tcp*{biclique used for a BVA substitution}
  }
  \Else{
    \Return $\bot$ \tcp*{no profitable biclique found}
  }
\end{algorithm}

\begin{theorem} \label{thm-amo-practice}
    If the BVA algorithm, each step of which is chosen according to \Cref{alg:bva-step}, is applied to the formula $\textsf{AtMostOne}(x_1,\dots,x_n)$ for some $n \ge 3$, then the resulting encoding has $3n-6$ clauses.
\end{theorem}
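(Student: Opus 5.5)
We track the execution of the algorithm on the diagram, which we view as an evolving PRN: base vertices $x_1,\dots,x_n$, auxiliary vertices created so far, and the current edge set, with each BVA step a biclique reduction. We will prove by induction an explicit invariant. After $t$ steps, the network consists of a ``remaining clique'' on a set $U_t$ of $n-2t$ base vertices together with one auxiliary vertex $y_t$ (for $t\ge1$). Every pair in $U_t\cup\{y_t\}$ is joined by an edge (antitone clauses among base vertices, and clauses $(\overline{y_t}\lor\overline{u})$ for $u\in U_t$). Everything else is a fixed ``finished'' part. The finished part consists of $2$ base vertices per step, each with total degree $2$, plus the edges into and out of previous auxiliary vertices. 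In the formula, $U_t\cup\{y_t\}$ forms $\textsf{AtMostOne}$ on $n-2t+1$ literals (after the $\cong_W$ renaming, $y_t$ plays the role of an ordinary variable whose neighborhood is the complement literal set). The key observation is then that the remaining formula restricted to the active variables is again an $\textsf{AtMostOne}$ on $k=n-2t+1$ variables. All other variables have neighborhoods of size at most $2$, or are otherwise disjoint from the active part.

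The main step is to analyze a single call of \Cref{alg:bva-step} on $\textsf{AtMostOne}$ over $k$ active variables. In $N_G(v)$ we get $|N_G(v)|=k-1$ for active $v$, which is maximum once $k-1$ exceeds the degrees in the finished part; this handles the choice in \ref{choose-v}. Starting with $L=\{v\}$ and $R=N_G(v)$, any $w$ yields $R_w=R\setminus\{\overline w\}$ of size $k-2$ and quality $2(k-2)-2-(k-2)=k-4$. This beats $Q=(k-1)-1-(k-1)=-1$ when $k\ge4$. Every $w\in R$ gives the same quality; finished-part vertices give strictly smaller ones since their neighborhoods meet $R$ in few literals. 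After adding $w$, a third vertex $w'$ gives $R$ of size $k-3$ and quality $3(k-3)-3-(k-3)=k-6$. This is less than $k-4$, so the loop stops with $L=\{v,w\}$ and $R=N(v)\cap N(w)$. The step replaces the $2(k-2)$ clauses by $k$ clauses, saving $k-4$. The new auxiliary $y$ is adjacent (via $R$) to the other $k-2$ active variables and receives edges from $v,w$, while $v,w$ keep their mutual clause. Hence $v$ and $w$ are finished with total degree $2$, and the new active set is $(\text{old active}\setminus\{v,w\})\cup\{y\}$ of size $k-1$, which is again $\textsf{AtMostOne}$ on its literals. This confirms the invariant, with the caveat that $y$ is active in negated form, which we absorb via \Cref{lem-commute}-style renaming.

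Counting then finishes the proof. Each step with $k$ active variables saves $k-4$ clauses and decreases $k$ by $1$. The process starts with $k=n$ and $\binom n2$ clauses and stops when $k=4$, since then the best quality is $0\not>0$ and we return $\bot$. For $n=3$ nothing happens and $\binom32=3=3\cdot3-6$. For $n\ge4$ the final count is $\binom n2-\sum_{k=5}^{n}(k-4)=\binom n2-\binom{n-3}{2}=3n-6$. The main obstacle is the claim that in every round an active vertex is chosen and no finished vertex ties or wins in the loop at \ref{loop-w}. We will verify this by bounding the overlap of finished neighborhoods with the active clique: a finished base vertex has degree $2$, and each old auxiliary vertex points only into its two finished base vertices and the next auxiliary vertex. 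It follows that they have $|R\cap N(w)|\le 1$ and cannot achieve quality $k-4\ge1$. In the last round ($k$ small), ties must be checked by hand. This is the delicate case analysis we expect to spend most effort on.
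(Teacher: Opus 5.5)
\textbf{There is a genuine gap: the loop in \Cref{alg:bva-step} does not stop at $|L|=2$.} Your quality computation for $|L|=3$ is wrong. With $|R|=k-3$,
\[
3(k-3)-3-(k-3)=2k-9,
\]
not $k-6$. Since $2k-9>k-4$ whenever $k\ge 6$, the greedy loop keeps adding vertices past $L=\{v,w\}$.

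In general, every candidate $w\notin L$ removes exactly $\overline{w}$ from $R$. So $|R|=k-|L|$ throughout, and
\[
Q=k|L|-|L|^2-k, \qquad Q(\ell+1)-Q(\ell)=k-2\ell-1 .
\]
Hence $L$ grows until $|L|=\lfloor k/2\rfloor$. For $n\ge 6$ the first step on $K_n$ is therefore a balanced split, producing
\[
\textsf{AtMostOne}(x_1,\dots,x_{\lfloor n/2\rfloor},\overline{y})\land\textsf{AtMostOne}(y,x_{\lfloor n/2\rfloor+1},\dots,x_n),
\]
which is two cliques of size about $n/2+1$. Your invariant (one shrinking active clique plus finished base vertices of degree $2$) is already false after the first step. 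The tie-breaking case analysis you flag as the main obstacle concerns a process that never occurs. Separately, the invariant's bookkeeping is inconsistent. It says $|U_t|=n-2t$ with a single active auxiliary vertex. But your step analysis shrinks the active set by only $1$, and if $v,w\in U_t$ it leaves two auxiliary vertices active.

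\textbf{Why your final count still comes out right.} $f(k)=3k-6$ satisfies $f(a+1)+f(b+1)=f(a+b)$ for all $a,b$. So splitting a clique of size $k$ into cliques of sizes $a+1$ and $b+1$, with $a+b=k$, preserves the total. This holds for your ladder ($a=2$, using $f(3)=3$) and for the actual balanced split alike.

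\textbf{How to repair the proof.} Replace the invariant by the recursive structure, which is the paper's argument:
\begin{enumerate}
\item Show that on a clique of size $k\ge5$ the algorithm returns $|L|=\lfloor k/2\rfloor$ and $|R|=\lceil k/2\rceil$. For $k=5$ this is $2$ versus $3$, agreeing with your computation.
\item Argue that every later profitable biclique needs $|L|\ge 2$, and so lies entirely within one of the two sub-\textsf{AtMostOne} formulas (after renaming $y$ in the first). The two parts then evolve independently.
\item Conclude by induction from $f(n)=f(\lfloor n/2\rfloor+1)+f(\lceil n/2\rceil+1)$, with base cases $f(3)=3$ and $f(4)=6$, where the algorithm returns $\bot$.
\end{enumerate}
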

\begin{proof}
    Note that $K_n$ is the diagram corresponding to $\textsf{AtMostOne}(x_1,\dots,x_n)$ for some $n \ge 3$, so we analyze how \Cref{alg:bva-step} operates on $K_n$. For $n \ge 3$, let $f(n)$ be the number of edges in the diagram resulting from repeatedly performing BVA steps according to \Cref{alg:bva-step} starting with $K_n$. The proof is by induction on $n$. The base cases are when $n \in \{3,4\}$, in which case \Cref{alg:bva-step} returns $\bot$. Thus, BVA does nothing in these cases, so $f(n) = \binom{n}{2} = 3n-6$.

    Now suppose that $n \ge 5$. Throughout \Cref{alg:bva-step}, we have $|R| = n - |L|$, so $Q = |L| \cdot (n-|L|) - |L| - (n-|L|) = n \cdot |L| - |L|^2 -n$. \Cref{alg:bva-step} increments $|L|$ so long as $Q$ strictly increases, which happens until $|L| = \lfloor n/2 \rfloor$ and $|R| = n-|L| = \lceil n/2 \rceil$. Without loss of generality, $L = \{x_1,\dots,x_{\lfloor n/2 \rfloor}\}$ and $R = \{\overline{x_{\lfloor n/2 \rfloor + 1}},\dots,\overline{x_n}\}$. Thus, the BVA step replaces the clauses
    \[
        \{(\overline{x_i} \lor \overline{x_j}) \mid i \in [\lfloor n/2 \rfloor], j \in [\lfloor n/2 \rfloor + 1, n]\}
    \]
    by the clauses
    \[
        \{(\overline{x_i} \lor y) \mid i \in [\lfloor n/2 \rfloor]\} \cup \{(\overline{y} \lor \overline{x_j}) \mid j \in [\lfloor n/2 \rfloor + 1, n]\}.
    \]
    The resulting formula $\varphi(x_1,\dots,x_n,y)$ can be written as $\varphi_1 \land \varphi_2$, where
    \begin{align*}
        \varphi_1 &:= \textsf{AtMostOne}(x_1,\dots,x_{\lfloor n/2 \rfloor}, \overline{y}) \\
        \varphi_2 &:= \textsf{AtMostOne}(y,x_{\lfloor n/2 \rfloor + 1}, \dots, x_n).
    \end{align*}
    Now, we claim that the next biclique found by \Cref{alg:bva-step} (if there is one) must be entirely within $\varphi_1$ or $\varphi_2$. Indeed, for \Cref{alg:bva-step} to not return $\bot$, we must have $|L| \ge 2$. If $|L| \ge 2$, then $x_i \in L$ for some $i \in [n]$. If $i \le [\lfloor n/2 \rfloor]$, then the biclique is entirely contained within $\varphi_1$; otherwise, it is entirely contained within $\varphi_2$. Thus, $f(n) = f(\lfloor n/2 \rfloor + 1) + f(\lceil n/2 \rceil + 1)$. The solution to this recurrence is $f(n) = 3n-6$, as desired.
\end{proof}

\section{Experimental Details}
\label{sec:experimemtal-details}

\subparagraph*{Hardware.} We ran all experiments on a MacBook Pro personal computer, with 16 GB of RAM, an Apple M5 chip, and running macOS Tahoe 26.2; all experiments were single-threaded.

\subparagraph*{Code.} Our code is available at \url{https://github.com/bsubercaseaux/BicliqueVA}

\subparagraph*{Instances.} The formulas for independent sets of random graphs are created as follows. First, a graph $G \sim G(n, p)$ is sampled. Then, for each vertex $v$, out of $n$ vertices, is represented by a variable $x_v$, and thus for every edge $\{u, v\}$ we add a clause $(\overline{x_u} \lor \overline{x_v})$. Then, a cardinality constraint $\sum_{v} x_v \geq k$ is added, for which we use the sequential counter as implemented in PySAT~\cite{imms-sat18}. Since the expected independence number of a $G(n, \tfrac{1}{2})$ is well-concentrated around $2\lg n$, we consider both a satisfiable regime, for which we set $k := 1+\lfloor 1.2 \lg n \rfloor$, as well as an unsatisfiable regime for which we set $k := \lfloor 30 \lg n \rfloor$ in order to keep the runtimes manageable (as $k$ approaches the critical threshold, the instances become harder to solve).
Since randomness is involved, all our results, both for clauses, variables, and runtimes, are averaged over $5$ independent trials.

The concrete numbers corresponding to the bar plots in~\Cref{sec:discussion}, which are for the satisfiable regime, are presented in~\Cref{tab:sat-regime}. On the other hand, \Cref{tab:unsatregime} presents data on the unsatisfiable regime.


\begin{table}[htbp]
\caption{Experimental results for the SAT regime, summarizing variables (both base and auxiliary), clauses, and computational time across different graph sizes. The best of each group is highlighted in green, except for the number of variables, in which we do not consider the original encoding.}\label{tab:sat-regime}
\centering
\scalebox{0.90}{
\begin{tabular}{clrrrrr}
\toprule
$n$ & Method & \# Vars & \# Clauses & Reenc. Time (ms) & Solve Time (ms) & Total (ms) \\
\midrule
\multirow{6}{*}{600}
    & Original & 600 & 89671 & --- & 364.60 & 364.60 \\
    & \textsf{BiVA} & \colorbox{green!20!white}{3713} & 44190 & \colorbox{green!20!white}{37.63} & 319.82 & \colorbox{green!20!white}{357.45} \\
    & \textsf{BVA} & 4847 & \colorbox{green!20!white}{32404} & 273.99 & 333.87 & 607.86 \\
    & \textsf{factor} & 4551 & 32583 & 134.29 & 467.65 & 601.94 \\
    & \textsf{BiVA}+\textsf{BVA} & 6463 & 36719 & 99.39 & \colorbox{green!20!white}{308.18} & 407.57 \\
    & \textsf{BiVA}+\textsf{factor} & 6459 & 36700 & 66.52 & 310.99 & 377.51 \\
\midrule
\multirow{6}{*}{1200}
    & Original & 1200 & 359732 & --- & 1641.02 & 1641.02 \\
    & \textsf{BiVA} & \colorbox{green!20!white}{12582} & 156200 & \colorbox{green!20!white}{132.19} & 799.70 & 931.89 \\
    & \textsf{BVA} & 15862 & \colorbox{green!20!white}{117768} & 2433.71 & \colorbox{green!20!white}{439.00} & 2872.71 \\
    & \textsf{factor} & 14731 & 118606 & 964.25 & 876.73 & 1840.98 \\
    & \textsf{BiVA}+\textsf{BVA} & 21870 & 128923 & 495.17 & 484.75 & 979.92 \\
    & \textsf{BiVA}+\textsf{factor} & 22028 & 128976 & 272.41 & 566.34 & \colorbox{green!20!white}{838.75} \\
\midrule
\multirow{6}{*}{1800}
    & Original & 1800 & 809413 & --- & 3684.32 & 3684.32 \\
    & \textsf{BiVA} & \colorbox{green!20!white}{18899} & 322931 & \colorbox{green!20!white}{306.63} & 827.77 & \colorbox{green!20!white}{1134.40} \\
    & \textsf{BVA} & 32074 & \colorbox{green!20!white}{251632} & 11811.92 & 362.26 & 12174.19 \\
    & \textsf{factor} & 29706 & 253436 & 3244.75 & \colorbox{green!20!white}{352.23} & 3596.98 \\
    & \textsf{BiVA}+\textsf{BVA} & 39703 & 260504 & 1743.41 & 727.06 & 2470.47 \\
    & \textsf{BiVA}+\textsf{factor} & 39900 & 260705 & 754.32 & 430.94 & 1185.27 \\
\midrule
\multirow{6}{*}{2400}
    & Original & 2400 & 1439527 & --- & 8617.10 & 8617.10 \\
    & \textsf{BiVA} & \colorbox{green!20!white}{25200} & 548671 & \colorbox{green!20!white}{535.17} & 2599.50 & 3134.67 \\
    & \textsf{BVA} & 53234 & \colorbox{green!20!white}{431841} & 30201.91 & 1796.68 & 31998.60 \\
    & \textsf{factor} & 48955 & 435083 & 7618.12 & \colorbox{green!20!white}{1384.92} & 9003.04 \\
    & \textsf{BiVA}+\textsf{BVA} & 61290 & 432518 & 5002.72 & 1901.01 & 6903.74 \\
    & \textsf{BiVA}+\textsf{factor} & 61463 & 432747 & 1546.71 & 1410.66 & \colorbox{green!20!white}{2957.37} \\
\midrule
\multirow{6}{*}{3000}
    & Original & 3000 & 2249189 & --- & 15724.21 & 15724.21 \\
    & \textsf{BiVA} & \colorbox{green!20!white}{54359} & 830189 & \colorbox{green!20!white}{867.49} & 2312.35 & 3179.84 \\
    & \textsf{BVA} & 78895 & \colorbox{green!20!white}{657162} & 63638.74 & 1526.34 & 65165.08 \\
    & \textsf{factor} & 72535 & 662432 & 15205.77 & 1908.56 & 17114.34 \\
    & \textsf{BiVA}+\textsf{BVA} & 101018 & 668154 & 8617.16 & 849.28 & 9466.45 \\
    & \textsf{BiVA}+\textsf{factor} & 103489 & 669987 & 2398.41 & \colorbox{green!20!white}{570.11} & \colorbox{green!20!white}{2968.52} \\
\bottomrule
\end{tabular}
}
\label{tab:experimental_results}
\end{table}

\begin{table}[htbp]
\caption{Experimental results for the UNSAT regime, summarizing variables (both base and auxiliary), clauses, and computational time across different graph sizes. The best of each group is highlighted in green, except for the number of variables, in which we do not consider the original encoding.}
\centering
\scalebox{0.90}{
\begin{tabular}{clrrrrr}
\toprule
$n$ & Method & \# Vars & \# Clauses & Reenc. Time (ms) & Solve Time (ms) & Total (ms) \\
\midrule
\multirow{6}{*}{500}
    & Original & 500 & 62255 & --- & 129.47 & 129.47 \\
    & \textsf{BiVA} & \colorbox{green!20!white}{3078} & 31958 & \colorbox{green!20!white}{26.33} & \colorbox{green!20!white}{73.86} & \colorbox{green!20!white}{100.20} \\
    & \textsf{BVA} & 3572 & \colorbox{green!20!white}{23103} & 157.26 & 94.63 & 251.88 \\
    & \textsf{factor} & 3379 & 23241 & 81.51 & 95.35 & 176.86 \\
    & \textsf{BiVA}+\textsf{BVA} & 4994 & 26825 & 65.48 & 74.49 & 139.96 \\
    & \textsf{BiVA}+\textsf{factor} & 4980 & 26834 & 43.59 & 74.80 & 118.39 \\
\midrule
\multirow{6}{*}{1000}
    & Original & 1000 & 249714 & --- & 2144.22 & 2144.22 \\
    & \textsf{BiVA} & \colorbox{green!20!white}{10408} & 113808 & \colorbox{green!20!white}{95.34} & 1670.44 & \colorbox{green!20!white}{1765.78} \\
    & \textsf{BVA} & 11576 & \colorbox{green!20!white}{83811} & 1329.48 & 1649.19 & 2978.67 \\
    & \textsf{factor} & 10808 & 84326 & 568.98 & \colorbox{green!20!white}{1592.09} & 2161.07 \\
    & \textsf{BiVA}+\textsf{BVA} & 16854 & 94393 & 305.18 & 1686.26 & 1991.44 \\
    & \textsf{BiVA}+\textsf{factor} & 17019 & 94443 & 184.62 & 1679.89 & 1864.51 \\
\midrule
\multirow{6}{*}{1500}
    & Original & 1500 & 562196 & --- & 9075.38 & 9075.38 \\
    & \textsf{BiVA} & \colorbox{green!20!white}{15747} & 232248 & \colorbox{green!20!white}{213.32} & 7317.21 & \colorbox{green!20!white}{7530.53} \\
    & \textsf{BVA} & 23370 & \colorbox{green!20!white}{178873} & 5751.98 & 7535.59 & 13287.56 \\
    & \textsf{factor} & 21608 & 180030 & 1944.58 & 7573.61 & 9518.19 \\
    & \textsf{BiVA}+\textsf{BVA} & 30273 & 189617 & 1003.89 & 7869.11 & 8873.00 \\
    & \textsf{BiVA}+\textsf{factor} & 30446 & 189677 & 479.09 & \colorbox{green!20!white}{7103.80} & 7582.88 \\
\midrule
\multirow{6}{*}{2000}
    & Original & 2000 & 999401 & --- & 69761.13 & 69761.13 \\
    & \textsf{BiVA} & \colorbox{green!20!white}{20982} & 392056 & \colorbox{green!20!white}{396.73} & 58099.24 & 58495.96 \\
    & \textsf{BVA} & 38593 & \colorbox{green!20!white}{306556} & 17114.92 & 63503.85 & 80618.77 \\
    & \textsf{factor} & 35646 & 308821 & 4477.50 & 59651.54 & 64129.04 \\
    & \textsf{BiVA}+\textsf{BVA} & 46496 & 313607 & 2721.44 & 60768.45 & 63489.89 \\
    & \textsf{BiVA}+\textsf{factor} & 46671 & 313805 & 1013.76 & \colorbox{green!20!white}{57190.54} & \colorbox{green!20!white}{58204.30} \\
\bottomrule
\end{tabular}
}
\label{tab:unsatregime}
\end{table}

\end{document}